\documentclass[11pt]{article}
\usepackage{booktabs}
\usepackage[utf8]{inputenc}

\usepackage{amsmath,amsfonts,amsthm,mathrsfs,xspace,graphicx}
\usepackage[backref,colorlinks,bookmarks=true, citecolor=blue, urlcolor=black]{hyperref}
\usepackage{endnotes}
\usepackage{color}
\usepackage{float}
\usepackage{xcolor}
\definecolor{since}{rgb}{0.5,0.5,0.5}
\definecolor{newred}{HTML}{ED2024}
\definecolor{newgreen}{HTML}{109A48}
\definecolor{newblue}{HTML}{535DAA}
\definecolor{neworange}{HTML}{F79420}
\usepackage{mdframed}
\usepackage{bbm}
\usepackage{suffix} \usepackage{times}
\usepackage{tabularx}
\usepackage{makecell}
\usepackage{amssymb,latexsym}
\usepackage[capitalize]{cleveref}
\usepackage[font=footnotesize,labelfont=bf]{caption}
\usepackage{enumitem}
\usepackage{tikz}
\usepackage{tikz-cd}
\usepackage{multirow}


\usepackage[section]{placeins}
\usepackage[affil-it]{authblk}

\makeatletter
\renewcommand*\env@matrix[1][*\c@MaxMatrixCols c]{%
  \hskip -\arraycolsep
  \let\@ifnextchar\new@ifnextchar
  \array{#1}}
\makeatother

\usepackage[margin=1in]{geometry}


\usepackage{endnotes}
\usepackage{amssymb,latexsym}
\usepackage{enumitem}
\setlist{itemsep=0mm}
\usepackage{tikz}
\usepackage{float}
\usepackage{setspace}
\usepackage{soul}

\usepackage{thmtools}
\usepackage{thm-restate}

\usepackage{mathtools, physics}
\usepackage{complexity}

\newclass{\QPCP}{QPCP}
\newclass{\QCPCP}{QCPCP}
\newclass{\QCMAcomp}{QCMA-complete}
\newclass{\sharpP}{\#P}

\usepackage[linesnumbered,ruled,vlined]{algorithm2e}
\usepackage{verbatim}

\DeclarePairedDelimiter\floor{\lfloor}{\rfloor}

\usepackage{amsthm}
\newtheorem{theorem}{Theorem}
\newtheorem*{theorem*}{Theorem}
\newtheorem{proposition}[theorem]{Proposition}
\newtheorem*{proposition*}{Proposition}
\newtheorem{fact}[theorem]{Fact}
\newtheorem*{fact*}{Fact}
\newtheorem{lemma}[theorem]{Lemma}
\newtheorem*{lemma*}{Lemma}

\newtheorem{corollary}[theorem]{Corollary}
\newtheorem{conjecture}{Conjecture}
\newtheorem*{conjecture*}{Conjecture}

\theoremstyle{definition}
\newtheorem{definition}[theorem]{Definition}
\newtheorem*{definition*}{Definition}

\theoremstyle{remark}

\newtheorem*{remark*}{Remark}
\newtheorem*{example}{Example}

\newcommand{\CC}{\ensuremath{\mathbb{C}}}

\newcommand{\FF}{\ensuremath{\mathbb{F}}}

\newcommand{\mcC}{\ensuremath{\mathcal{C}}}
\newcommand{\mcG}{\ensuremath{\mathcal{G}}}
\newcommand{\mcO}{\ensuremath{\mathcal{O}}}
\newcommand{\mcP}{\ensuremath{\mathcal{P}}}
\newcommand{\mcS}{\ensuremath{\mathcal{S}}}
\newcommand{\mcT}{\ensuremath{\mathcal{T}}}

\DeclareMathOperator\im{im}
\DeclareMathOperator\CSS{CSS}
\DeclareMathOperator\N{N}
\DeclareMathOperator\stab{Stab}

\DeclareMathOperator\evenodd{par}
\DeclareMathOperator\Span{Span}
\newcommand{\ham}{\mathcal{H}}

\newcommand{\destab}{\ensuremath{e^{-i\frac{\pi}{8}Y}}}
\newcommand{\destabn}{\ensuremath{(\destab)\n}}
\newcommand{\destabdagger}{\ensuremath{e^{i\frac{\pi}{8}Y}}}
\newcommand{\destabdaggern}{\ensuremath{(\destabdagger)\n}}

\newcommand{\Hn}{\ensuremath{\ham^{(n)}}}
\newcommand{\dHn}{\ensuremath{\Tilde{\ham}^{(n)}}}
\newcommand{\Hzero}{\ensuremath{\ham_0^{(n)}}}
\newcommand{\dHzero}{\ensuremath{\Tilde{\ham}_0^{(n)}}}
\newcommand{\ketzero}{\ensuremath{\ket{0}^{\otimes n}}}
\newcommand{\brazero}{\ensuremath{\bra{0}^{\otimes n}}}
\newcommand{\n}{\ensuremath{^{\otimes n}}}

\newcommand{\local}{\ensuremath{C}}

\newcommand{\NLXS}{\text{NL$X$S} }

\DeclareMathOperator\wt{wt}

\DeclareMathOperator\had{H}
\DeclareMathOperator\eye{\mathbb{I}}
\DeclareMathOperator\phase{P}
\DeclareMathOperator\T{T}
\DeclareMathOperator\CNOT{CNOT}

\usepackage{accents}

\newcommand\restr[2]{{
  \left.\kern-\nulldelimiterspace 
  #1 
  \right|_{#2} 
  }}

\newcommand{\nnote}[1]{}
\newcommand{\mnote}[1]{}
\newcommand{\jnote}[1]{}
\newcommand{\snote}[1]{}

\bibliographystyle{alphaurl}

\title{\vspace{-2.5em}Local Hamiltonians with no low-energy stabilizer states}

\author[1]{Nolan J. Coble \footnote{\href{mailto:ncoble@terpmail.umd.edu}{ncoble@terpmail.umd.edu}, \href{mailto:mcoudron@umd.edu}{mcoudron@umd.edu}, \href{mailto:nelson1@umd.edu}{nelson1@umd.edu}, \href{mailto:sajjad@umd.edu}{sajjad@umd.edu}}}
\author[1,2]{Matthew Coudron}
\author[1]{Jon Nelson}
\author[1]{Seyed Sajjad Nezhadi}


\affil[1]{Joint Center for Quantum Information and Computer Science (QuICS), University of Maryland}
\affil{Department of Computer Science, University of Maryland}
\affil[2]{National Institute of Standards and Technology}

\date{}


\begin{document}

\maketitle

\begin{abstract}
The recently-defined No Low-energy Sampleable States (NLSS) conjecture of Gharibian and Le Gall \cite{GL22} posits the existence of a family of local Hamiltonians where all states of low-enough constant energy do not have succinct representations allowing perfect sampling access. States that can be prepared using only Clifford gates (i.e. stabilizer states) are an example of sampleable states, so the NLSS conjecture implies the existence of local Hamiltonians whose low-energy space contains no stabilizer states. We describe families that exhibit this requisite property via a simple alteration to local Hamiltonians corresponding to CSS codes. Our method can also be applied to the recent NLTS Hamiltonians of Anshu, Breuckmann, and Nirkhe \cite{ABN22}, resulting in a family of local Hamiltonians whose low-energy space contains neither stabilizer states nor trivial states. We hope that our techniques will eventually be helpful for constructing Hamiltonians which simultaneously satisfy NLSS and NLTS.

\end{abstract}
\vspace{1em}
{
\tableofcontents 
}
\newpage

\section{Introduction}
Local Hamiltonians are ubiquitous in quantum physics and quantum computation. From the physical perspective, Hamiltonians describe the dynamics and energy spectra of closed quantum systems, with ``local'' Hamiltonians corresponding to models where only a small number of particles can directly interact with each other. From the computational perspective, local Hamiltonians naturally generalize well-studied constraint satisfaction problems through the ``local Hamiltonian problem'', which asks about the complexity of approximating the ground-state energy of local Hamiltonians.
\begin{definition*}[LH-$\delta(n)$]
    A $k$-local Hamiltonian, $\ham=\frac{1}{m}\sum_{i=1}^m\ham_i$, is a sum of $m=\poly(n)$ Hermitian matrices, $\ham_i\in \CC^{2^n\times 2^n}$, where each $\ham_i$ acts non-trivially on at most $k=\mcO(1)$ qubits\footnote{$\ham_i = h_i\otimes \eye_{2^{n-k}}$ where $h_i$ is a $2^k\times 2^k$ Hermitian matrix and $\eye_{2^{n-k}}$ is the $2^{n-k}\times 2^{n-k}$ identity matrix} and has bounded spectral norm, $\|\ham_i\|\leq 1$.
    
    Given a local Hamiltonian, $\ham$, and two real numbers $a<b$ with $b-a>\delta(n)$, the \textbf{local Hamiltonian problem with promise gap $\delta(n)$} is to decide if (1) there is a state with energy $\bra{\psi_0}\ham\ket{\psi_0}\leq a$ or (2) all states have energy $\bra{\psi}\ham\ket{\psi}\geq b$, given that one of these cases is true.\footnote{This is equivalent to deciding if $\ham$ has an eigenvalue less than $a$ or if all of the eigenvalues of $\ham$ are larger than $b$, which is the more typical formulation of the problem.} The value $\delta(n)$ is called the promise gap of the problem.
\end{definition*}

LH is a natural quantum analogue of the $\NP$-complete constraint satisfaction problem (CSP):\footnote{Technically LH is a generalization of the decision problem MAX-$k$-CSP.} the local terms serve as \textit{quantum} constraints on an $n$-qubit state, and the energy of a local term corresponds to how well the state satisfies that local constraint. The lowest energy state— or ground-state— of $\ham$ is the state that optimally satisfies all of the local constraints.

It is straightforward to show that CSP is $\NP$-complete for a promise gap $\delta(n) = 1/\poly(n)$, and the celebrated classical PCP Theorem \cite{AS92,ALM+98} shows that [surprisingly] CSP is still $\NP$-complete when $\delta(n) = \Omega(1)$, a constant. Since LH is the quantum generalization of a CSP we can similarly ask whether it is complete for the class QMA, the quantum version of NP.
Kitaev showed that LH is $\QMA$-complete for $\delta(n) = 1/\poly(n)$ when he originally defined the class of $\QMA$ problems \cite{KSV02}. Perhaps the most important open question in quantum complexity theory is whether or not a quantum version of the PCP theorem holds. The ``quantum PCP conjecture'' \cite{AN02,AAL+08} states that LH with a constant promise gap is $\QMA$-hard; the conjecture has thus far eluded proof.

As a possible step towards proving quantum PCP, Freedman and Hastings suggested the No Low-energy Trivial States (NLTS) conjecture which is implied by the quantum PCP conjecture (assuming $\NP\neq\QMA$). A local Hamiltonian has the NLTS property if there is a constant strictly larger than the ground-state energy which lower bounds the energy of any state preparable in constant-depth (“trivial states”). The NLTS conjecture posits the existence of an NLTS Hamiltonian. This seemingly simpler problem remained open for nearly a decade until Anshu and Breuckmann solved the combinatorial version \cite{AB22}, followed shortly after by a complete proof by Anshu, Breuckmann, and Nirkhe \cite{ABN22}. They explicitly constructed an NLTS Hamiltonian using recently developed asymptotically-good quantum LDPC codes \cite{LZ22}.

While the NLTS Theorem makes significant progress, there are still many other properties that a candidate Hamiltonian must satisfy in order to be $\QMA$-hard with a constant promise gap. For instance, Gharibian and Le Gall defined the No Low-energy Sampleable States (NLSS) conjecture \cite{GL22}. A state, $\ket\psi$ is ``sampleable'' if a classical computer can efficiently draw an $x\in\{0,1\}^n$ from the distribution defined by $p(x) = \abs{\bra{x}\ket{\psi}}^2$ and can calculate all of the amplitudes, $\bra{x}\ket{\psi}$.\footnote{The more proper terminology, as in \cite{GL22}, would be that $\ket\psi$ has a succinct representation allowing perfect sampling access. We will not be directly addressing the NLSS conjecture, so we will use the term ``sampleable'' for brevity.} A local Hamiltonian has the NLSS property if there is a constant which lower-bounds the energy of every sampleable state. The NLSS conjecture posits the existence of an NLSS Hamiltonian, and Gharibian and Le Gall showed that unless $\MA=\QMA$ the quantum PCP conjecture implies the NLSS conjecture.

In this paper we examine a simplified version of the NLSS conjecture, where instead of sampleable states we consider stabilizer states. A stabilizer state is the unique state stabilized by a commuting subgroup of the Pauli group with size $2^n$. Equivalently, stabilizer states are those states that can be prepared using only Clifford gates, i.e. Hadamard, Phase, and CNOT gates. We say that a local Hamiltonian has the No Low-energy Stabilizer States (NLCS)\footnote{The ``C'' in NLCS stands for Clifford, since states prepared by Clifford circuits and stabilizer states are equivalent.} property if there is a constant which lower-bounds the energy of any stabilizer state.\footnote{The existence of NLCS Hamiltonians has been suggested before as a direct consequence of the quantum PCP conjecture, for instance in \cite{AN22}. We discuss the relationship of NLCS and more to the quantum PCP conjecture in Section \ref{sec:qPCP-implications}.} The Gottesman-Knill Theorem \cite{Got98} shows that any stabilizer state can be efficiently sampled, so any NLSS Hamiltonian must also be an NLCS Hamiltonian. We show that a generic construction can be used to produce many NLCS Hamiltonians.

To prove the NLCS property for a particular local Hamiltonian one must show an explicit lower bound on the energy of all stabilizer states. Let $\ham=\frac{1}{m}\sum \ham_i$ be a local Hamiltonian and let $\ket\psi$ be an $n$-qubit state. The energy of any particular Hamiltonian term can be expressed as $\bra{\psi}\ham_i\ket{\psi}=\Tr\big[\psi_{A_i} h_i\big]$, where $A_i$ is the set of qubits where $\ham_i$ acts non-trivially, $\psi_{A_i}$ is the reduced state of $\ket\psi$ on $A_i$, and $h_i$ is the non-trivial part of $\ham_i$. Suppose for simplicity that $\abs{A_i}=k$ for all $i$. One particularly strong way to lower-bound the energy of $\ket\psi$ would be to ``locally'' bound each energy term. That is, prove that each $\Tr\big[\psi_{A_i} h_i\big]$ is lower-bounded by a constant. In general this is not an easy task. However, stabilizer states have a rather convenient property: we show in Claim \ref{clm:Clifford-decomp} that if $\ket{\psi}$ is a stabilizer state, then every $\psi_{A_i}$ is a convex combination of stabilizer states on $k$ qubits. Thus, to lower-bound $\Tr\big[\psi_{A_i} h_i\big]$ for every $n$-qubit stabilizer state, $\ket\psi$, it is sufficient to lower-bound the quantity $\bra{\zeta}h_i\ket{\zeta}$ for every \textit{$k$-qubit} stabilizer state $\ket\zeta$.

This observation leads to a rather simple NLCS Hamiltonian. First, consider the Hamiltonian ${\ham_0} = \frac{1}{n}\sum\ketbra{1}_i$ where $\ketbra{1}_i$ is the projector to $\ket{1}$ on the $i$-th qubit and identity elsewhere. All of the local terms are the single-qubit projector $\ketbra{1}$. Clearly, we cannot lower-bound the energy of stabilizer states since $\ket{0}$ has energy 0. We can fix this, however, by instead considering a ``conjugated'' version of $\ham_0$: 
\begin{equation*}
    \Tilde{\ham}_0 \equiv \frac{1}{n} \sum_{i=1}^n  \Big(\destabdagger\ketbra{1}\destab\Big)\mid_i,
\end{equation*}
which can alternatively be expressed as $\Tilde{\ham}_0 = \destabdaggern \ham_0 \destabn$. Each local term is the single-qubit projector $\destabdagger\ketbra{1}\destab$, and it is straightforward to calculate that every single-qubit stabilizer state has high energy under this local term. We give a self-contained proof that $\Tilde{\ham}_0$ is NLCS in Appendix \ref{app:NLCS}.

The quantum PCP conjecture not only implies the existence of NLTS/NLCS/NLSS Hamiltonians, but also the existence of \textit{simultaneous} NLTS/NLCS/NLSS Hamiltonians. The process of conjugating a local Hamiltonian by a low-depth circuit conveniently preserves the NLTS property. That is, if $\ham$ is NLTS and $C$ is a constant-depth circuit, then $C^\dagger \ham C$ is also NLTS (see Lemma \ref{lem:NLTS-phase}). 

We note that since $\ketbra{1}=\frac{1}{2}(\eye-Z)$ the Hamiltonian $\ham_0$ is an example of a \textit{CSS Hamiltonian}, i.e. the local Hamiltonian terms are of the form $\frac{1}{2}(\eye-P_i)$ where the $P_i$'s are commuting $X$ and $Z$ type Pauli operators.
As the Hamiltonian $\Tilde{\ham}_0$ is simply $\ham_0$ conjugated by a depth-1 circuit $(\destab)\n$ it may be natural to ask whether the same procedure can be done to the NLTS Hamiltonians from \cite{ABN22} as they are also CSS Hamiltonians. The main result of our paper is the following:

\begin{theorem}[Informal version of Theorem \ref{thm:NLTS-NLCS}]
    Let $\ham_{NLTS}$ be the NLTS local Hamiltonian from \cite{ABN22}. The local Hamiltonian given by $\Tilde{\ham}_{NLTS}\equiv (\destabdagger)\n\ham_{NLTS}(\destab)\n$ satisfies both NLTS and NLCS.
\end{theorem}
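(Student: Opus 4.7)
The plan is to establish the NLTS and NLCS halves of the theorem separately. The NLTS half is immediate from Lemma~\ref{lem:NLTS-phase}: since $(\destab)^{\otimes n}$ is a depth-$1$ circuit and conjugation by a constant-depth circuit preserves NLTS, the property transfers from $\ham_{NLTS}$ to $\tilde{\ham}_{NLTS}$.

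For the NLCS half, I plan to use Claim~\ref{clm:Clifford-decomp} to reduce the global energy bound to a local one. Fix an arbitrary $n$-qubit stabilizer state $\ket\psi$; its reduced density matrix on each local region $A_i$ is a convex combination of $k$-qubit stabilizer states $\ket\zeta$, so it suffices to lower-bound $\bra\zeta \tilde h_i \ket\zeta$ uniformly over all such $\ket\zeta$ and all local terms $\tilde h_i = \frac{1}{2}(\eye - \tilde P_i)$. Since $\ham_{NLTS}$ is CSS, each $P_i$ is purely $X$-type or $Z$-type, and using the single-qubit identities $U^\dagger X U = (X+Z)/\sqrt{2}$ and $U^\dagger Z U = (Z-X)/\sqrt{2}$ for $U = \destab$, the conjugated operator $\tilde P_i$ expands as a signed sum of $2^{|P_i|}$ Pauli operators whose support exactly matches that of $P_i$, each with coefficient $\pm 2^{-|P_i|/2}$. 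Consequently, $\bra\zeta \tilde P_i \ket\zeta$ is a normalized signed count of how many constituent Paulis lie, up to sign, in the stabilizer group of $\ket\zeta$, and the goal is to show this expectation is uniformly bounded strictly below $1$ by an absolute constant.

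The key structural observation I will exploit is that two full-support ``CSS-type'' Paulis on the support of $P_i$ commute if and only if their $X$-weights have matching parity, so at most $2^{|P_i|-1}$ of the $2^{|P_i|}$ candidate constituent Paulis can simultaneously appear in any stabilizer group. These $2^{|P_i|-1}$ candidates are, however, not independent as group elements: multiplicative consistency forces their $\pm 1$ eigenvalues to exhibit partial cancellation in the signed sum defining $\bra\zeta\tilde P_i\ket\zeta$. The hardest part will be turning this sign argument into a quantitative gap — namely, proving that for every $k$-qubit stabilizer state $\ket\zeta$ and every check $P_i$ of the relevant (constant) weight appearing in the ABN22 Hamiltonian, the signed count falls strictly short of the saturation value $2^{|P_i|/2}$ by an absolute constant depending only on $k$. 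Combined with the per-term reduction via Claim~\ref{clm:Clifford-decomp}, this will yield a uniform constant lower bound $c > 0$ on $\bra\psi \tilde{\ham}_{NLTS} \ket\psi$ for every stabilizer state $\ket\psi$, completing the NLCS half and the theorem.
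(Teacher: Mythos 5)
Your NLTS half and your reduction of the NLCS bound to a per-term bound over $k$-qubit stabilizer states (via Lemma~\ref{lem:NLTS-phase} and Claim~\ref{clm:Clifford-decomp}) match the paper exactly. The NLCS half, however, has a genuine gap, and in fact the uniform local bound you are aiming for is \emph{false} as stated. For a check of \emph{even} weight $k$, the rotated projector $\frac{1}{2}(\eye-\had^{\otimes k})$ annihilates the stabilizer state $\ket{\Phi^+}^{\otimes k/2}$, since $\ket{\Phi^+}$ is a $+1$ eigenstate of $\had^{\otimes 2}$; so there is no absolute constant by which the ``signed count'' falls short of saturation for even-weight checks. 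The checks of the \cite{ABN22} Hamiltonian are not automatically of odd weight --- the paper must prove a separate, nontrivial statement (Claim~\ref{clm:NLTS-odd-weight}, occupying all of Section~\ref{app:odd-weight}) that the quantum Tanner code construction can be instantiated with local parity-check matrices whose rows all have odd weight, by modifying $h_0,h_1,g_0,g_1$ via row additions so every row of the tensor-product local checks, hence every global generator, has odd weight. Without this ingredient your argument cannot close.

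Separately, the quantitative step you flag as ``the hardest part'' is indeed the crux, and your proposed route (tracking cancellations among the $2^{k}$ constituent Paulis of $\had^{\otimes k}=2^{-k/2}(X+Z)^{\otimes k}$ inside $\stab(\ket\zeta)$) is harder than necessary: the naive count gives only $2^{k/2-1}\cdot 2^{-k/2+1}\cdot 2^{k/2-1}$-type bounds that exceed $1$ for $k\ge 3$, and you have not exhibited the mechanism forcing cancellation. The paper instead observes that $\had^{\otimes k}\ket\zeta$ is itself a stabilizer state and invokes the discreteness of stabilizer-state overlaps (Fact~\ref{fact:stab-geometry}: the overlap is either $1$ or at most $1/\sqrt{2}$ in absolute value), so it suffices to rule out overlap exactly $1$. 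That is done by a parity argument closely related to your commutation observation: for odd $k$, if $S\in\stab(\ket\zeta)$ is a full-support $X/Z$-type Pauli then $S$ and $\had^{\otimes k}S\had^{\otimes k}$ anticommute, so the two states have different stabilizer groups (and if no such $S$ exists, the overlap is $0$ outright). You should redirect your cancellation analysis into this dichotomy argument and add the odd-weight instantiation of the code.
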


We prove Theorem \ref{thm:NLTS-NLCS} by exhibiting local lower bounds on the individual Hamiltonian terms. In particular, we show that if $h=\frac{1}{2}(\eye-P^{\otimes k})$ is a $k$-local term where $P\in\{X,Z\}$, then $$\bra\zeta (\destabdagger)^{\otimes k} h (\destab)^{\otimes k}\ket\zeta \geq\sin^2(\pi/8)$$ for \textit{every} $k$-qubit stabilizer state $\ket\zeta$, as long as $k$ is odd. Combining this lower bound with the fact that the reduced state of a stabilizer state is a convex combination of stabilizer states, we have that conjugating a CSS Hamiltonian by $\destabn$ results in an NLCS Hamiltonian, at least in the case that many of the Hamiltonian terms act on an odd number of qubits.

The condition of odd weight is unfortunately a necessary condition of our local techniques: if $k$ is even then there is always a $k$-qubit stabilizer state with $\bra{\zeta_0} (\destabdagger)^{\otimes k} h (\destab)^{\otimes k}\ket{\zeta_0}=0$. Nonetheless, we show in Section \ref{app:odd-weight} that there is an explicit NLTS Hamiltonian from \cite{ABN22} where every local term acts on an odd number of qubits. Since conjugating by a constant-depth circuit preserves NLTS, we ultimately have that $\Tilde{\ham}_{NLTS}$ satisfies both NLTS and NLCS.


\subsection{Implications of the quantum PCP conjecture}\label{sec:qPCP-implications}

We turn now to the question of what Hamiltonians are guaranteed to exist by the quantum PCP conjecture. The quantum PCP conjecture has two main formulations; we focus here on the gap amplification version. See \cite{AAV13} for a great survey on the conjecture.

\begin{conjecture*}[Conjecture 1.3 of \cite{AAV13}]
    Let $\epsilon>0$ be a constant. \emph{LH-}$\epsilon$ is $\QMA$\emph{-hard} under quantum polynomial-time reductions.
\end{conjecture*}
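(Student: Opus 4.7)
The plan is to attempt a quantum analogue of Dinur's combinatorial proof of the classical PCP theorem, iteratively amplifying the promise gap of LH while preserving locality and polynomial size. Starting from Kitaev's theorem that LH-$1/\poly(n)$ is $\QMA$-hard, one would aim to construct a polynomial-time (in fact, quantum polynomial-time, since the conjecture allows quantum reductions) reduction that, given an instance $\ham$ with promise gap $\delta$, produces a new $k$-local instance $\ham'$ of comparable size with promise gap at least $\min(2\delta,\epsilon_0)$ for some absolute constant $\epsilon_0>0$. Iterating $\mcO(\log n)$ times then yields a constant-gap instance and proves LH-$\epsilon$ is $\QMA$-hard for every fixed $\epsilon<\epsilon_0$.

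Following Dinur's template, the reduction would decompose into three stages. First, a preprocessing step that makes the interaction hypergraph of $\ham$ regular and an expander, while ensuring each qubit participates in a bounded number of terms; candidate tools include perturbative gadgets and expander replacement, adapted so that the spectral gap analysis tolerates non-commuting terms. Second, the gap-amplification step: classically one replaces each constraint by a constraint along $t$-step random walks on the interaction graph, so that ``inconsistent'' assignments are caught at many locations; quantumly this would be implemented by a Hamiltonian whose local terms act on windows of the walk and penalize inconsistency between reduced density matrices. One would then argue that for any $\ket\psi$ of low energy there exists a ``rounded'' pure state on each window whose marginals are globally consistent, and that this state violates the original $\ham$ on roughly a $2\delta$ fraction of terms. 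Third, a locality-reduction step, composing with a quantum PCP of proximity (or a quantum assignment tester) to bring the local arity back down to the original constant $k$, so that amplification can be repeated.

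The main obstacle is unquestionably the second step. Classical gap amplification hinges on the fact that an assignment is a function on vertices, so local views along a walk share a common restriction; quantumly, the reduced density matrices of $\ket\psi$ on overlapping regions need not arise from any global pure state, and standard ``rounding'' via quantum de Finetti or Jordan's lemma loses soundness as $\sqrt{\delta}$ rather than gaining a factor of $2$. No quantum analogue of the expander-walk amplification lemma is presently known, and naive attempts collide with the monogamy of entanglement and no-cloning, which together prevent the ``local assignment tester'' used in Dinur's composition from functioning on quantum witnesses. A plausible route around this obstacle is to replace combinatorial amplification with an algebraic construction: use the asymptotically good quantum LDPC codes underlying the NLTS theorem of \cite{ABN22} to build a quantum locally testable code with constant soundness, and then compose such a code with the circuit-to-Hamiltonian construction of a $\QMA$ verifier in a PCP-of-proximity style, so that the constant promise gap is obtained in one shot rather than by iteration. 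Either route will require a genuinely new technique for certifying global consistency of many local quantum views, and that is where I expect essentially all of the difficulty to reside; the preprocessing and locality-reduction stages appear tractable by existing Hamiltonian-gadget and code-composition methods once the central amplification primitive is in hand.
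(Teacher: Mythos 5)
The statement you are attempting to prove is the quantum PCP conjecture itself (Conjecture 1.3 of \cite{AAV13}). The paper does not prove it and explicitly notes that it ``has thus far eluded proof''; it is quoted only as motivation for the NLTS/NLCS/NLSS program. So there is no proof in the paper to compare against, and your proposal should be judged as a research plan rather than a proof.

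As a proof it has a genuine and fatal gap, which you yourself identify: the gap-amplification step. Everything in your outline other than that step is scaffolding, and the step itself is precisely the open problem. Concretely, Dinur's amplification relies on the fact that a classical assignment is a single global function, so the local views read off along an expander walk are automatically consistent restrictions of one object; the soundness analysis counts how often a \emph{fixed} bad assignment is caught. For a quantum witness $\ket\psi$, the reduced density matrices on overlapping windows need not extend to any global state once you try to ``round'' them to something a local tester can certify, and the known rounding tools (local de Finetti--type arguments, Jordan's lemma) degrade the gap rather than doubling it --- you correctly note the $\sqrt{\delta}$ loss. Monogamy of entanglement also blocks the composition/assignment-tester step, since a quantum proof cannot be consistently ``copied'' into the many local proofs that Dinur's composition requires. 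Your alternative algebraic route (quantum locally testable codes with constant soundness composed with a circuit-to-Hamiltonian construction) is a recognized candidate strategy, but no such codes with the required parameters are known, and even granting them the composition step faces the same global-consistency obstruction. Until one of these primitives is actually constructed and analyzed, the argument does not go through; what you have written is an accurate map of why the conjecture is hard, not a proof of it.
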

In other words, the conjecture says there is a worst-case local Hamiltonian whose ground state energy is $\QMA$-hard to approximate within a constant. Approximating ground-state energies and finding ground states of local-Hamiltonians are of central importance to condensed matter theory and quantum simulation algorithms. If true, the quantum PCP conjecture says that there are some Hamiltonians whose ground-state energies we could never hope to approximate, let alone find their ground states.\footnote{Unless, of course, one believes $\QMA\subseteq\P$ or some other weakening of $\QMA$.}

The key insight of \cite{FH13} when they defined the NLTS conjecture was that some states have properties which allow their ground state energies to be calculated in a smaller complexity class than $\QMA$.  For a constant, $k$, we say that an $n$-qubit state, $\rho$, is \textbf{$k$-locally-approximable} if it has a polynomial-sized classical description from which every $k$-local reduced state, $\rho_A\equiv\Tr_{-A}[\rho]$ where $\abs{A}\leq k$, can be approximated to inverse-polynomial precision in polynomial-time.
Consider the following simple result:

\begin{fact}\label{fact:classical-local-access}
    Suppose $\ham =\frac{1}{m}\sum_{i=1}^m \ham_i$ is a $k$-local Hamiltonian and $\rho$ is a $k$-locally approximable state. The energy of $\rho$ under $\ham$ can be approximated to inverse-polynomial precision in $\NP$.
\end{fact}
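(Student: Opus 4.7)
The plan is to construct an $\NP$ verifier that uses the polynomial-size classical description of $\rho$ as its witness and exploits locality to approximate the energy term by term. First I would use the fact that each $\ham_i$ acts non-trivially only on its $k$-qubit support $A_i$ to rewrite
\[
\Tr[\rho\,\ham] \;=\; \frac{1}{m}\sum_{i=1}^{m} \Tr[\rho\,\ham_i] \;=\; \frac{1}{m}\sum_{i=1}^{m} \Tr\!\big[\rho_{A_i}\, h_i\big],
\]
so that computing the global energy reduces to evaluating $m = \poly(n)$ traces against $k$-local reduced states, each of which lives in a $2^k \times 2^k = \mcO(1)$-dimensional space.

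Given a desired precision $\delta = 1/\poly(n)$, the verifier --- on witness equal to the classical description of $\rho$ --- would, for each $i \in [m]$, invoke the polynomial-time reduced-state approximation subroutine guaranteed by $k$-local approximability to produce $\tilde\rho_{A_i}$ with $\|\rho_{A_i} - \tilde\rho_{A_i}\|_1 \leq \delta/2$, then compute $\Tr[\tilde\rho_{A_i}\, h_i]$ exactly in constant time (since $h_i$ and $\tilde\rho_{A_i}$ are both constant-size matrices). The verifier outputs $\tilde E \equiv \frac{1}{m}\sum_i \Tr[\tilde\rho_{A_i}\, h_i]$. The error analysis follows from matrix H\"older: $\big|\Tr[(\rho_{A_i} - \tilde\rho_{A_i})h_i]\big| \leq \|\rho_{A_i} - \tilde\rho_{A_i}\|_1 \cdot \|h_i\| \leq \delta/2$, and averaging (rather than summing) over the $m$ terms preserves this bound, yielding $|\tilde E - \Tr[\rho\,\ham]| \leq \delta/2 < \delta$. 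Runtime is $\poly(n)$ since each of the $m = \poly(n)$ iterations runs in polynomial time.

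There is no substantial obstacle; the proof is bookkeeping that ties together three ingredients: $k$-locality of $\ham$ decomposes the energy into a sum of local contributions, $k = \mcO(1)$ makes each per-term trace a constant-size computation, and $k$-local approximability furnishes the reduced-state approximations in polynomial time. The only place to exercise care is choosing the per-term precision so that, after normalization, the cumulative error remains $1/\poly(n)$; this is automatic because the prefactor $1/m$ absorbs the number of terms and leaves the final bound independent of $m$.
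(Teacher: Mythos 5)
Your proof is correct and follows essentially the same approach as the paper: decompose the energy into $m$ local traces $\Tr[\rho_{A_i} h_i]$, use the classical description of $\rho$ as the $\NP$ witness to compute each constant-size reduced state to inverse-polynomial precision, and evaluate each trace by brute force. The paper's proof is a brief sketch of exactly this argument; your added error analysis via the trace-norm/operator-norm bound and the observation that the $1/m$ normalization keeps the total error independent of $m$ are correct elaborations of the same idea.
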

\begin{proof}
     Each $\ham_i$ acts non-trivially on at most $k$ qubits, $A_i\subset [n]$, so the energy of $\rho$ for $\ham_i$ is $\Tr[\rho\ham_i]=\Tr\big[\rho_{A_i} h_i\big]$, where $h_i \equiv\Tr_{-A_i}[\ham_i]$ is the non-trivial part of $\ham_i$. Since $h_i\in\CC^{2^k\times 2^k}$ and by assumption we can efficiently compute $\rho_{A_i}$ to inverse-polynomial precision from the classical description of $\rho$, each $\Tr[\rho\ham_i]$ can be brute-force approximated in polynomial-time.
\end{proof}

Trivial states are locally approximable.
If $\ket\psi$ is a trivial state then there is a constant-depth circuit such that $\ket\psi= C\ket{0}\n$. For a set of $k$ qubits, $A$, the only gates that contribute to $\psi_A$ are those in the reverse-lightcone\footnote{See Figure \ref{fig:rotated-Hamiltonian}(a).} of $A$. As the reverse-lightcone has size at most $k2^d$, a constant, only a constant number of gates from $C$ are needed to brute-force approximate $\psi_A$. Thus, we can approximate local reduced states of $\ketbra\psi$ from the classical description of $C$.

The assumption of being able to compute local reduced states also holds for stabilizer states. 
Suppose $\ket\psi$ is an $n$-qubit stabilizer state. Since $\ket\psi$ is a stabilizer state there are $n$ independent and commuting Pauli operators $\{P_1,\dots, P_n\}$ that stabilize $\ket\psi$. The list of these Pauli operators will serve as the classical description of $\ketbra\psi$ from which local reduced states can be computed. The reduced state $\psi_A$ can be written as
\begin{equation}\label{eq:stabilizer-reduced-state}
    \psi_A = \frac{1}{2^k}\sum_{P\in G_A} P,
\end{equation}
where $G_A$ is the subgroup of the stabilizers of $\ket\psi$ which act non-trivially only on qubits in $A$. There are $4^{k}$ such Pauli group elements (ignoring phases) which we denote by $\mcP_A$. For $P\in\mcP_A$, one of $\pm P$ is in the stabilizer group of $\ket\psi$ if and only if $P$ commutes with every stabilizer generator. So, we can determine the elements of $G_A$ by brute-force checking which elements of $\mcP_A$ commute with every generator.\footnote{It remains to determine whether $+P$ or $-P$ is in the stabilizer group. Although slightly more complicated, this can be done in polynomial-time \textit{independent} of the weight of $P$.} This computation can be done in polynomial-time since there are only a constant number of Pauli operators to check, so using Equation \eqref{eq:stabilizer-reduced-state} we can compute $\psi_A$ efficiently.

Thus, in addition to being an implication of NLSS, NLCS Hamiltonians are also implied by the quantum PCP conjecture assuming $\NP\neq\QMA$: if every local Hamiltonian has a low-energy stabilizer state then the ground state energy could be computed in $\NP$ via Fact \ref{fact:classical-local-access}.

\subsection{Acknowledgements}
This paper is a contribution of NIST, an agency of the US government, and is not subject to US copyright. We thank Alexander Barg and Chinmay Nirkhe for helpful discussions.


\section{Preliminaries}\label{sec:prelim}
For a natural number, $n$, we denote $[n]\equiv\{1,\dots,n\}$. For a subset, $A\subseteq [n]$, we denote the set complement by $-A\equiv [n]\setminus A$ and the partial trace over the qubits in $A$ by $\Tr_A$. In particular, $\Tr_{-A}[\ketbra{\psi}]$ denotes the local density matrix of $\ket\psi$ on the qubits in $A$.
\subsection{States}
Let $C=\{C_n\}$ be a countable family of quantum circuits consisting of one and two-qubits gates where each $C_n$ acts on $n$ qubits. If the depth of $C_n$ is upper bounded by a function $d(n)$ for all $n$, then we say $C$ is a \textbf{depth-$d(n)$} family of quantum circuits. If $d(n)=\mcO(1)$ then we say $C$ is a depth-$\mcO(1)$ (or constant-depth) family of quantum circuits. Similarly, if $d(n)=\poly(n)$ then we say $C$ is a depth-$\poly(n)$ (or polynomial-sized) family of quantum circuits.

The single-qubit \textbf{Pauli group} is the set $\mcP_1\equiv\{i^\ell P\mid P\in\{\eye,X,Y,Z\}, \ell\in\{0,1,2,3\}\}$, and the $n$-qubit Pauli group is its $n$-fold tensor-power, $\mcP_n=\bigotimes_{i\in[n]}\mcP_1$.
For an element $S=P_1\otimes\dots\otimes P_n\in\mcP_n$, the \textbf{weight} of $S$ is defined to be the number of qubits where $P_i$ is not identity, i.e. $\wt(S)=\abs{\{P_i\mid P_i\neq i^\ell\eye\}}$. 
We denote the set of these qubits where $S$ acts non-trivially by $\N(S)\subseteq[n]$.

The $n$-qubit \textbf{Clifford group}, $\mcC_n$, is the set of unitary operators which stabilize the Pauli group.
It is well-known that $\mcC_n$ is generated by the set $\{\had,\phase,\CNOT\}$, where $\had$ is the single-qubit Hadamard gate, $\phase$ is the single-qubit phase gate, and $\CNOT$ is the two-qubit controlled-NOT gate. A \textbf{Clifford circuit} is defined to be any element of the Clifford group.

Let $\psi$ be a [possibly mixed] state on $n$ qubits and let $N\geq n$. If there is a quantum circuit, $C$, acting on $N$ qubits such that $\psi=\Tr_{N}[C\ketbra{0^N}C^\dagger]$ then we say that $C$ \textbf{prepares} $\psi$. $\psi$ is said to be: a \textbf{trivial state} if there is a constant-depth quantum circuit preparing it, an [efficiently] \textbf{preparable state} if there is a polynomial-sized circuit preparing it, a \textbf{Clifford state} if there is a polynomial-sized Clifford circuit preparing it, and an \textbf{almost Clifford state} if there is a polynomial-sized quantum circuit containing Clifford + $\mcO(\log(n))$ $\T$-gates preparing it. A pure state, $\ket\psi$ is said to be a \textbf{sampleable state} if (1) there is a classical algorithm exactly computing $\bra{x}\ket{\psi}$ for every $x\in\{0,1\}^n$ and (2) there is a classical algorithm that exactly samples $x\in\{0,1\}^n$ from the distribution $p(x) = \abs{\bra{x}\ket{\psi}}^2$.

A \textbf{stabilizer group} is an abelian subgroup, $G$, of $\mcP_n$ not containing $-\eye$. As a finite group, we can always find a list of mutually independent and commuting generators, $\mcS=\{S_1,\dots,S_k\}$, of $G$.
We will refer simply to the subgroup $\langle\mcS\rangle=G$ when this generating set is clear. Note that given a stabilizer group, there is a well-defined \textbf{stabilizer code} \cite{Got96,CSS97,CSS98}, $\mcC_\mcS$, which is the common +1 eigenspace of the operators in $\langle\mcS\rangle$.

If a given stabilizer group has a generating set, $\mcS$, consisting of tensor products of only Pauli $X$ and $\eye$ or only Pauli $Z$ and $\eye$, then we say $\mcC_\mcS$ is a \textbf{CSS code} and that $\mcS$ generates a CSS code.

The \textbf{stabilizer group} of a pure state, $\ket{\psi}$, is the subgroup of the Pauli group defined by $\stab(\ket\psi)\equiv \left\{ P\in\mcP_n\mid P\ket{\psi} = \ket{\psi}\right\}$.
We say that a $P\in\stab(\ket{\psi})$ \textbf{stabilizes} $\ket{\psi}$. Note that $\stab(\ket{\psi})$ is an abelian subgroup of the Pauli group not containing $-\eye$, and so it is a valid stabilizer group as before.

A pure state, $\ket{\psi}$, is said to be a \textbf{stabilizer state} if $\abs{\stab(\ket\psi)}=2^n$, or equivalently, if there are $n$ independent Pauli operators that stabilize $\ket\psi$. We note that $\ketbra{\psi}=\frac{1}{2^n}\sum_{g\in G}g$ where $G=\stab(\ket\psi)$.

A mixed state, $\psi$, is said to be a stabilizer state if $\psi$ is a convex combination of pure stabilizer states, i.e. $\psi = \sum_j p_j\ketbra{\varphi_j}$ where each $\ket{\varphi_j}$ is a pure stabilizer state on $n$ qubits, $\sum_j p_j = 1$, and $p_j\geq 0$.

All of the states defined here are related to one another via the following:
\begin{equation}\label{eq:state-implications}
    \begin{tikzcd}
    \text{trivial}   & \text{Clifford/Stabilizer} &   \\
     &  \text{almost Clifford}\arrow[u, Leftarrow, "\text{some $\T$ gates}"']&  \\
     \text{preparable} \arrow[uu, Leftarrow, "\text{increase depth}"']\arrow[ur, Leftarrow, "\text{arbitrary $\T$ gates}"'] &  &\text{sampleable}\arrow[ul, Leftarrow, "\text{\cite{BG16}}"']   
    \end{tikzcd}
\end{equation}


By definition of the Clifford group, stabilizer states and Clifford states are equivalent for pure states. We will interchangeably use the terms ``stabilizer state'' and ``Clifford state'' even for mixed states, which is motivated by the following result:
\begin{restatable}{claim}{Cliffdecomp}\label{clm:Clifford-decomp}
If $\psi$ is a Clifford state, then it is also a stabilizer state.
\end{restatable}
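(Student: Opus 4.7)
The plan is to reduce to the pure-state case. By definition, a Clifford state $\psi$ on $n$ qubits arises as $\psi = \Tr_{-A}[\ketbra{\phi}]$, where $\ket{\phi} := C\ket{0^N}$ for some polynomial-sized Clifford circuit $C$ on $N \geq n$ qubits and some $n$-qubit subset $A \subseteq [N]$ (with $-A$ denoting the complement $[N]\setminus A$). Since the Clifford group maps the Pauli group to itself and $\ket{0^N}$ is stabilized by $\langle Z_1, \ldots, Z_N\rangle$, $\ket{\phi}$ is a pure stabilizer state with $\abs{\stab(\ket{\phi})} = 2^N$. It therefore suffices to show that the reduced state on $A$ of any pure stabilizer state is a convex combination of pure stabilizer states on $A$.

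Next, I would compute $\psi$ directly from the identity $\ketbra{\phi} = 2^{-N}\sum_{g \in G} g$, where $G = \stab(\ket{\phi})$. Since $\Tr[P] = 0$ for any non-identity Pauli $P$, only elements of $G$ acting trivially on $-A$ survive the partial trace, yielding exactly Equation \eqref{eq:stabilizer-reduced-state}: $\psi = 2^{-n}\sum_{g \in G_A} g$, where $G_A$ is (the restriction to $A$ of) the subgroup of $G$ supported on $A$. As $G_A$ is an abelian subgroup of $\mcP_n$ not containing $-\eye$, it is itself a stabilizer group; letting $r = \log_2 \abs{G_A}$, the operator $2^{-r}\sum_{g \in G_A} g$ is precisely the projector $\Pi$ onto the stabilizer code $\mcC_{G_A}$, which has dimension $2^{n-r}$. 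Hence $\psi = 2^{-(n-r)}\,\Pi$ is the maximally mixed state on $\mcC_{G_A}$.

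To finish, I would exhibit an orthonormal basis of $\mcC_{G_A}$ consisting of pure stabilizer states. A standard symplectic-linear-algebra argument allows one to extend the $r$ independent generators of $G_A$ to a maximal commuting set $\{S_1, \ldots, S_r, T_1, \ldots, T_{n-r}\}$ of $n$ independent Paulis in $\mcP_n$. For each sign vector $\vec\epsilon \in \{\pm 1\}^{n-r}$, the group $\langle S_1, \ldots, S_r, \epsilon_1 T_1, \ldots, \epsilon_{n-r} T_{n-r}\rangle$ is a maximal stabilizer group and so stabilizes a unique pure stabilizer state $\ket{\varphi_{\vec\epsilon}}$. These $2^{n-r}$ states all lie in $\mcC_{G_A}$ and are mutually orthogonal (any two differ on the $\pm 1$ eigenvalue of some $T_i$), hence form an orthonormal basis of $\mcC_{G_A}$. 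Thus $\Pi = \sum_{\vec\epsilon}\ketbra{\varphi_{\vec\epsilon}}$, and $\psi = 2^{-(n-r)}\sum_{\vec\epsilon}\ketbra{\varphi_{\vec\epsilon}}$ is a uniform convex combination of pure stabilizer states, as required.

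The one nontrivial step is this last one: that any stabilizer subgroup can be extended to a maximal commuting subgroup of Paulis, and that the resulting sign-choice eigenstates yield an orthonormal basis of the code space. This is classical in stabilizer-code theory but does require a short symplectic-linear-algebra argument over $\FF_2$; everything else is a direct computation using the Pauli trace formula.
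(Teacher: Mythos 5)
Your proof is correct and follows essentially the same route as the paper's: reduce to the partial trace of a pure stabilizer state, use the Pauli trace formula to identify the reduced state as the normalized projector onto the stabilizer code of $G_A$ (the paper's Facts \ref{fact:groundstateproj} and \ref{fact:mixed-Cliff-decomp}), and decompose that projector as a uniform mixture of pure stabilizer basis states. The only difference is that you supply the symplectic-extension argument for the stabilizer basis explicitly, whereas the paper cites it as a well-known fact about logical computational basis states.
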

A proof can be found in Appendix \ref{app:Clifford-decomp}. Claim \ref{clm:Clifford-decomp} says that the reduced state of a pure stabilizer state is a convex combination of pure stabilizer states on the subsystem. This is essential in our energy lower bound arguments: To lower-bound the energy of all $n$-qubit stabilizer states for a $k$-local term of the Hamiltonian, $\ham_i$, it is sufficient to lower-bound the energy of all $k$-qubit stabilizer states for the non-trivial part of $\ham_i$.

\subsection{Hamiltonians}\label{sec:props-Hamiltonians}
A \textbf{$k$-local Hamiltonian}, $\Hn$, is a Hermitian operator on the space of $n$ qubits, $({\CC^2})\n$, which can be written as a sum $\Hn =\frac{1}{m}\sum_{i=1}^m \ham_i$, where each $\ham_i$ is a Hermitian matrix acting non-trivially on only $k$ qubits and with spectral norm $\norm{\ham_i}\leq 1$. A \textbf{family of $k$-local Hamiltonians}, $\{\Hn\}$, is a countable set of $k$-local Hamiltonians indexed by system size, $n$, where $k=\mcO(1)$ and $m=\poly(n)$. We will often use the term ``local Hamiltonian'' to mean a family of $k$-local Hamiltonians.

The \textbf{ground-state energy} of $\ham$ is $E_0\equiv \min_\rho \Tr[\rho \ham]$, where the minimization is taken over all $n$-qubit mixed states. $\ham$ is said to be \textbf{frustration-free} if $E_0=0$. A state, $\psi$, is said to be a \textbf{ground state} of $\ham$ if $\Tr[\psi \ham]=E_0$. A state, $\psi$, is said to be an \textbf{$\epsilon$-low-energy state} of $\ham$ if $\Tr[\psi \ham] < E_0 + \epsilon$. If $\psi=\ketbra{\psi}$ is a pure state, this condition simplifies to $\bra{\psi}\ham\ket{\psi}< \lambda_{\min}(\ham) + \epsilon$, where $\lambda_{\min}(\ham)$ is the smallest eigenvalue of $\ham$. For frustration-free Hamiltonians this is equivalent to $\bra{\psi}\ham\ket{\psi}<\epsilon$. All of the Hamiltonians we consider will be frustration-free.

For $S\in\mcP_n$, we denote the orthogonal projector to the $+1$ eigenspace of $S$ by $\Pi_S$, i.e. $\Pi_S\equiv\frac{\eye-S}{2}$. Since $\Pi_S$ acts non-trivially on only $\wt(S)$ qubits, we can write
$\Pi_S = {\Pi_S}\mid_{\N(S)}\otimes \eye_{[n]\setminus \N(S)}$.

Given a stabilizer group, $\langle \mcS\rangle$, with generating set $\mcS$, the \textbf{stabilizer Hamiltonian} associated to $\mcS$ is $\ham_\mcS \equiv \frac{1}{\abs{\mcS}}\sum_{S\in\mcS} \Pi_S$.
If each qubit is acted on non-trivially by at most $\wt(\mcS)$ elements of $\mcS$, then $\ham_\mcS$ is a $\wt(\mcS)$-local Hamiltonian. If $\mcC$ is the Stabilizer code associated with $\mcS$, then every $\ket{\psi}\in\mcC$ is a zero-energy state of $\ham_\mcS$. In particular, $\ham_\mcS$ is frustration-free with ground-state space $\mcC$. If $\mcS$ generates a CSS code then we say $\ham_\mcS$ is a \textbf{CSS Hamiltonian}.

If $\{\langle\mcS_n\rangle\mid \langle\mcS_n\rangle\leq \mcP_n\}$ is a countable family of stabilizer groups then the \textbf{family of stabilizer (or CSS) Hamiltonians} associated with $\{\mcS_n\}$ is $\{\ham_{\mcS_n}\}$. This will be a family of local Hamiltonians when: (1) each qubit is acted on non-trivially by at most $\wt(\mcS_n)$ elements of $\mcS_n$, (2) $\wt(\mcS_n)=\mcO(1)$, and (3) $|\mcS_n|=\Theta(n)$. Such families, $\{\langle\mcS_n\rangle\}$, of stabilizer groups correspond to quantum LDPC code families.

For each of the states in the previous section we can consider an analogue of NLTS.

\begin{definition*}
A family of $k$-local Hamiltonians, $\{\Hn\}$, is said to have the \textbf{$\epsilon$-\NLXS} property if for all sufficiently large $n$, $\Hn$ has no $\epsilon$-low-energy states of type $X$. The family, $\{\Hn\}$, is said to have the \textbf{\NLXS} property if it is $\epsilon$-\NLXS for some constant $\epsilon$.
\end{definition*}
The following implications between the \NLXS theorems/conjectures and quantum PCP conjecture hold. A complexity inequality next to an arrow denotes an implication that holds if the separation is true, e.g. if the quantum PCP conjecture is true and $\MA\neq\QMA$, then NLSS is true.
\begin{equation}\label{diagram:qpcp-implications}
    \begin{tikzcd}
      & \text{qPCP conjecture}
      \arrow[dl, Rightarrow,"\QCMA\neq\QMA"',red] 
      \arrow[dr, Rightarrow, "\MA\neq\QMA\cite{GL22}",red!40!yellow] 
      \arrow[dddl, Rightarrow,blue] & \\
      \text{NLPS}\arrow[dd, Rightarrow]\arrow[dr, Rightarrow] & & \text{NLSS}\arrow[dl, Rightarrow]  \\
      &    \text{NLACS}\arrow[uu, Leftarrow, "\NP\neq\QMA", blue]\arrow[d, Rightarrow]&\\
      \text{NLTS} &   \text{NLCS} &
    \end{tikzcd}
\end{equation}
The relationships between each of the \NLXS results are implicitly given by Diagram \ref{eq:state-implications}. Trivial states, stabilizer states, and almost Clifford states are all examples of locally-approximable states, so they following from the quantum PCP conjecture via Fact \ref{fact:classical-local-access}. The implication of NLSS was given by Gharibian and Le Gall when they originally defined NLSS \cite{GL22}. The implication of NLPS is well-known: if every local Hamiltonian has a low-energy preparable state, $C\ket{0}\n$, then given the classical description of $C$ a quantum prover could simply prepare the state and measure its energy. This would put LH-$\epsilon\in\QCMA$, implying $\QMA=\QCMA$ if the quantum PCP conjecture is true.

For a family of $k$-local Hamiltonians, $\{\Hn\}$, and a family, $C=\{C_n\}$, of depth-$\mcO(1)$ quantum circuits, we define the \textbf{$C$-rotated version} of $\{\Hn\}$ as $\{\Hn\}^C \equiv \{{C_n}^{\dagger} \Hn {C_n}\}$.
This is still a family of local Hamiltonians, albeit with a possibly different $k$ than the original Hamiltonian. This is because the only qubits that interact non-trivially with a single Hamiltonian term, $C^\dagger \ham_i C$, are those qubits in the reverse-lightcone of the qubits acted on by $\ham_i$. The number of qubits in the reverse-lightcone of a single qubit grows exponentially in the depth of a circuit, which is still constant since $C$ is constant-depth. See Figure \ref{fig:rotated-Hamiltonian} for an example of this. When $C=V\n$ is the tensor-product of a single-qubit gate, $V$, we will use the term ``$V$-rotated'' as opposed to ``$V\n$-rotated''.

\begin{figure}[t!]
    \centering
    \includegraphics[width=6.5in]{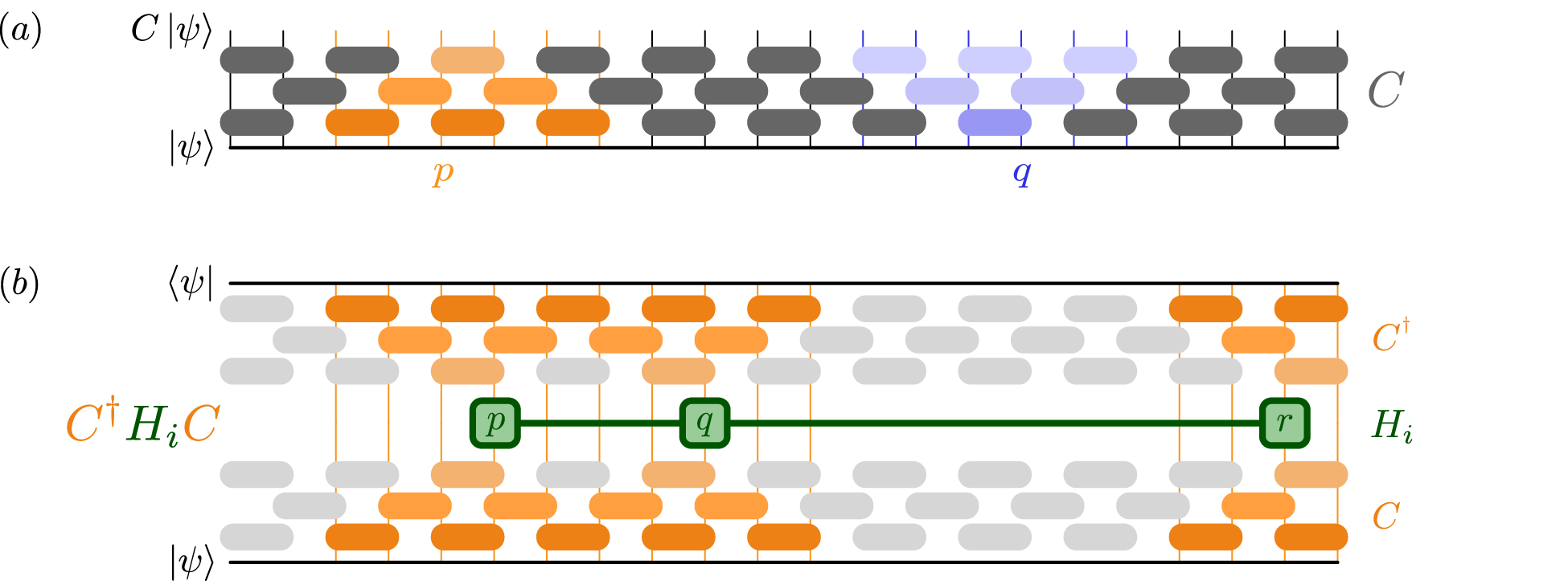}
    \caption{(a) Consider a constant-depth circuit, $C$. The [blue] highlighted gates on the right of the figure represent the ``lightcone'' of qubit $q$, i.e. the set of gates that can be traced back to $q$. The [orange] highlighted gates on the left of the figure represent the gates in the ``reverse-lightcone'' of qubit $p$, i.e. the gates that will ultimately affect $p$. \newline
    (b) Consider a single $k$-local Hamiltonian term, $\ham_i$, that acts only on qubits $p,q,$ and $r$. When conjugating $\ham_i$ with $C$, any gate not in the reverse-lightcone of one of $p,q,$ or $r$ will cancel out with its inverse. The number of qubits in the reverse-lightcone of any one qubit is $\leq 2^d$ where $d$ is the depth of $C$, so $C^\dagger \ham_i C$ will be at most $k2^d$-local. 
    }
    \label{fig:rotated-Hamiltonian}
\end{figure}
The utility of considering a $C$-rotated Hamiltonian is that in addition to preserving locality, the NLTS property is also preserved.

\begin{lemma}\label{lem:NLTS-phase}
    If $\{\Hn\}$ is a family of $\epsilon_0$-NLTS local Hamiltonians and $C=\{C_n\}$ is a family of constant-depth circuits, then $\{\Hn\}^C$ is also $\epsilon_0$-NLTS.
\end{lemma}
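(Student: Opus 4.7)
The plan is to prove this by contrapositive: show that any $\epsilon_0$-low-energy trivial state of the rotated Hamiltonian ${C_n}^\dagger \Hn C_n$ yields an $\epsilon_0$-low-energy trivial state of the original $\Hn$, contradicting the NLTS hypothesis. Two observations drive the argument. First, conjugation by a unitary preserves the spectrum, so $\lambda_{\min}({C_n}^\dagger \Hn C_n) = \lambda_{\min}(\Hn)$ and the ``$\epsilon_0$-low-energy'' threshold is identical on both sides. Second, the class of trivial states is closed under application of constant-depth unitaries, since the composition of two constant-depth circuits is again a constant-depth circuit (of depth at most the sum of the individual depths, still $\mcO(1)$).

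Assume for contradiction that $\{\Hn\}^C$ is not $\epsilon_0$-NLTS. Then for infinitely many $n$ there is a trivial state $\psi_n$, prepared by some constant-depth circuit $D_n$ on $N\geq n$ qubits, with $\Tr[\psi_n\, {C_n}^\dagger \Hn C_n] < \lambda_{\min}(\Hn)+\epsilon_0$. Extending $C_n$ by identity on the $N-n$ ancilla qubits so that it commutes with the partial trace defining $\psi_n$ from $D_n\ketbra{0^N}D_n^\dagger$, I would define $\phi_n \equiv C_n \psi_n {C_n}^\dagger$. Then $\phi_n$ is prepared on $N$ qubits by the circuit $C_n D_n$, which is constant-depth, so $\phi_n$ is trivial. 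By cyclicity of the trace, $\Tr[\phi_n \Hn] = \Tr[\psi_n\, {C_n}^\dagger \Hn C_n] < \lambda_{\min}(\Hn)+\epsilon_0$, so $\phi_n$ is an $\epsilon_0$-low-energy trivial state of $\Hn$, contradicting $\epsilon_0$-NLTS of $\{\Hn\}$.

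No substantive obstacle is expected: the lemma is essentially a tautology from the definitions once one recognizes that triviality is closed under constant-depth unitaries and that $\Hn$ and ${C_n}^\dagger \Hn C_n$ share a spectrum. The only care needed is in the mixed-state formulation of triviality, but padding $C_n$ with identity on the ancillas makes it commute with the partial trace and the argument goes through verbatim. One could equally well phrase the proof in the forward direction by showing that the map $\psi \mapsto C_n\psi {C_n}^\dagger$ is a bijection from trivial states to trivial states that preserves energy under the rotation, which makes the equivalence of the two NLTS statements transparent.
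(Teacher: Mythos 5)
Your proof is correct and follows essentially the same route as the paper's: conjugation preserves the spectrum, and composing the constant-depth preparation circuit with $C_n$ yields another constant-depth circuit, so a low-energy trivial state of the rotated Hamiltonian would give one for the original. The only difference is that you carefully handle the mixed-state/ancilla formulation of triviality, whereas the paper works directly with pure states $U_{\epsilon,n}\ket{0}^{\otimes n}$; both are fine.
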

\begin{proof}
Suppose that $\{\Hn\}^C$ is not NLTS. By definition, for every $\epsilon >0$ there is an $n$ and constant-depth circuit $U_{\epsilon,n}$ such that $U_{\epsilon,n} \ketzero$ is an $\epsilon$-low-energy state of $C_n^\dagger\Hn C_n$, i.e.
\begin{equation*}
    \brazero U_{\epsilon,n}^\dagger C_n^\dagger\Hn C_n U_{\epsilon,n} \ketzero < \lambda_{\min}(C_n^\dagger\Hn C_n) + \epsilon.
\end{equation*}
Since $C_n$ is a unitary operator the minimum eigenvalues of $\Hn$  and $C_n^\dagger \Hn C_n$ are equal. Defining $\ket{\psi_{\epsilon_0,n}}\equiv C_n U_{\epsilon_0,n}\ketzero$ we have
\begin{equation*}
    \bra{\psi_{\epsilon_0,n}}\Hn\ket{\psi_{\epsilon_0,n}} < \lambda_{\min}(\Hn) + \epsilon_0,
\end{equation*}
i.e. $\ket{\psi_{\epsilon_0,n}}$ is an $\epsilon_0$-low-energy state of $\Hn$. Since $C_n U_{\epsilon_0,n}$ is a constant-depth circuit this implies that $\Hn$ has a low-energy trivial state, contradicting the assumption of $\epsilon_0$-NLTS.
\end{proof}


\section{NLCS from CSS codes}\label{sec:NLCS-CSS}
We will show that rotating by the tensor product of a single-qubit gate is sufficient to turn most CSS Hamiltonians into NLCS Hamiltonians, including the quantum Tanner codes used in \cite{ABN22}. In particular, we consider the single-qubit gate $D \equiv {\destab}$ and rotate a CSS Hamiltonian by $D\n$. For a local Hamiltonian, $\Hn$, we will denote its \textbf{$D$-rotated version} by $\dHn \equiv D^{\dagger\otimes n} \Hn D\n$. We denote the $D$-\textbf{rotated projector} associated with a Pauli element $S\in\mcP_n$ by $\Tilde{\Pi}_S\equiv D\n\Pi_S D^{\dagger\otimes n}$. By definition, we have $\Tilde{\Pi}_S = {\Tilde{\Pi}_S}\mid_{\N(S)}\otimes \eye_{[n]\setminus \N(S)}$, where $\Tilde{\Pi}_S\mid_{\N(S)} = D^{\otimes\wt(S)} \Pi_S\mid_{\N(S)} D^{\dagger\otimes\wt(S)}$. Note that we have not explicitly included $D$ in the above notations since $D$ will refer exclusively to ${\destab}$, throughout.

We have the following result:
\begin{restatable}{theorem}{oddWeightCSS}\label{thm:CSS-NLCS}
Let $\{\ham_{\mcS_n}\}$ be a family of CSS Hamiltonians associated with a family of quantum (CSS) LDPC codes, $\{\langle\mcS_n\rangle\}$. Suppose for every $n$ a constant fraction, $\alpha>0$, of the generators $S\in\mcS_n$ have odd weight. Then $\{\Tilde{\ham}_{\mcS_n}\}$ is a family of NLCS Hamiltonians.
\end{restatable}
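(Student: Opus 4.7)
My plan is to reduce the NLCS condition to a local bound on each $D$-rotated projector $\Tilde{\Pi}_S$ for odd-weight generators via Claim \ref{clm:Clifford-decomp}, then prove that local bound by a clean divisibility argument that crucially requires $k$ odd. Concretely, for any $n$-qubit stabilizer state $\ket{\psi}$ and generator $S$, Claim \ref{clm:Clifford-decomp} gives that the reduced state $\psi_{\N(S)}$ is a convex combination of pure $\wt(S)$-qubit stabilizer states, so
\[
\bra{\psi} \Tilde{\Pi}_S \ket{\psi} = \Tr\!\Big[\psi_{\N(S)}\,{\Tilde{\Pi}_S}\mid_{\N(S)}\Big] \;\geq\; \min_{\ket{\zeta}} \bra{\zeta} {\Tilde{\Pi}_S}\mid_{\N(S)}\ket{\zeta},
\]
where the minimum is over $\wt(S)$-qubit stabilizer states $\ket{\zeta}$. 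For even-weight $S$ I would use only the trivial bound $\bra{\psi}\Tilde{\Pi}_S\ket{\psi}\geq 0$; everything rests on the odd-weight case, which I would handle by the local lemma stated in the introduction: for $k$ odd, $P\in\{X,Z\}$, and every $k$-qubit stabilizer state $\ket{\zeta}$, $\bra{\zeta} \Tilde{\Pi}_{P^{\otimes k}}\ket{\zeta} \geq \sin^2(\pi/8)$.

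The main technical obstacle is this local lemma, and my proof combines two independent facts. First, direct computation gives $D^\dagger X D = \had = \tfrac{X+Z}{\sqrt 2}$ and $D^\dagger Z D = \tfrac{Z-X}{\sqrt 2}$; both are Hermitian involutions whose conjugation sends Paulis to $\pm$Paulis, hence both are Clifford unitaries (even though $D$ itself is not). Therefore $(D^\dagger P D)^{\otimes k}\ket{\zeta}$ is a stabilizer state and, since the operator is Hermitian, $\bra{\zeta}(D^\dagger P D)^{\otimes k}\ket{\zeta}$ is a real inner product of two stabilizer states; such real inner products lie in $\{0\}\cup\{\pm 2^{-j/2} : j \geq 0\}$. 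Second, expanding $(D^\dagger P D)^{\otimes k} = 2^{-k/2}\sum_T (\pm 1) P_T$ as a sum of Paulis $P_T = X^T Z^{[k]\setminus T}$ and using $\ket{\zeta}\bra{\zeta} = 2^{-k}\sum_{g\in\stab(\ket{\zeta})} g$, one directly checks that this same expectation equals $n/2^{k/2}$ for some integer $n$.

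Combining the two facts forces the expectation to be simultaneously $\pm 2^{-j/2}$ and an integer multiple of $2^{-k/2}$; that is, $\pm 2^{(k-j)/2}$ must be an integer, requiring $k \equiv j \pmod 2$. For $k$ odd this forces $j$ to be odd, so the largest allowed magnitude is $2^{-1/2} = 1/\sqrt 2$, attained at $j = 1$. Consequently $\bra{\zeta}(D^\dagger P D)^{\otimes k}\ket{\zeta} \leq 1/\sqrt 2 = 1 - 2\sin^2(\pi/8)$, which gives $\bra{\zeta} \Tilde{\Pi}_{P^{\otimes k}}\ket{\zeta} \geq \sin^2(\pi/8)$. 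The odd-$k$ hypothesis is essential: for $k$ even the case $j = 0$ is compatible with $k\equiv j\pmod 2$ and allows the expectation to equal $1$, matching the introduction's observation that saturating stabilizer states (e.g.\ $\ket{\Phi^+}$ for $k=2$) exist in even weight.

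Finally, aggregating over generators: since by hypothesis a fraction $\alpha>0$ of generators in $\mcS_n$ have odd weight,
\[
\bra{\psi}\Tilde{\ham}_{\mcS_n}\ket{\psi} = \frac{1}{|\mcS_n|}\sum_{S\in\mcS_n}\bra{\psi}\Tilde{\Pi}_S\ket{\psi} \;\geq\; \alpha\sin^2(\pi/8),
\]
a positive constant independent of $n$. Since $\Tilde{\ham}_{\mcS_n}$ is frustration-free (as a unitary conjugate of the frustration-free $\ham_{\mcS_n}$), this establishes the $\alpha\sin^2(\pi/8)$-NLCS property for the family.
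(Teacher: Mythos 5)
Your proof is correct and follows the paper's overall skeleton --- reduce to a local bound on each odd-weight rotated generator via Claim \ref{clm:Clifford-decomp}, compute $D^\dagger X D = \had$ and $D^\dagger Z D = \frac{Z-X}{\sqrt{2}}$, then aggregate with the trivial bound $\geq 0$ on even-weight terms --- but your proof of the key local bound is genuinely different from the paper's. The paper (Lemma \ref{lem:H-up}) shows $\abs{\bra{\zeta}\had^{\otimes k}\ket{\zeta}}\neq 1$ for odd $k$ by a case analysis on $\stab(\ket{\zeta})$: either every stabilizer contains an $\eye$ or $Y$ (forcing the trace to vanish), or some stabilizer $S$ consists only of $X$'s and $Z$'s, in which case the odd-weight parity forces $S$ to anticommute with $\had^{\otimes k}S\had^{\otimes k}$, so the two stabilizer groups differ; it then invokes Fact \ref{fact:stab-geometry}. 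Your route instead pairs the quantization of stabilizer overlaps with an integrality observation: expanding $(D^\dagger P D)^{\otimes k}=2^{-k/2}\sum_T \pm P_T$ against $\ketbra{\zeta}=2^{-k}\sum_g g$ shows the expectation is an integer multiple of $2^{-k/2}$, and matching this against $\pm 2^{-j/2}$ forces $j\equiv k\pmod 2$, hence $j\geq 1$ for odd $k$. This is arguably cleaner (no case analysis, and it explains the even/odd dichotomy in one line), though note it uses a slightly stronger geometric input than the paper's cited Fact \ref{fact:stab-geometry}, namely that stabilizer overlaps have magnitude exactly $0$ or $2^{-j/2}$; if you prefer to rely only on Corollary 3 of \cite{GMC17} as stated, your integrality computation already suffices to rule out the value $\pm 1$ when $k$ is odd (since $2^{k/2}$ is then not an integer), after which Corollary 3 gives the $1/\sqrt{2}$ bound directly. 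Two cosmetic points: your final display is written for a pure stabilizer state $\ket{\psi}$, so you should add the one-line extension to mixed stabilizer states by convexity (as the paper does), and you should confirm that $(D^\dagger Z D)^{\otimes k}=(-X\had X)^{\otimes k}$ is Clifford only up to the global phase $(-1)^k$, which does not affect either the stabilizer-overlap magnitude or the integrality argument.
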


We prove this by giving local lower bounds on the energies of $D$-rotated projectors associated with CSS generators. As a technical requirement, these lower bounds only hold when the weight of a generator is odd.

Recall that, up to a permutation of the qubits, the generators of a CSS code can be written as either $\bar X\otimes \eye$ or $\bar Z\otimes \eye$, where $\bar X\equiv X^{\otimes k}$ and $\bar Z\equiv Z^{\otimes k}$. First consider what happens to the projectors $\Pi_{\bar X}$ and $\Pi_{\bar Z}$ when rotating by $D$:

\begin{restatable}{claim}{destabXZ}\label{lem:destabilize-XZ}
    \begin{equation*}
        \Tilde{\Pi}_{\bar X} = \frac{\eye-\had^{\otimes k}}{2}, \hspace{5em} \Tilde{\Pi}_{\bar Z} = \frac{\eye-(-X\had X)^{\otimes k}}{2}.
    \end{equation*}
\end{restatable}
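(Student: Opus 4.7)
The plan is a direct calculation exploiting the fact that both the conjugating unitary and the operators being conjugated factor across qubits. Writing $\Pi_{\bar X} = (\eye - \bar X)/2$ and $\Pi_{\bar Z} = (\eye - \bar Z)/2$ with $\bar X = X^{\otimes k}$ and $\bar Z = Z^{\otimes k}$, and using the factorization $D^{\otimes k}(A_1\otimes\cdots\otimes A_k)D^{\dagger\otimes k} = (DA_1D^\dagger)\otimes\cdots\otimes(DA_kD^\dagger)$, the whole claim reduces to verifying the two single-qubit identities $DXD^\dagger = \had$ and $DZD^\dagger = -X\had X$ and then substituting into $(\eye - \cdot)/2$ after taking $k$-fold tensor powers.

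For those single-qubit identities, I would use that $D = \destab = \cos(\pi/8)\,\eye - i\sin(\pi/8)\,Y$ implements a rotation of the Bloch sphere by angle $\pi/4$ about the $Y$-axis, so conjugation by $D$ preserves the $X$--$Z$ plane. Either expanding $D$ as an explicit real $2\times 2$ matrix and multiplying out $DXD^\dagger$ and $DZD^\dagger$, or invoking the standard rotation formula $R_Y(\theta)PR_Y(\theta)^\dagger$ for $P\in\{X,Z\}$, produces each conjugate as an explicit $\tfrac{1}{\sqrt 2}(\pm X \pm Z)$ combination.

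Finally I would pattern-match the two results against the claimed targets. Since $\had = \tfrac{1}{\sqrt 2}(X+Z)$, one of the conjugates is immediately identified as $\had$. For the other, the short calculation $X\had X = \tfrac{1}{\sqrt 2}(X\cdot X\cdot X + X\cdot Z\cdot X) = \tfrac{1}{\sqrt 2}(X - Z)$ gives $-X\had X = \tfrac{1}{\sqrt 2}(Z - X)$, which matches the remaining conjugate. There is no substantive obstacle here beyond careful bookkeeping of the sign and orientation of the $Y$-axis rotation (and the placement of the daggers in $\Tilde\Pi_S$), so that $\bar X$ lands on $\had^{\otimes k}$ and $\bar Z$ lands on $(-X\had X)^{\otimes k}$ with the correct signs rather than being swapped or negated.
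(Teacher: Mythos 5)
Your overall strategy is exactly the paper's: everything factors across qubits, so the claim reduces to two single-qubit conjugation identities, which you verify by expanding $D=\cos(\pi/8)\,\eye-i\sin(\pi/8)\,Y$ and pattern-matching against $\had=\tfrac{1}{\sqrt 2}(X+Z)$ and $-X\had X=\tfrac{1}{\sqrt 2}(Z-X)$. That is the right reduction and the right kind of computation.

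However, the two identities you propose to verify, $DXD^\dagger=\had$ and $DZD^\dagger=-X\had X$, are both false, and the error is precisely the dagger placement that you yourself flagged as the bookkeeping hazard. Carrying out your computation with $D(\cdot)D^\dagger$ yields $DXD^\dagger=\tfrac{1}{\sqrt 2}(X-Z)=+X\had X$ and $DZD^\dagger=\tfrac{1}{\sqrt 2}(X+Z)=\had$: the two targets come out swapped, and with the wrong sign on the $X\had X$ term. The direction actually needed is the opposite one, $D^\dagger XD=\had$ and $D^\dagger ZD=-X\had X$, because the rotated Hamiltonian is $\dHn=D^{\dagger\otimes n}\Hn D^{\otimes n}$ and correspondingly $\Tilde{\Pi}_S$ is used throughout (e.g.\ in Lemma \ref{clm:lb-Xk-Zk}) as $D^{\dagger\otimes k}\Pi_S D^{\otimes k}$. (The displayed definition of $\Tilde{\Pi}_S$ in Section \ref{sec:NLCS-CSS} has the daggers in the other order, which may be what misled you, but the appendix computation and every downstream use fix the convention the other way.) Since the claim is an exact operator identity, the swap and the sign are not harmless; once the direction is corrected, your calculation goes through verbatim and coincides with the paper's proof.
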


These identities are derived in Appendix \ref{app:lb-Xk-Zk}. The local lower bounds will be a result of the following:
\begin{lemma}\label{lem:H-up}
If $k$ is odd, then for every $k$-qubit stabilizer state, $\ket{\eta}$, we have $\abs{\bra{\eta}\had^{\otimes k}\ket{\eta}}\leq \frac{1}{\sqrt{2}}$. On the other hand, if $k$ is even then there exists a $k$-qubit stabilizer state, $\ket{\eta_0}$, with $\bra{\eta_0}\had^{\otimes k}\ket{\eta_0}=1$.
\end{lemma}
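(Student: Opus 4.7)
The plan is to derive an integrality constraint on $\bra\eta\had^{\otimes k}\ket\eta$ via the Pauli expansion of $\ketbra\eta$ and to combine it with the standard quantization of pairwise overlaps between stabilizer states. Together these constraints force $|\bra\eta\had^{\otimes k}\ket\eta|\le 1/\sqrt{2}$ exactly when $k$ is odd.

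First I would use the identity $\ketbra{\eta}=2^{-k}\sum_{g\in G} g$ for $G=\stab(\ket\eta)$. A single-qubit calculation gives $\Tr[\had X]=\Tr[\had Z]=\sqrt{2}$ while $\Tr[\had\,\eye]=\Tr[\had Y]=0$, so writing each Hermitian $g\in G$ as $g = s(g)\bigotimes_i g^{(i)}$ for a sign $s(g)\in\{\pm 1\}$ and tensor factors $g^{(i)}\in\{\eye,X,Y,Z\}$, one has $\Tr[\had^{\otimes k} g] = s(g)\,2^{k/2}$ when every $g^{(i)}\in\{X,Z\}$ and $\Tr[\had^{\otimes k} g] = 0$ otherwise. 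Combining with $\bra\eta \had^{\otimes k}\ket\eta = \Tr[\had^{\otimes k}\ketbra{\eta}]$ yields
\[
\bra\eta\had^{\otimes k}\ket\eta \;=\; \frac{N}{2^{k/2}},\qquad N\in\ZZ,
\]
where $N$ is the signed count $\sum_{g\in G'} s(g)$ over the subset $G'\subseteq G$ of stabilizers supported only on $X$ and $Z$ factors.

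Second, since $\had^{\otimes k}$ is Clifford, $\had^{\otimes k}\ket\eta$ is itself a pure stabilizer state, and so $|\bra\eta\had^{\otimes k}\ket\eta|^{2}$ is the squared overlap between two pure $k$-qubit stabilizer states $\ket{\eta_1},\ket{\eta_2}$. Expanding $\ketbra{\eta_i} = 2^{-k}\sum_{g\in\stab(\ket{\eta_i})} g$ in the formula $|\bra{\eta_1}\ket{\eta_2}|^{2} = \Tr[\ketbra{\eta_1}\ketbra{\eta_2}]$ and observing that only pairs $(g_1,g_2)$ with $g_1g_2=\pm\eye$ contribute nontrivially, the contributing pairs form a coset of $\stab(\ket{\eta_1})\cap\stab(\ket{\eta_2})$; the resulting squared overlap lies in $\{0\}\cup\{2^{-s}:0\le s\le k\}$. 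Combined with the first step, $N^{2}=0$ or $N^{2}=2^{k-s}$, and since $N$ is an integer the exponent $k-s$ must be even. Hence $|N|\in\{0\}\cup\{2^{j} : 0\le j\le\lfloor k/2\rfloor\}$, so
\[
\bigl|\bra\eta\had^{\otimes k}\ket\eta\bigr| \;\in\; \{0\}\cup\bigl\{2^{j-k/2} : 0\le j\le \lfloor k/2\rfloor\bigr\}.
\]

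When $k$ is odd the largest admissible $j$ is $(k-1)/2$, yielding the bound $|\bra\eta\had^{\otimes k}\ket\eta|\le 2^{-1/2} = 1/\sqrt{2}$, as required. For the converse statement, take $\ket{\eta_0}=\ket{\Phi^+}^{\otimes k/2}$ with $\ket{\Phi^+}=(\ket{00}+\ket{11})/\sqrt{2}$; a direct two-qubit check shows $(\had\otimes\had)\ket{\Phi^+}=\ket{\Phi^+}$, hence $\had^{\otimes k}\ket{\eta_0}=\ket{\eta_0}$ and $\bra{\eta_0}\had^{\otimes k}\ket{\eta_0}=1$. The only delicate step is the parity-matching: the integrality of $N$ combines with the $2^{-s}$ quantization to forbid $k-s$ odd. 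This is immediate once both facts are in hand, but is the whole reason the odd-$k$ and even-$k$ cases separate.
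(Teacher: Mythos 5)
Your proof is correct, but it takes a genuinely different route from the paper's. The paper reduces the odd-$k$ bound to showing $\abs{\bra{\eta}\had^{\otimes k}\ket{\eta}}\neq 1$ and then invokes the cited dichotomy (overlap of two stabilizer states is either $1$ or at most $1/\sqrt2$); the non-unit overlap is established by a case analysis on $G=\stab(\ket\eta)$: either every stabilizer has an $\eye$ or $Y$ factor (trace vanishes), or some stabilizer $S$ is supported entirely on $\{X,Z\}$, in which case for odd $k$ the conjugate $\had^{\otimes k}S\had^{\otimes k}$ anticommutes with $S$, so the two stabilizer groups differ. You instead compute $\bra\eta\had^{\otimes k}\ket\eta=N/2^{k/2}$ with $N\in\ZZ$ directly from the Pauli expansion and combine this with the full quantization of stabilizer overlaps ($\abs{\langle\eta_1|\eta_2\rangle}^2\in\{0\}\cup\{2^{-s}\}$); the integrality of $N$ then forbids $k-s$ odd and yields the bound. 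Your approach buys more: it pins down the entire set of attainable values, and it isolates the odd/even dichotomy as a clean arithmetic obstruction ($2^{k/2}\notin\ZZ$ for odd $k$), whereas the paper's argument only certifies "not $1$." Two small remarks. First, your argument can be streamlined: integrality of $N$ alone already shows the overlap cannot equal $1$ when $k$ is odd (that would force $N=\pm2^{k/2}\notin\ZZ$), so the weak dichotomy the paper cites suffices and the full quantization is not needed. Second, your justification of the quantization fact is slightly imprecise — the contributing pairs $(g_1,g_2)$ with $g_1g_2=\pm\eye$ are indexed by a \emph{subgroup} $K\leq\stab(\ket{\eta_1})$ containing $\stab(\ket{\eta_1})\cap\stab(\ket{\eta_2})$ with index $1$ or $2$, not a coset, and the signed sum vanishes unless $K$ equals the intersection — but this is a standard fact (it is the full form of the result the paper cites from \cite{GMC17}) and the conclusion you draw from it is correct. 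The even-$k$ witness $\ket{\Phi^+}^{\otimes k/2}$ is the same as the paper's.
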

The proof will use the following result on the geometry of stabilizer states:
\begin{fact}[Corollary 3 of \cite{GMC17}]\label{fact:stab-geometry}
Let $\ket{\zeta},\ket{\xi}$ be two stabilizer states. If $\abs{\bra{\zeta}\ket{\xi}}\neq 1$, then $\abs{\bra{\zeta}\ket{\xi}}\leq\frac{1}{\sqrt{2}}$.
\end{fact}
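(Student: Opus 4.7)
The plan is to compute $|\langle \zeta | \xi \rangle|^2$ directly from the Pauli-sum form of the two stabilizer projectors and extract the claimed dichotomy from the group structure of their overlap. Writing $\ketbra{\zeta} = \frac{1}{2^n}\sum_{g \in G_\zeta} g$ and $\ketbra{\xi} = \frac{1}{2^n}\sum_{h \in G_\xi} h$, I would expand
\[
|\langle\zeta|\xi\rangle|^2 \;=\; \operatorname{Tr}(\ketbra{\zeta}\ketbra{\xi}) \;=\; \frac{1}{4^n}\sum_{g \in G_\zeta,\, h \in G_\xi}\operatorname{Tr}(gh).
\]
Since $\operatorname{Tr}(P) = 0$ for any (possibly phased) non-identity Pauli string, and since Hermiticity of $g, h$ rules out $gh = \pm i\,\eye$ (if $P_g P_h \propto \eye$ then a single-qubit analysis forces $P_g = P_h$), only pairs with $gh = \pm\eye$, i.e.\ $g = \pm h$, survive. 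Setting $S_+ = G_\zeta \cap G_\xi$ and $S_- = \{g \in G_\zeta : -g \in G_\xi\}$, this collapses the trace to $|\langle\zeta|\xi\rangle|^2 = (|S_+| - |S_-|)/2^n$.

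Next I would study the joint set $H = S_+ \cup S_-$. Disjointness of $S_+$ and $S_-$ follows from $-\eye \notin G_\xi$ (otherwise $g\cdot(-g) = -\eye \in G_\xi$), and closure under multiplication is direct: for $g_1, g_2 \in H$ each of $\pm g_i$ lies in $G_\xi$, so some $\pm g_1 g_2 \in G_\xi$, i.e.\ $g_1 g_2 \in H$. Thus $H$ is a subgroup of $G_\zeta$ and so $|H| = 2^\ell$ for some $\ell \leq n$. The key structural step is then to show that either $S_- = \emptyset$ or $|S_+| = |S_-|$. I would do this by defining a character $\phi: H \to \{\pm 1\}$ via $\phi(g) = +1$ if $g \in S_+$ and $\phi(g) = -1$ if $g \in S_-$, and verifying directly (the three cases $++$, $+-$, $--$) that $\phi$ is a group homomorphism. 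Its kernel equals $S_+$, so $\phi$ is either trivial, forcing $S_- = \emptyset$ and $|S_+| = |H|$, or surjective, forcing $|S_+| = |S_-| = |H|/2$.

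Combining these two steps yields $|\langle\zeta|\xi\rangle|^2 \in \{0,\; |H|/2^n\} = \{0,\; 2^{\ell - n}\}$. If $|\langle\zeta|\xi\rangle| \neq 1$, then the case $|H| = 2^n$ is excluded, forcing $\ell \leq n-1$ and hence $|\langle\zeta|\xi\rangle|^2 \leq 1/2$, which gives the claimed bound. The main obstacle is the homomorphism check for $\phi$: the multiplicativity of the ``which sign lies in $G_\xi$'' map is not a priori obvious and relies crucially on the abelian structure of the stabilizer groups together with $-\eye \notin G_\xi$. Once this dichotomy---perfect cancellation versus no cancellation at all---is in hand, the bound drops out of the trace calculation and the Lagrange-style count of $|H|$.
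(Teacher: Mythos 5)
Your argument is correct. Note that the paper does not actually prove this statement: it is imported verbatim as Corollary~3 of \cite{GMC17}, so there is no internal proof to compare against. Your derivation is a clean, self-contained substitute, and in fact it establishes the stronger dichotomy underlying the cited corollary, namely that $\abs{\bra{\zeta}\ket{\xi}}^2$ is either $0$ or exactly $2^{\ell-n}$ for some integer $\ell\leq n$. Each step checks out: since $-\eye\notin G_\xi$, every stabilizer element squares to $+\eye$ and is Hermitian, so $\Tr(gh)\neq 0$ forces $h=\pm g$ and the trace collapses to $(\abs{S_+}-\abs{S_-})/2^n$; the set $H=S_+\cup S_-$ is a subgroup of $G_\zeta$; and the sign map $\phi$ is a well-defined homomorphism precisely because the disjointness of $S_+$ and $S_-$ makes the sign $\epsilon$ with $\epsilon g\in G_\xi$ unique. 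One sentence of your write-up is slightly loose: you say that $\abs{\bra{\zeta}\ket{\xi}}\neq 1$ ``excludes the case $\abs{H}=2^n$,'' but that case is not excluded when $\phi$ is surjective (e.g.\ $\ket{0}$ versus $\ket{1}$ has $\abs{H}=2^n$ and overlap $0$). The correct reading, which your argument supports, is that a \emph{nonzero} overlap forces $\phi$ trivial and equals $2^{\ell-n}$, so if it is not $1$ then $\ell\leq n-1$ and the bound $1/\sqrt{2}$ follows. This is a cosmetic rephrasing, not a gap.
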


\begin{proof}[Proof of Lemma \ref{lem:H-up}]
Since $\had$ is a Clifford gate, $\had^{\otimes k}\ket{\eta}$ is a stabilizer state. We will show that $\abs{\bra{\eta}\had^{\otimes k}\ket{\eta}}\neq 1$ in the case of odd $k$, which by Fact \ref{fact:stab-geometry} will imply the bound.

Recall that $\ketbra{\eta}=\frac{1}{|G|}\sum_{g\in G}g$, where $G\equiv\stab(\ket{\eta})$. We have two cases:

\begin{enumerate}[label=\textbf{\Roman*.}]
    \item (Every $S\in G$ contains an $\eye$ or a $Y$ in some position) In this case, we calculate
    \begin{align*}
        \bra{\eta}\had^{\otimes k}\ket{\eta} &= \Tr[\ketbra{\eta}\had^{\otimes k}], \\
                            &= \frac{1}{|G|}\sum_{g\in G}\Tr[g\had^{\otimes k}], \\
                            &= \frac{1}{|G|}\sum_{g\in G}\prod_{i\in [k]} \Tr[g_i\had], \\
                            &= 0,
    \end{align*}
    where the last line follows since $g_j\in\{\eye,Y\}$ for some $j$, and $\Tr[\had]=\Tr[Y\had]=0$.

    \item (There is an $S\in G$ which consists of only $X$'s and $Z$'s) Consider the case when $k$ is odd. Since $\wt(S)=k$, $S$ contains either (1) an odd number of $X$'s and an even number of $Z$'s or (2) an even number of $X$'s and an odd number of $Z$'s. We focus on the former situation; the latter is similar.
    
    Note that $\abs{\bra{\eta}\had^{\otimes k}\ket{\eta}}=1$ if and only if $\had^{\otimes k}\ket{\eta}$ and $\ket{\eta}$ have the same stabilizer group. Since $S$ stabilizes $\ket{\eta}$, $\had^{\otimes k} S \had^{\otimes k}$ stabilizes $\had^{\otimes k}\ket{\eta}$. We know how $\had$ conjugates Pauli operators: $X\mapsto Z$, $Z\mapsto X$, and $Y\mapsto -Y$. By assumption, $S$ has an odd number of $X$'s and an even number of $Z$'s, so $\had^{\otimes k} S \had^{\otimes k}$ will have an even number of $X$'s and an odd number of $Z$'s. Therefore, we have that $S\cdot(\had^{\otimes k}S\had^{\otimes k})=-(\had^{\otimes k}S\had^{\otimes k})\cdot S$, which implies $S$ and $\had^{\otimes k}S\had^{\otimes k}$ cannot both be elements of the same stabilizer group. Hence, $\stab(\ket{\eta})\neq\stab(\had^{\otimes k}\ket{\eta})$ and $\abs{\bra{\eta}\had^{\otimes k}\ket{\eta}}\neq 1$.
\end{enumerate}
Since in both cases $\abs{\bra{\eta}\had^{\otimes k}\ket{\eta}}\neq 1$, by Fact \ref{fact:stab-geometry} we must have that $\abs{\bra{\eta}\had^{\otimes k}\ket{\eta}}\leq \frac{1}{\sqrt{2}}$ when $k$ is odd. We note that the above proof will not work for even $k$, since it can be the case that all stabilizers have an even number of $X$'s and $Z$'s (or both odd). In this case $\had^{\otimes k}$ will be in the normalizer of $G$, and the two stabilizer groups may be equal. 

We can easily find an example with even $k$ where no non-trivial upper bound can be found. Note that $\ket{\Phi^+}\equiv \frac{1}{\sqrt{2}}(\ket{00}+\ket{11})$ is a $+1$ eigenstate of $\had^{\otimes 2}$, so for even $k$ define $\ket{\eta_0}\equiv \ket{\Phi^+}^{\otimes k/2}$.
\end{proof}

We can now prove the local lower bound on odd-weight CSS generators.

\begin{lemma}\label{clm:lb-Xk-Zk}
    For every $k$-qubit stabilizer state, $\ket{\eta}$, $\bra{\eta}\Tilde{\Pi}_{\bar X}\ket{\eta}\geq c_k$ and $\bra{\eta}\Tilde{\Pi}_{\bar Z}\ket{\eta}\geq c_k$, where $c_k = 0$ if $k$ is even and $c_k = \sin^2(\frac{\pi}{8})$ if $k$ is odd.
\end{lemma}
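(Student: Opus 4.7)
The plan is to reduce both bounds to the bound on $\bra{\eta}\had^{\otimes k}\ket{\eta}$ supplied by Lemma~\ref{lem:H-up}, using Claim~\ref{lem:destabilize-XZ} to rewrite the rotated projectors in terms of tensor powers of a single-qubit Clifford. For even $k$, the claim only asks for $c_k=0$, which is immediate from positivity of the projector $\Tilde{\Pi}_S$, so from here on I assume $k$ is odd.

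For $\Tilde{\Pi}_{\bar X}$, Claim~\ref{lem:destabilize-XZ} gives
\[
\bra{\eta}\Tilde{\Pi}_{\bar X}\ket{\eta} \;=\; \tfrac{1}{2}\bigl(1-\bra{\eta}\had^{\otimes k}\ket{\eta}\bigr),
\]
and Lemma~\ref{lem:H-up} bounds the inner product by $1/\sqrt{2}$ in absolute value when $k$ is odd. Hence the quantity is at least $\tfrac{1}{2}(1-\tfrac{1}{\sqrt{2}})$, which equals $\sin^2(\pi/8)$ by the half-angle identity $\sin^2(\pi/8)=(1-\cos(\pi/4))/2$.

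For $\Tilde{\Pi}_{\bar Z}$, Claim~\ref{lem:destabilize-XZ} instead gives $\Tilde{\Pi}_{\bar Z}=\tfrac{1}{2}(\eye-(-X\had X)^{\otimes k})$. Since $k$ is odd the sign collapses and
\[
\bra{\eta}\Tilde{\Pi}_{\bar Z}\ket{\eta} \;=\; \tfrac{1}{2}\bigl(1+\bra{\eta}(X\had X)^{\otimes k}\ket{\eta}\bigr).
\]
I would then reduce to the $\bar X$ analysis by absorbing the $X^{\otimes k}$ into the state: set $\ket{\eta'}\equiv X^{\otimes k}\ket{\eta}$, which is again a $k$-qubit stabilizer state because $X$ is Clifford, and observe $\bra{\eta}(X\had X)^{\otimes k}\ket{\eta}=\bra{\eta'}\had^{\otimes k}\ket{\eta'}$. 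Applying Lemma~\ref{lem:H-up} to $\ket{\eta'}$ yields $\bra{\eta}(X\had X)^{\otimes k}\ket{\eta}\geq -1/\sqrt{2}$, so again the expectation is at least $\sin^2(\pi/8)$.

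There is no substantive obstacle here: the lemma is a two-line corollary of Claim~\ref{lem:destabilize-XZ} and Lemma~\ref{lem:H-up}. The only subtlety worth flagging is in the $\bar Z$ case — one must absorb an $X^{\otimes k}$ into $\ket{\eta}$ so that the already-proven estimate on $\had^{\otimes k}$ can be reused, rather than proving a separate $X\had X$-version of Lemma~\ref{lem:H-up}. The tightness of the constant $\sin^2(\pi/8)$ is a direct consequence of the Clifford geometry (Fact~\ref{fact:stab-geometry}) already invoked in Lemma~\ref{lem:H-up}.
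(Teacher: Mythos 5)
Your proof is correct and follows essentially the same route as the paper: rewrite the rotated projectors via Claim~\ref{lem:destabilize-XZ}, invoke Lemma~\ref{lem:H-up}, and in the $\bar Z$ case absorb $X^{\otimes k}$ into the stabilizer state (the paper calls it $\ket{\zeta}=X^{\otimes k}\ket{\eta}$) before reapplying the bound. The only cosmetic difference is that you collapse the $(-1)^k$ sign up front for odd $k$, whereas the paper carries it symbolically; the content is identical.
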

\begin{proof}
Let $\ket{\eta}$ be a $k$-qubit stabilizer state. We first consider $\bra{\eta}\Tilde{\Pi}_{\bar X}\ket{\eta}$:
\begin{align}
    \bra{\eta}\Tilde{\Pi}_{\bar X}\ket{\eta} &\equiv \bra{\eta}D^{\dagger\otimes k}\left(\frac{\eye-\bar X}{2}\right)D^{\otimes k}\ket{\eta},\\
    \text{\color{since}(By Lemma \ref{lem:destabilize-XZ})}\hspace{3em}&= \bra{\eta}\frac{\eye-\had^{\otimes k}}{2}\ket{\eta} \\
    &=\frac{1}{2}\left(1-\bra{\eta} \had^{\otimes k}\ket{\eta}\right).
\end{align}
The bound follows from Lemma \ref{lem:H-up}, since $\sin^2(\frac{\pi}{8})=\frac{1}{2}(1-\frac{1}{\sqrt{2}})$.

For $\bra{\eta}\Tilde{\Pi}_{\bar Z}\ket{\eta}$, we have:
\begin{align}
    \bra{\eta}\Tilde{\Pi}_{\bar Z}\ket{\eta} &\equiv \bra{\eta}D^{\dagger\otimes k}\left(\frac{\eye-\bar Z}{2}\right)D^{\otimes k}\ket{\eta}, \\
    \text{\color{since}(By Lemma \ref{lem:destabilize-XZ})}\hspace{3em}&= \bra{\eta}\frac{\eye-(-X\had X)^{\otimes k}}{2}\ket{\eta} \\
    &=\frac{1}{2}\left(1-\bra{\eta} (-X\had X)^{\otimes k}\ket{\eta}\right),\\
    &=\frac{1}{2}\left(1-(-1)^k\bra{\zeta} \had^{\otimes k}\ket{\zeta}\right)
\end{align}
where $\ket{\zeta}\equiv X^{\otimes k}\ket{\eta}$ is another stabilizer state since $X=X^\dagger$ is in the Clifford group. The bound follows again from Lemma \ref{lem:H-up}.

\end{proof}

Lemma \ref{clm:lb-Xk-Zk} implies the following lower bound for $n$-qubit stabilizer states.

\begin{lemma}\label{lem:lb-stab-n-pure}
Let $S\in\mcP_n$ be a tensor product of only Pauli $X$ and $\eye$ or only Pauli $Z$ and $\eye$. Denote $k=\wt(S)$. For every $n$-qubit stabilizer state, $\ket{\eta}$, $\bra{\eta}\Tilde{\Pi}_S\ket{\eta}\geq c_k$.
\end{lemma}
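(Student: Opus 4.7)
The plan is to reduce the $n$-qubit bound to the $k$-qubit bound already established in Lemma \ref{clm:lb-Xk-Zk} by passing to the reduced state on $\N(S)$. Since $\Tilde{\Pi}_S$ is a $D$-rotated Pauli projector and $D$ is a single-qubit gate, $\Tilde{\Pi}_S$ inherits the locality of $\Pi_S$: it acts non-trivially only on the $k$ qubits $\N(S)$, and satisfies $\Tilde{\Pi}_S = \Tilde{\Pi}_S|_{\N(S)} \otimes \eye_{[n]\setminus\N(S)}$. This is stated explicitly in the paragraph introducing $\Tilde{\Pi}_S$.

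First I would rewrite the expectation value using the partial trace:
\begin{equation*}
\bra{\eta}\Tilde{\Pi}_S\ket{\eta} = \Tr\bigl[\ketbra{\eta}\,(\Tilde{\Pi}_S|_{\N(S)} \otimes \eye_{[n]\setminus\N(S)})\bigr] = \Tr\bigl[\eta_{\N(S)}\,\Tilde{\Pi}_S|_{\N(S)}\bigr],
\end{equation*}
where $\eta_{\N(S)} \equiv \Tr_{-\N(S)}[\ketbra{\eta}]$ is the reduced state of $\ket{\eta}$ on the $k$ qubits in $\N(S)$.

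Next I would invoke Claim \ref{clm:Clifford-decomp}: since $\ket{\eta}$ is a pure stabilizer (equivalently Clifford) state on $n$ qubits, its reduced state $\eta_{\N(S)}$ is a mixed stabilizer state on $k$ qubits, hence a convex combination of $k$-qubit pure stabilizer states, $\eta_{\N(S)} = \sum_j p_j \ketbra{\zeta_j}$ with $p_j \geq 0$, $\sum_j p_j = 1$, and each $\ket{\zeta_j}$ a pure stabilizer state on $k$ qubits.

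Finally, by linearity of the trace and Lemma \ref{clm:lb-Xk-Zk} applied to each $\ket{\zeta_j}$,
\begin{equation*}
\bra{\eta}\Tilde{\Pi}_S\ket{\eta} = \sum_j p_j \,\bra{\zeta_j}\Tilde{\Pi}_S|_{\N(S)}\ket{\zeta_j} \geq \sum_j p_j \cdot c_k = c_k,
\end{equation*}
which is the desired bound. There is no real obstacle here; the entire argument rides on two facts already established earlier in the paper, namely the single-qubit nature of $D$ (which ensures $\Tilde{\Pi}_S$ remains $k$-local, so only the $k$-qubit marginal matters) and the closure of stabilizer states under taking reduced density matrices. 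The only thing to verify while writing it up is that $\Tilde{\Pi}_S|_{\N(S)}$ matches the operator $\Tilde{\Pi}_{\bar{X}}$ or $\Tilde{\Pi}_{\bar{Z}}$ from Lemma \ref{clm:lb-Xk-Zk} up to a permutation of qubits, which is immediate from the assumption that $S$ is a pure-$X$ or pure-$Z$ tensor product of weight $k$.
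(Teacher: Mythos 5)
Your proposal is correct and follows exactly the paper's own argument: reduce to the $k$-qubit marginal via $\Tilde{\Pi}_S = \Tilde{\Pi}_S\mid_{\N(S)}\otimes\eye$, decompose $\eta_{\N(S)}$ into pure stabilizer states via Claim \ref{clm:Clifford-decomp}, and apply Lemma \ref{clm:lb-Xk-Zk} termwise. No differences worth noting.
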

\begin{proof}
Recall that $\Tilde{\Pi}_S = {\Tilde{\Pi}_S}\mid_{\N(S)}\otimes \eye_{[n]\setminus \N(S)}$, so
\begin{align}
    \bra{\eta}\Tilde{\Pi}_S\ket{\eta} &= \Tr[{\eta_{\N(S)} {\Tilde{\Pi}_S}\mid_{\N(S)}}],
\end{align}
where $\eta_{\N(S)}\equiv \Tr_{-{\N(S)}}[\ketbra{\eta}]$ is the reduced state of $\ket{\eta}$ on ${\N(S)}\subset [n]$. Since $\eta_{\N(S)}$ is the reduced state of a Clifford state, by Claim \ref{clm:Clifford-decomp} there are pure stabilizer states on $k$ qubits, $\{\ket{\eta_j}\}$ such that $\eta_{\N(S)} = \sum_j p_j\ketbra{\eta_j}$. The lower bound follows by applying Lemma \ref{clm:lb-Xk-Zk} to each $\bra{\eta_j} {\Tilde{\Pi}_S}\mid_{\N(S)}\ket{\eta_j}$.
\end{proof}

We can now prove Theorem \ref{thm:CSS-NLCS}.

\oddWeightCSS*
\begin{proof}
By definition, $\Tilde{\ham}_{\mcS_n} = \frac{1}{\abs{\mcS_n}}\sum_{S\in\mcS_n} \Tilde{\Pi}_S$ where $\Tilde{\Pi}_S$ is the $D$-rotated projector associated with $S\in\mcS_n$. Let $\psi$ be a stabilizer state on $n$ qubits. We will directly lower-bound the energy of $\psi$.

By definition, $\psi = \sum_j p_j\ketbra{\varphi_j}$, where each $\ket{\varphi_j}$ is a pure stabilizer state on $n$ qubits. We have:
\begin{align}
    \Tr[\psi \Tilde{\ham}_{\mcS_n}] &= \sum_j p_j \bra{\varphi_j}\Tilde{\ham}_{\mcS_n}\ket{\varphi_j}, \\
        &= \frac{1}{\abs{\mcS_n}}\sum_{S\in\mcS_n} \sum_j p_j \bra{\varphi_j}\Tilde{\Pi}_{S}\ket{\varphi_j}, \\
        \text{\color{since}(By Lemma \ref{lem:lb-stab-n-pure})\hspace{3em}}&\geq \frac{1}{\abs{\mcS_n}}\sum_{S\in\mcS_n} c_{\wt(S)} \sum_j p_j, \\
        \text{\color{since}(Definition of $c_k$)\hspace{3em}}&= \frac{1}{\abs{\mcS_n}}\sum_{S\in\mcS_n: \wt(S)\text{, odd}} \sin^2\Big(\frac{\pi}{8}\Big), \\
        &= \alpha \sin^2\Big(\frac{\pi}{8}\Big),
\end{align}
where the last line follows by assumption $\alpha\abs{\mcS_n}$ terms of $\mcS_n$ have odd weight. Since this holds for all stabilizer states, $\psi$, we have that $\{\Tilde{\ham}_{\mcS_n}\}$ is $\epsilon$-NLCS with $\epsilon=\alpha\sin^2(\frac{\pi}{8})=\Omega(1)$.
\end{proof}

We now turn to our main result, the existence of a simultaneous NLTS and NLCS family of local Hamiltonians. Recall the NLTS result of \cite{ABN22}:
\begin{theorem*}[Theorem 5 of \cite{ABN22}, simplified]
There exists a constant $\epsilon_0>0$ and an explicit family of CSS Hamiltonians associated with a family of quantum LDPC codes, $\{\langle \mcS_n\rangle\}$, which is $\epsilon_0$-NLTS.
\end{theorem*}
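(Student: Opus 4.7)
The plan is to build the family $\{\ham_{\mcS_n}\}$ from an asymptotically good family of quantum LDPC codes and then establish the NLTS lower bound via a local entropy / clustering argument. First I would instantiate $\langle\mcS_n\rangle$ as the stabilizer group of the quantum Tanner codes of Leverrier--Zémor \cite{LZ22}, which are CSS codes of linear rate $\Theta(n)$, linear distance $\Theta(n)$, and constant stabilizer weight $\wt(\mcS_n)=\mcO(1)$ with each qubit touched by only $\mcO(1)$ generators. These properties immediately make $\{\ham_{\mcS_n}\}$ a family of CSS Hamiltonians in the sense of Section \ref{sec:props-Hamiltonians}, with $E_0=0$ attained on the codespace.

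Next, for the NLTS lower bound I would adopt the circuit-to-Hamiltonian strategy of \cite{ABN22}. The starting point is that any pure state $\ket{\psi}$ with $\bra{\psi}\ham_{\mcS_n}\ket{\psi}\leq \epsilon$ has only a small expected number of violated stabilizers, so $\ket{\psi}$ is close in trace distance (by a standard Markov / union-bound argument over the $\mcO(1)$-local generators) to some state supported on the $\leq \delta n$-error-ball around the codespace. Because the code has linear distance, this error-ball decomposes into exponentially many disjoint "logical clusters" indexed by codewords of the underlying classical Tanner code. Thus any low-energy $\ket{\psi}$ induces a classical distribution over these clusters, and the conditional reduced states of $\ket{\psi}$ on sufficiently large local windows must carry nontrivial information about the cluster label.

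The third step, which is the technical heart of the argument, is to show that a trivial state cannot produce such clustered reduced states. If $\ket{\psi} = C\ketzero$ for constant-depth $C$, then for any set $A\subseteq[n]$ of qubits the reduced state $\psi_A$ depends only on the $\mcO(|A|)$ gates in the reverse-lightcone of $A$ (cf.\ Figure \ref{fig:rotated-Hamiltonian}). Hence the local marginals are essentially determined by a constant-sized computation and cannot track the exponentially many cluster labels dictated by the code's distance and the robust soundness of quantum Tanner codes. Formally, one combines the local testability / small-set expansion of the Tanner complex with an entropy-vs-lightcone bound to derive that the local entropies of $\psi_A$ are too small to be consistent with being close to a nontrivial superposition of many codewords, contradicting the low-energy assumption.

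The main obstacle, and the bulk of the work in \cite{ABN22}, is precisely step three: packaging the small-set expansion of the Tanner complex into a quantitative local entropy statement and then marrying it with the lightcone argument for constant-depth circuits. Everything else (constructing the CSS Hamiltonian, verifying locality/LDPC structure, the Markov bound on violated checks) is routine once the Leverrier--Zémor codes are in hand; the subtle part is producing a universal $\epsilon_0>0$ that is independent of $n$ even though one is comparing a constant-depth circuit against a code whose distance and rate scale linearly.
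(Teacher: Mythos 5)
This statement is not proven in the paper at all: it is Theorem 5 of \cite{ABN22}, quoted verbatim (in simplified form) and used as a black box. The paper's entire ``proof'' is the citation, so there is nothing internal to compare your argument against; what you have written is an attempted reconstruction of the Anshu--Breuckmann--Nirkhe proof itself.

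As a reconstruction, your outline gets the skeleton right (good quantum Tanner codes, low energy implies proximity to a clustered set, shallow circuits cannot produce such clustering) but misstates the technical heart in a way that would not go through as written. The actual argument in \cite{ABN22} does not reason about local reduced states $\psi_A$ or an ``entropy-vs-lightcone'' bound on marginals; constant-depth circuits can certainly produce locally maximally mixed marginals, so small local entropy is not the obstruction. Instead, one measures the low-energy state in the standard basis (and separately in the Hadamard basis) to obtain a classical distribution $p$ on $\{0,1\}^n$, shows via the \emph{local testability} (small-set boundary and coboundary expansion) of the codes --- not merely their linear distance --- that $p$ is essentially supported on a union of clusters that are pairwise $\Omega(n)$-separated in Hamming distance with no single cluster carrying almost all the mass in both bases, and then invokes a separation/isoperimetry lower bound for shallow circuits: any distribution approximately generated by a depth-$d$ circuit that places mass $\geq\mu$ on two sets at Hamming distance $t$ forces $d=\Omega(\log(t^2/(n\log(1/\mu))))$. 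Your phrase ``exponentially many disjoint logical clusters indexed by codewords of the underlying classical Tanner code'' is also off: the clusters correspond to cosets of $\mcC_Z^\perp$ inside the relevant error ball. So if you intend to present this as a proof rather than a citation, the missing ingredients are (i) the local testability of the quantum Tanner codes, which is a separate theorem and is strictly stronger than distance, and (ii) the correct form of the shallow-circuit lower bound, which lives at the level of global measurement distributions rather than local density matrices.
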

In order to use our Theorem \ref{thm:CSS-NLCS}, we require that a constant fraction of the stabilizer generators in $\mcS_n$ have an odd weight. It is not immediately clear that this would be true for the quantum Tanner codes from \cite{LZ22}. However, Section \ref{app:odd-weight} is dedicated to proving the following result.

\begin{restatable}[]{claim}{oddWeight}\label{clm:NLTS-odd-weight}
There exists an explicit family of CSS codes satisfying the conditions of Theorem 5 of \cite{ABN22} such that every stabilizer-generator has odd weight.
\end{restatable}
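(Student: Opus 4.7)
The plan is to explicitly instantiate the quantum Tanner code construction of \cite{LZ22} used in \cite{ABN22} so that every stabilizer generator has odd weight, while preserving all hypotheses of Theorem~5 of \cite{ABN22}. The key idea is to reduce the odd-weight requirement to a purely classical parity condition on the local codes, which can then be secured by an appropriate choice of local codes.

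First I would recall that in the quantum Tanner construction the qubits sit on the squares of a Cayley left-right complex, each vertex $v$ has a star of $\Delta_A\Delta_B$ incident squares indexed naturally by $[\Delta_A]\times[\Delta_B]$, and the $X$- (resp.\ $Z$-) stabilizer generators at $v$ form a basis of a local parity-check space built from fixed classical local codes $C_A,C_B$ of lengths $\Delta_A,\Delta_B$. A natural choice of basis consists of vectors of the form $c\otimes e_j$ or $e_i\otimes c'$, where $c,c'$ range over bases of $C_A^\perp, C_B^\perp$ for $X$-checks and over bases of $C_A, C_B$ for $Z$-checks, and $e_i, e_j$ are standard unit vectors; the weight of such a basis element is exactly $\wt(c)$ or $\wt(c')$. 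Thus every stabilizer generator has odd weight provided that each of $C_A, C_B, C_A^\perp, C_B^\perp$ admits a basis of odd-weight vectors. Since a binary linear code $L$ admits such a basis iff $L$ contains any odd-weight vector iff $\mathbf 1\notin L^\perp$, the requirement collapses to the single parity condition that none of $C_A, C_B, C_A^\perp, C_B^\perp$ contains the all-ones vector. Given this, odd-weight bases can be constructed explicitly, for instance by starting from an arbitrary basis and replacing any even-weight vector by its sum with a fixed odd-weight codeword.

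Finally I would exhibit explicit local codes $C_A, C_B$ satisfying both this parity condition and the robustness / product-expansion hypothesis required by the NLTS argument of \cite{ABN22}. I expect this to be the main obstacle, since robustness is a delicate property established only for certain local-code families and one must avoid interference with the parity condition. I would address this either by a probabilistic argument—a uniformly random pair of good local codes satisfies all four parity conditions with constant probability, and robustness holds with positive probability for random codes of the relevant type, so a valid explicit pair exists by a union bound and derandomization—or by a direct one-coordinate modification of the codes used in \cite{LZ22} (e.g., an extension by an appropriate parity coordinate) that preserves the robustness constants up to bounded factors while forcing $\mathbf 1\notin C_A$, $\mathbf 1\notin C_B$, $\mathbf 1\notin C_A^\perp$, and $\mathbf 1\notin C_B^\perp$, yielding the desired family.
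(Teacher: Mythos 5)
Your proposal follows essentially the same route as the paper: reduce the odd-weight requirement to the purely classical condition that each of the two local codes and their duals contains an odd-weight codeword (equivalently, the all-ones vector lies in none of the four dual codes), repair a basis by adding a distinguished odd-weight vector to every even-weight basis vector, and establish existence via the probabilistic method over the random local codes of \cite{LZ22}, using that the parity conditions and the goodness/robustness conditions each hold with high probability so they hold simultaneously. One structural inaccuracy: the stabilizer generators at a vertex are not of the form $c\otimes e_j$ or $e_i\otimes c'$; they are the rows of the parity-check matrix $h_0\otimes g_1$ (resp.\ $g_0\otimes h_1$) of the dual tensor local code, i.e.\ tensor products $a\otimes b$ of codewords of the \emph{duals} $\local_0^\perp,\local_1^\perp$ (resp.\ $\local_0,\local_1$), so a generator has weight $\wt(a)\cdot\wt(b)$ supported on a full $\wt(a)\times\wt(b)$ sub-grid rather than a single row or column. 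What you wrote down is a generating set of the dual tensor code itself, not of its dual. The error is harmless for your conclusion, since the product of two odd numbers is odd, so your sufficient condition (odd-weight bases for all of $\local_0,\local_1,\local_0^\perp,\local_1^\perp$) is exactly the right one. Finally, the probabilistic facts you assert without proof—that a uniformly random $\lfloor\rho d\rfloor\times d$ matrix over $\FF_2$ has an odd-weight row and that the all-ones vector is not in its row space, each with probability $1-o(1)$—do need the short arguments the paper supplies (the second via an inductive bound $\Pr[1^d\in\im A^T]\leq (2^r-1)/2^d$); with those in hand your first suggested route (union bound against the probability-$1$ statement of \cite{LZ22}) is precisely the paper's proof, and the one-coordinate modification alternative is unnecessary.
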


With Claim \ref{clm:NLTS-odd-weight}, we are now prepared to prove the main result of our paper.

\begin{theorem}\label{thm:NLTS-NLCS}
Let $\{\Hn\}$ be the family of CSS Hamiltonians from Claim \ref{clm:NLTS-odd-weight}. The $D$-rotated version, $\{\dHn\}$, is a family of simultaneous NLTS and NLCS local Hamiltonians.
\end{theorem}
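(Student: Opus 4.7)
The proof is essentially a direct combination of three results already established earlier in the paper, so the plan is to simply invoke each one in turn.

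The plan for the NLCS half is to apply Theorem \ref{thm:CSS-NLCS} to the family $\{\Hn\}$ guaranteed by Claim \ref{clm:NLTS-odd-weight}. Since every stabilizer generator in $\mcS_n$ has odd weight, the odd-weight fraction is $\alpha = 1$, which is certainly a (positive) constant. Theorem \ref{thm:CSS-NLCS} then immediately yields that $\{\dHn\}$ is $\epsilon_1$-NLCS for $\epsilon_1 = \sin^2(\pi/8)$.

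The plan for the NLTS half is to apply Lemma \ref{lem:NLTS-phase}. By Claim \ref{clm:NLTS-odd-weight}, the family $\{\Hn\}$ still satisfies the hypotheses of the NLTS result of \cite{ABN22}, so $\{\Hn\}$ is $\epsilon_0$-NLTS for some constant $\epsilon_0 > 0$. The rotation is by $D^{\otimes n}$, which is a depth-$1$ circuit (a tensor product of single-qubit gates) and therefore certainly constant-depth. Lemma \ref{lem:NLTS-phase} then gives that $\{\dHn\}$ is also $\epsilon_0$-NLTS.

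Putting these together, $\{\dHn\}$ is simultaneously $\epsilon_0$-NLTS and $\epsilon_1$-NLCS, which is the claim. There is no genuine obstacle at this stage: all of the difficulty has been pushed into the three ingredients (Theorem \ref{thm:CSS-NLCS}, Lemma \ref{lem:NLTS-phase}, and Claim \ref{clm:NLTS-odd-weight}). The only sanity check worth making explicit is that the two transformations are compatible — namely, that the very same $D$-rotation used in Theorem \ref{thm:CSS-NLCS} to obtain NLCS is the constant-depth circuit fed into Lemma \ref{lem:NLTS-phase} to preserve NLTS — which is immediate because $D^{\otimes n}$ is a depth-$1$ family. The deeper work of verifying that an explicit NLTS-suitable CSS code with all odd-weight checks exists is isolated into Claim \ref{clm:NLTS-odd-weight}, which I would treat as a black box here.
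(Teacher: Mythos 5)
Your proposal is correct and matches the paper's proof essentially verbatim: both halves are obtained by invoking Theorem \ref{thm:CSS-NLCS} (with $\alpha=1$ from Claim \ref{clm:NLTS-odd-weight}) and Lemma \ref{lem:NLTS-phase} (with the depth-$1$ circuit $D^{\otimes n}$), exactly as the paper does. The only cosmetic difference is that the paper ends by setting $\epsilon'=\min\{\epsilon_0,\epsilon_1\}$ to get a single constant for both properties.
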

\begin{proof}
Since $\{\Hn\}$ satisfies the conditions of Theorem 5 of \cite{ABN22} it is a valid local Hamiltonian, and it is $\epsilon_0$-NLTS for some constant $\epsilon_0>0$. Since $D\n$ is a depth-$\mcO(1)$ circuit by Lemma \ref{lem:NLTS-phase} the $D$-rotated family $\{\dHn\}$ is also $\epsilon_0$-NLTS. 

By Claim \ref{clm:NLTS-odd-weight}, all of the stabilizer terms of $\Hn$ have odd weight for every $n$. Thus, by Theorem \ref{thm:CSS-NLCS} $\{\dHn\}$ is $\epsilon_1$-NLCS for $\epsilon_1\equiv\sin^2(\frac{\pi}{8})$. Letting $\epsilon'\equiv\min\{\epsilon_0,\epsilon_1\}$, we have that $\{\dHn\}$ is both $\epsilon'$-NLTS and $\epsilon'$-NLCS.
\end{proof}


\section{Classical and quantum Tanner codes}\label{app:odd-weight}
All known examples of asymptotically good (distance $\Omega(n)$ and rate $\Omega(1)$) quantum LDPC codes are CSS codes \cite{BE21, PK22,LZ22}. As in the recent proof of NLTS in \cite{ABN22}, we will consider the quantum Tanner codes of \cite{LZ22}. The only condition required in Theorem \ref{thm:CSS-NLCS} is that a constant fraction of the CSS generators have an odd weight. As quantum Tanner codes have a random local code component, we will demonstrate here that the local code can be chosen in such a way that all of the stabilizers have odd weight (Claim \ref{clm:NLTS-odd-weight}). Since we will not need any results about the distance or rate of the code we choose we will not mention many of the intricate details required to show quantum Tanner codes are good quantum LDPC codes. Instead, we will simply mention the classical coding theory needed within their construction. We begin with some review.

\begin{definition*}
A \textbf{classical linear code} on $n$ bits is a subspace $\mcC\subseteq\FF_2^n$. Given such a linear code there is an integer $m\leq n$ and two full-rank matrices $H\in\FF_2^{m\times n}, G\in\FF_2^{(n-m)\times n}$ such that $\ker H = \im G^T = \mcC$. $H$ is called a \textbf{parity-check matrix} of $\mcC$ and $G$ is called a \textbf{generator matrix} of $\mcC$. In particular, $HG^T=0$ and $x\in\mcC$ if and only if $Hx=0$, the zero vector. Further, any $H\in\FF_2^{m\times n}$ defines a valid code, $\ker H$, and we can always find a $G\in\FF_2^{(n-\text{rk}(H))\times n}$ with $\im G^T = \ker H$. We will often work directly with $H$ instead of the underlying code, and indeed many classical codes are constructed by first constructing a parity-check matrix.
\end{definition*}
We mention that although we typically say \textit{the} parity-check/generator matrix, the choices of parity-check and generator matrices are not unique. Also, we need not restrict ourselves to full-rank matrices when defining codes.
\begin{definition*}
Given a linear code, $\mcC$, its \textbf{dual code} is defined as
\begin{equation*}
    \mcC^\perp \equiv \left\{ x\in\FF_2^n \mid x^T c = 0 \text{ for all }c\in\mcC \right\},
\end{equation*}
i.e. the dual code is the orthogonal complement of $\mcC$ under the standard dot product. Note that if $H$ (resp. $G$) is a parity-check (resp. generator) matrix of $\mcC$, then $H$ is a generator matrix (resp. parity-check) matrix of the dual code of $\mcC$, and vice versa.
\end{definition*}
As the name suggests, a major component of quantum Tanner codes is the notion of a classical Tanner code. Tanner codes are a way to build a larger code (known as the global code) from a smaller code (known as the local code) in such a way that properties of the local code translate to desirable properties of the global code. 
\begin{definition*}
Let $\mcG=(V,E)$ be a $d$-regular graph with $\abs{E}=n$. For a vertex $v\in V$ we denote the $d$ edges connected to $v$ as $E_v$.\footnote{We also assume that the edges in $E$ and in $E_v$ for each $v$ have been given a well-defined ordering.} Let $\local\subseteq \FF_2^d$ be a code on $d$ bits. The \textbf{Tanner code} associated with $\mcG$ and $\mcC$ is defined as
\begin{equation*}
    \mcT(\mcG,\local) \equiv \left\{ x\in\FF_2^n \;\Big|\; x_v\in\local \text{ for all }v\in V \right\},
\end{equation*}
where $x$ is some assignment of bits to $E$ and $x_v$ is the restriction of $x$ to the edges in $E_v$. We say that $\local$ is the \textbf{local code} of $\mcT(\mcG,\local)$. If $h\in\FF_2^{r\times d}$ is a parity-check matrix for $\mcC$, we will typically denote $\mcT(\mcG,\mcC)=\mcT(\mcG,\ker h)$. We can construct a  ``global'' parity-check matrix, $H$, for $\mcT(\mcG,\ker h)$ from $\mcG$ and $h$ as follows: $H$ has $r\abs{V}$ total rows, $r$ for each vertex of $\mcG$, and $\abs{E}$ columns, one for each edge of $\mcG$. Given a vertex, $v$, and a row $h_j$ of the local code $h$, the row $H_{v,j}$ of $H$ is defined as $H_{v,j}|_{E_v}=h_j$, and 0 elsewhere. In other words, if we restrict to the columns of $H_{v,j}$ corresponding to the edges connected to $v$, we will see the $j$th parity-check of $h$.
\end{definition*}
Intuitively, in a Tanner code we assign bits to the edges of $\mcG$, and for each vertex of $\mcG$ we check that the ``local'' view of each vertex is an element of the chosen local code. The distinction between local and global codes is crucial, and the notation use becomes heavy especially when discussing quantum Tanner codes. For the sake of consistency, we will always use calligraphic $\mcC$ and uppercase $H,G$ when referring to a global code and global parity-check or generator matrices. We will always use the standard uppercase $\local$ and lowercase $h,g$ when referring to a local code and local parity-check or generator matrices.

\begin{example}
    Consider Figure \ref{fig:tanner-code}, and let $x\in\FF_2^n$ be the assignment of bits to edges as shown. Suppose we choose a local code, $\ker h$, via the following parity-check matrix:
    \begin{equation*}
        h \equiv \begin{bmatrix} 1 & 0 & 0 \\ 0 & 1 & 1 \end{bmatrix}.
    \end{equation*}
    $x$ is not an element of the Tanner code $\mcT(\mcG,h)$. It passes the local check at vertex $v$ since $hx_v=[0\;\;0]^T$, but it does not pass the local check at vertex $w$ since $hx_w = [1\;\;1]^T$.
    
    \begin{figure}[H]
        \centering
        \includegraphics{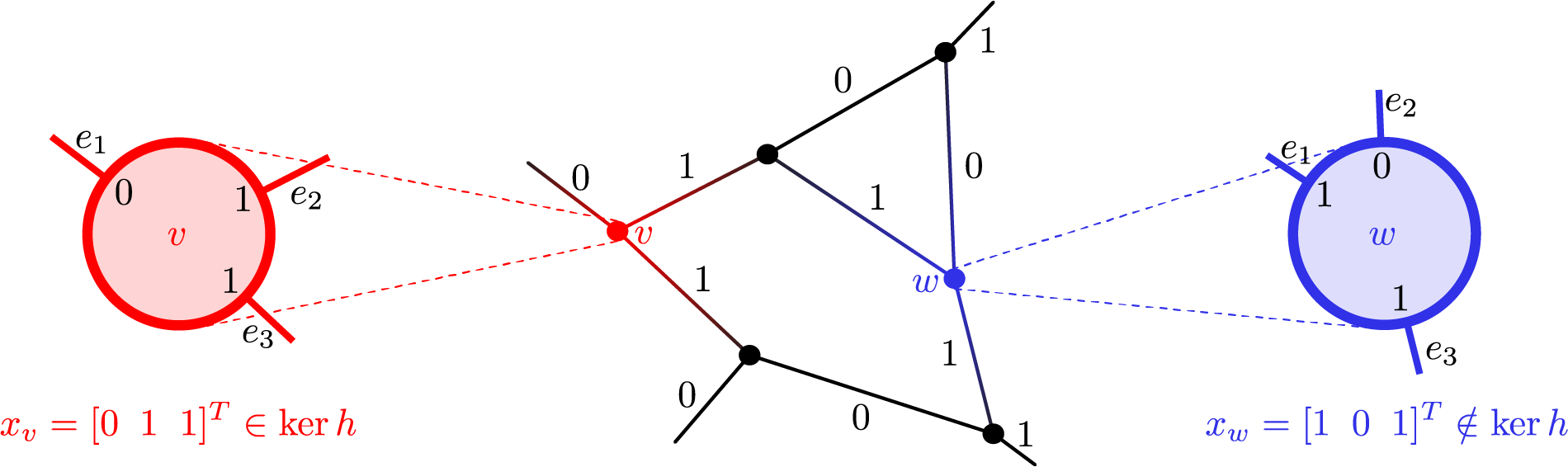}
        \caption{Tanner code example. Note that this is just a portion of the overall graph.}
        \label{fig:tanner-code}
    \end{figure}
\end{example}
Sipser and Spielman \cite{SS96} gave the first explicit example of asymptotically good classical LDPC codes by defining a Tanner code on an expanding, $d$-regular base graph\footnote{We really mean a countable family of graphs $\{\mcG_n\}$ with constant degree $d$.}, $\mcG$. In their original result the local code was given by a uniformly random parity-check $h\in\FF_2^{r\times d}$, and they showed that the resulting code $\mcT(\mcG,\ker h)$ has asymptotically good parameters with high probability.

The definition of a CSS code may seem starkly different from that of a classical code, and it may not be clear how classical codes could be useful in the construction of quantum codes. However, the following more coding-theoretic definition of a CSS code is equivalent to the definition given in Section \ref{sec:prelim}.
\begin{definition*}[Alternate definition of a CSS code]
Let $\mcC_X$ and $\mcC_Z$ be two classical codes with parity-check matrices $H_X\in\FF_2^{m_x\times n}$ and $H_Z\in\FF_2^{m_z\times n}$. Suppose $H_X$ and $H_Z$ satisfy $H_X H_Z^T=0$, i.e. $\mcC_Z^\perp\subseteq \mcC_X$. For each row of $H_X$ (resp. $H_Z$) we associate an element of the Pauli group, $S_1\otimes \dots \otimes S_n\in\mcP_n$, where $S_j=X$ (resp. $S_j=Z$) if the $j$th element is 1, and $S_j=\eye$ if the $j$th element is 0. The set of these Pauli elements forms a generating set for a valid stabilizer group and thus define a CSS code which we denote $\CSS(\mcC_X,\mcC_Z)$.
\end{definition*}

In the same spirit as \cite{SS96}, Leverrier and Zémor \cite{LZ22} showed for an explicit $d^2$-regular graph\footnote{Technically, they define two families of graphs $\{\mcG_{0,n}^\square\}$ and $\{\mcG_{1,n}^\square\}$ which are defined on the same vertex set, but have different edge sets which are related via a ``square complex'' structure. Our argument will only rely on their vertex sets having the same size, so for simplicity we will suppose there is a single graph, $\mcG^\square$.}, $\mcG^\square$, and two special local codes, $\local_X$ and $\local_Z$, that the global codes $\mcC_X\equiv\mcT(\mcG^\square,\local_X)$ and $\mcC_Z\equiv\mcT(\mcG^\square,\local_Z)$ define an asymptotically good CSS code with high probability. Anshu, Breuckmann, and Nirkhe \cite{ABN22} later showed that this CSS code satisfies a sufficient property for its associated local Hamiltonian to be NLTS.

By definition, the code $\CSS(\mcC_X,\mcC_Z)$ will satisfy the condition of Theorem \ref{thm:CSS-NLCS} if a constant fraction of the rows of both $H_X$ and $H_Z$ have odd weight. Thus, we need to show that we can choose global parity-check matrices of $\mcC_X$ and $\mcC_Z$ such that, for each, a constant fraction of the rows have odd weight. We can do this by looking at the local parity-check matrices and using the following:
\begin{lemma}\label{lem:odd-local-implies-odd-global}
Let $\mcT(\mcG,\ker h)$ be a Tanner code. If an $\alpha>0$ fraction of the rows of $h\in\FF_2^{r\times d}$ have odd weight, then an $\alpha$ fraction of the rows of the corresponding global parity-check matrix, $H$, have odd weight, as well. 
\end{lemma}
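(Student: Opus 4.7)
The plan is to observe that the construction of $H$ from $h$ is essentially just ``copying'' the rows of $h$ into sparse vectors, so weights are preserved row by row. By the definition in the excerpt, for each vertex $v\in V$ and each row index $j\in[r]$, the row $H_{v,j}$ of the global parity-check matrix is supported entirely on the $d$ columns indexed by $E_v$, where its entries equal those of the local row $h_j$; all other entries are $0$. Consequently
\begin{equation*}
    \wt(H_{v,j}) \;=\; \wt(h_j),
\end{equation*}
and in particular $H_{v,j}$ has odd weight if and only if $h_j$ does.

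From here the proof is a straightforward count. First I would index the $r|V|$ rows of $H$ as the pairs $(v,j)\in V\times[r]$. By hypothesis, the set $J_{\text{odd}}\equiv\{j\in[r]:\wt(h_j)\text{ is odd}\}$ satisfies $|J_{\text{odd}}|\geq\alpha r$. By the observation above, the set of odd-weight rows of $H$ is exactly
\begin{equation*}
    \{(v,j)\in V\times [r] : j\in J_{\text{odd}}\} \;=\; V\times J_{\text{odd}},
\end{equation*}
which has size $|V|\cdot|J_{\text{odd}}| \geq \alpha\, r|V|$. Dividing by the total number $r|V|$ of rows of $H$ gives the claimed odd-weight fraction of at least $\alpha$.

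There is no real obstacle here: the lemma is a direct consequence of the definition of the global parity-check matrix of a Tanner code, which preserves the weight of each local row exactly. The only thing to be careful about is not to confuse ``fraction of rows of $h$'' (out of $r$) with ``fraction of rows of $H$'' (out of $r|V|$); the key point is that the odd-weight fraction is invariant under the $|V|$-fold replication of each local row across the vertices of $\mcG$, so the same $\alpha$ transfers from local to global.
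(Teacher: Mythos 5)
Your proof is correct and follows essentially the same route as the paper's: both observe that each global row $H_{v,j}$ is just the local row $h_j$ embedded on the columns $E_v$, so weight (and hence parity) is preserved, and then count the odd-weight rows of $H$ as $|V|$ copies of the odd-weight rows of $h$. Your write-up is simply a more explicit version of the paper's two-line argument.
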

\begin{proof}
By assumption, for every $v\in V$, $\alpha r$ of the $r$ rows of $H$ corresponding to $v$ have odd weight. Thus $\alpha r \abs{V}$ of the $r\abs{V}$ rows of $H$ have odd weight and the implication holds. 
Note that since the dimensions of $h$ are both fixed constants, if even a single row of the local parity-check has odd weight then at least a $1/r>0$ fraction of the rows for the global parity-check also have odd weight, which is a constant.
\end{proof}
This shows, for instance, that half of the rows of a classical expander code have odd weight since the local parity-check matrix is chosen uniformly at random. If the local codes, $\mcC_X,\mcC_Z$, of \cite{LZ22} were also chosen via uniformly random parity-check matrices we would be done. Unfortunately, the local codes of a quantum Tanner code are more structured.

Given two codes, $\local_0,\local_1\subseteq\FF_2^d$, on $d$ bits, their ``dual tensor code'' is a code on $d^2$ bits defined as 
\begin{equation*}
    \local_0\boxplus \local_1\equiv \Big(\local_0^\perp \otimes \local_1^\perp\Big)^\perp.
\end{equation*}
The definition is rather opaque, but there is an easier way to view the dual tensor code:
\begin{fact}\label{fact:parity-check-of-dual-tensor}
    If $h_0, h_1$ are parity-check matrices for $\local_0, \local_1$, respectively, then
    \begin{equation*}
        \ker\big(h_0 \otimes h_1\big) = \local_0\boxplus \local_1.
    \end{equation*}
    That is, $h_0 \otimes h_1$ is a parity-check matrix for the dual tensor code of $\local_0$ and $\local_1$.
\end{fact}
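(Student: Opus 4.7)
The plan is to unfold both sides of the claimed equality by invoking two elementary facts about parity-check matrices and orthogonality, then observing that what remains is nothing more than the definition of $\local_0\boxplus\local_1$. First, since $h_0$ is a parity-check matrix for $\local_0$, its rows span $\local_0^\perp$ (equivalently, $h_0$ is a generator matrix of the dual code), and likewise the rows of $h_1$ span $\local_1^\perp$. This is the same fact used in the definition of a dual code stated just before the claim.

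Second, I would argue that the row space of the Kronecker product $h_0\otimes h_1$ is exactly $\local_0^\perp\otimes\local_1^\perp$. The rows of $h_0\otimes h_1$ are, by the definition of the Kronecker product, precisely the vectors of the form $r\otimes s$ where $r$ ranges over rows of $h_0$ and $s$ ranges over rows of $h_1$. By bilinearity, the $\FF_2$-span of all such $r\otimes s$ is the span of $\{r\otimes s : r\in\local_0^\perp,\, s\in\local_1^\perp\}$, which is the definition of $\local_0^\perp\otimes\local_1^\perp$ as a subspace of $\FF_2^{d^2}$.

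Third, I would invoke the standard linear-algebra identity that, for any matrix $M$ over $\FF_2$, the kernel of $M$ equals the orthogonal complement of its row space: $Mx = 0$ if and only if $x$ is orthogonal (under the standard dot product) to every row of $M$. Applying this to $M = h_0\otimes h_1$ and chaining the previous steps gives
\[
\ker(h_0\otimes h_1) \;=\; \bigl(\text{rowspace}(h_0\otimes h_1)\bigr)^\perp \;=\; \bigl(\local_0^\perp\otimes\local_1^\perp\bigr)^\perp \;=\; \local_0\boxplus\local_1,
\]
where the last equality is exactly the definition given in the excerpt.

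There is no substantive obstacle; the only point that deserves a sentence of care is verifying that the dot product on $\FF_2^{d^2}$ used to define $\perp$ in the definition of $\boxplus$ agrees with the factorizable form $(a\otimes b)\cdot(c\otimes d) = (a\cdot c)(b\cdot d)$. This is the standard convention and is what makes step two and step three line up cleanly, so the entire proof reduces to unrolling definitions.
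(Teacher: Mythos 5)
Your proof is correct and complete. The paper states this as a \emph{Fact} without giving any proof (it is treated as standard coding theory), so there is nothing to compare against; your argument --- rows of $h_0\otimes h_1$ span $\local_0^\perp\otimes\local_1^\perp$ since the rows of each $h_i$ span $\local_i^\perp$, combined with $\ker M=(\mathrm{rowspace}\,M)^\perp$ --- is exactly the standard derivation, and your closing remark about the coordinate identification of $\FF_2^d\otimes\FF_2^d$ with $\FF_2^{d^2}$ correctly flags the only point where conventions must be checked.
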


The local codes of the constituent \textit{classical} Tanner codes of a quantum Tanner code are both dual tensor codes. We now state the main result of \cite{LZ22} on quantum LDPC codes. Note that we are combining two theorems to better fit our discussion.
\begin{fact}[Theorem 2 and Theorem 17 of \cite{LZ22}]\label{fact:LZ}
Take $\mcG^\square$ to be the family of $d^2$-regular graphs constructed in \cite{LZ22} and $r \equiv \floor{\rho d}$ for $\rho\in(0,1/2)$. Choose $h_0, h_1\in\FF_2^{r\times d}$ by picking entries uniformly at random. Let $h_0$ be the parity-check matrix for the code $\local_0$ and $h_1$ be the generator matrix for the code $\local_1$, i.e. $\local_0=\ker h_0$ and $\local_1=\im h_1^T$. With probability 1 when $d$ goes to infinity, the classical Tanner codes $\mcC_X\equiv \mcT(\mcG^\square,\local_0\boxplus\local_1)$ and $\mcC_Z\equiv\mcT(\mcG^\square,\local_0^\perp\boxplus\local_1^\perp)$ yield a family of asymptotically good quantum LDPC codes.
\end{fact}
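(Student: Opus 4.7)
The plan is to combine the random construction of Fact \ref{fact:LZ} with a simple algebraic characterization of when a linear code admits parity-check or generator matrices whose rows all have odd weight, and to argue that a single random draw yields local codes satisfying both the requirements of \cite{LZ22, ABN22} and the extra odd-weight conditions.

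First I would reduce from global to local codes. By Lemma \ref{lem:odd-local-implies-odd-global}, it is enough to exhibit local parity-check matrices of $\local_0\boxplus\local_1$ and $\local_0^\perp\boxplus\local_1^\perp$ with all odd-weight rows. By Fact \ref{fact:parity-check-of-dual-tensor}, these can be taken to be tensor products $h_0^{PC}\otimes h_1^{PC}$ (for $\local_0\boxplus\local_1$) and $g_0\otimes g_1$ (for $\local_0^\perp\boxplus\local_1^\perp$), where $h_i^{PC}$ is a parity-check matrix of $\local_i$ and $g_i$ is a generator matrix of $\local_i$. Since $\wt(u\otimes v)=\wt(u)\cdot\wt(v)$, the tensor has all odd-weight rows iff both of its factors do, so the problem reduces to finding parity-check and generator matrices of each of $\local_0$ and $\local_1$ whose rows are all odd-weight.

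Next I would establish the following criterion: a code $\local\subseteq\FF_2^d$ admits a parity-check matrix with all odd-weight rows iff $\mathbf{1}\notin\local$, and admits a generator matrix with all odd-weight rows iff $\mathbf{1}\notin\local^\perp$. The forward direction is immediate: if $\mathbf{1}\in\local$, every parity-check row is orthogonal to $\mathbf{1}$ and hence has even weight. For the reverse direction, take any parity-check $h$ of $\local$; since $h\mathbf{1}\neq\mathbf{0}$, we can pick an invertible matrix $A$ with $A(h\mathbf{1})=\mathbf{1}$, so that $Ah$ is another parity-check of $\local$ with all odd-weight rows. The generator-matrix case is analogous, using that $\ker G=\local^\perp$ for any generator matrix $G$ of $\local$. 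Hence it is enough to produce local codes $\local_0,\local_1$ satisfying $\mathbf{1}\notin\local_0\cup\local_0^\perp\cup\local_1\cup\local_1^\perp$ together with the hypotheses of Fact \ref{fact:LZ}.

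For existence, I would invoke the random model of Fact \ref{fact:LZ}: for uniform random $h_0\in\FF_2^{r\times d}$ with $r=\floor{\rho d}$, the probability that $\mathbf{1}\in\ker h_0=\local_0$ equals $2^{-r}$, and the probability that $\mathbf{1}\in\mathrm{rowspan}(h_0)=\local_0^\perp$ is at most $2^{r-d}$; the same bounds hold for $\local_1$. All four probabilities tend to $0$ as $d\to\infty$ for $\rho\in(0,1/2)$, so a union bound with the probability-$1$ statements from Fact \ref{fact:LZ} and the NLTS analysis of \cite{ABN22} gives a uniformly random choice satisfying every condition with probability tending to $1$; an explicit family is then obtained by finite search. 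The main obstacle is verifying that the NLTS argument of \cite{ABN22} is insensitive to our replacement of the parity-check matrices implicit in Fact \ref{fact:LZ} with our custom odd-weight-row ones: the stabilizer Hamiltonian depends on the chosen generators, not only on the stabilizer group, so a priori this replacement could damage the NLTS constant. Fortunately the \cite{ABN22} arguments are phrased in terms of intrinsic code-theoretic properties of $\mcC_X$ and $\mcC_Z$ (linear distance, robustness of the dual tensor codes, etc.) that are invariant under row operations on the parity-check matrices, so any parity-check matrix with the correct kernel still yields a valid NLTS Hamiltonian, completing the plan.
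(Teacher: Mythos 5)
There is a genuine mismatch here: the statement you were asked to prove is Fact \ref{fact:LZ}, which in the paper is an \emph{imported} result --- it is Theorem 2 and Theorem 17 of \cite{LZ22}, i.e.\ the Leverrier--Z\'emor theorem that the quantum Tanner code construction yields asymptotically good quantum LDPC codes with probability $1$ as $d\to\infty$. The paper gives no proof of it (nor could it briefly: that result rests on the expansion of the square complex, the robustness of random dual tensor codes, and the distance analysis of \cite{LZ22}). Your proposal does not prove it either; in your third paragraph you explicitly ``invoke the random model of Fact \ref{fact:LZ}'' and its ``probability-$1$ statements,'' so as a proof of Fact \ref{fact:LZ} the argument is circular. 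What you have actually sketched is a proof of a \emph{different} statement in the paper, namely Claim \ref{clm:NLTS-odd-weight} (that the local codes can be chosen so that every stabilizer generator has odd weight while still satisfying the hypotheses of \cite{ABN22}). Nothing in your write-up addresses distance, rate, or the LDPC property, which is the entire content of the target fact.

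For what it is worth, read as a proof of Claim \ref{clm:NLTS-odd-weight} your argument is essentially sound and closely parallels the paper's: your reduction via Lemma \ref{lem:odd-local-implies-odd-global}, Fact \ref{fact:parity-check-of-dual-tensor}, and the multiplicativity of weights under tensor products is exactly the paper's route; your probabilistic estimates ($\Pr[\mathbf 1\in\ker h_0]=2^{-r}$ and $\Pr[\mathbf 1\in\mathrm{rowspan}(h_0)]\le 2^{r-d}$) are Lemma \ref{lem:single-odd-weights} in disguise, since a matrix has an odd-weight row iff $\mathbf 1\notin\ker$ of it; and your left-multiplication by an invertible $A$ sending $h\mathbf 1$ to $\mathbf 1$ is a clean packaging of the paper's row-addition construction of $h_0',h_1',g_0',g_1'$ (both preserve the kernel, hence the code, hence the intrinsic properties used by \cite{ABN22}). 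Your closing observation that the NLTS analysis depends only on the codes and not on the chosen parity-check matrices is a point the paper leaves implicit and is worth making. But to answer the actual prompt you would have to either reproduce the Leverrier--Z\'emor proof or acknowledge that Fact \ref{fact:LZ} is a black-box citation.
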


As mentioned previously, the local codes for the two classical codes are, themselves, constructed from two smaller codes. The reason for choosing a random parity-check matrix for $\local_0$ and a random generator matrix for $\local_1$ is to provide a symmetry in the construction, particularly because $h_0$ is a random \textit{generator} matrix for $\local_0^\perp$ and $h_1$ is a random \textit{parity-check} matrix for $\local_1^\perp$.

To utilize Lemma \ref{lem:odd-local-implies-odd-global} we will first need local parity-check matrices for $\local_0\boxplus \local_1$ and $\local_0^\perp \boxplus \local_1^\perp$. By Fact \ref{fact:parity-check-of-dual-tensor}, this is equivalent to choosing parity-check matrices for $\local_0,\local_1,\local_0^\perp,$ and $\local_1^\perp$. We already have parity-check matrices for $\local_0$ and $\local_1^\perp$ from the statement of Fact \ref{fact:LZ}, so we must choose parity-check matrices for $\local_1$ and $\local_0^\perp$.

We will need the following technical lemma to prove the main result of this section:
\begin{lemma}\label{lem:single-odd-weights}
    Choose $A\in\FF_2^{r\times d}$ by picking entries in $\{0,1\}$ i.i.d. uniformly at random, where $r=\floor{\rho d}$ for $\rho\in(0,1/2)$. With probability 1 when $d$ goes to infinity, the following hold:
    \begin{enumerate}
        \item At least one row of $A$ has odd weight.
        \item At least one $x\in\ker A\subseteq\FF_2^d$ has odd weight.
    \end{enumerate}
\end{lemma}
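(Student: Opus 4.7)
The plan is to prove both parts by elementary probabilistic reasoning: in each case the ``bad'' event can be identified with a simple linear-algebraic condition whose probability decays exponentially in $d$. For part~1, note that the even-weight vectors in $\FF_2^d$ form the codimension-one subspace $\{x\in\FF_2^d : \mathbf{1}^T x = 0\}$, so for a uniformly random row $A_i\in\FF_2^d$ the parity $\mathbf{1}^T A_i$ is a uniform bit. Since the $r$ rows of $A$ are independent, the probability that all of them have even weight is exactly $2^{-r}=2^{-\floor{\rho d}}$, which vanishes as $d\to\infty$; this directly yields part~1.

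For part~2, I would first translate the statement into a condition on the row space. Using the duality $(\ker A)^\perp=\operatorname{rowspan}(A)$ over $\FF_2$, one checks that $\ker A$ contains an odd-weight vector if and only if $\mathbf{1}\notin\operatorname{rowspan}(A)$: if $\mathbf{1}=A^T c$ then $\mathbf{1}^T x = c^T(Ax)=0$ for every $x\in\ker A$, and conversely $\ker A\subseteq\mathbf{1}^\perp$ forces $\mathbf{1}\in(\ker A)^\perp$. Thus it suffices to upper-bound $\Pr[\mathbf{1}\in\operatorname{rowspan}(A)]$, which I would do with a union bound over the $2^r-1$ nonzero coefficient vectors $c\in\FF_2^r$ that could witness $\mathbf{1}^T=c^T A$. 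For each fixed $c\neq 0$, the random vector $c^T A$ is uniform on $\FF_2^d$ (the columns of $A$ are independent, and on each column $c^T A_{\cdot,j}$ is a nontrivial sum of independent uniform bits), so $\Pr[c^T A=\mathbf{1}^T]=2^{-d}$. Summing gives
\begin{equation*}
\Pr\big[\mathbf{1}\in\operatorname{rowspan}(A)\big]\;\leq\;(2^r-1)\cdot 2^{-d}\;\leq\;2^{-(1-\rho)d},
\end{equation*}
which tends to $0$ since $\rho<1/2<1$.

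Neither step is a real obstacle; the only substantive observation is the duality converting ``$\ker A$ has an odd-weight element'' into ``$\mathbf{1}\notin\operatorname{rowspan}(A)$.'' Both failure probabilities are of the form $2^{-\Omega(d)}$, so they combine (by a final union bound, or by Borel--Cantelli if one prefers an almost-sure formulation along the sequence $d\to\infty$) into the stated conclusion.
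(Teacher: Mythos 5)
Your proposal is correct, and its overall structure matches the paper's: part~1 is proved identically (the parity of each i.i.d.\ uniform row is a uniform bit, so all rows are even with probability $2^{-r}=2^{-\floor{\rho d}}$), and part~2 rests on the same key observation that $\ker A$ contains an odd-weight vector precisely when $\mathbf{1}\notin\operatorname{rowspan}(A)=(\ker A)^\perp$. The one genuine difference is how you bound $\Pr[\mathbf{1}\in\operatorname{rowspan}(A)]$: you take a union bound over the $2^r-1$ nonzero coefficient vectors $c\in\FF_2^r$, using that $c^TA$ is uniform on $\FF_2^d$ for each fixed $c\neq 0$, whereas the paper runs an induction on the partial spans $S_k=\Span\{A_1,\dots,A_k\}$, showing $\Pr[\mathbf{1}\in S_k]\leq(2^k-1)/2^d$ by conditioning on whether $\mathbf{1}$ already lies in $S_k$. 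Both routes yield exactly the bound $(2^r-1)/2^d$; your union bound is the more direct and arguably cleaner of the two, while the paper's induction avoids having to argue uniformity of the linear combinations $c^TA$ (at the cost of a slightly longer case analysis). Your closing estimate $(2^r-1)2^{-d}\leq 2^{-(1-\rho)d}\to 0$ is valid for any $\rho<1$, so the hypothesis $\rho\in(0,1/2)$ is more than enough, just as in the paper.
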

\begin{proof}
We denote the $i$-th row of $A$ by $A_i$.

(1.) Let $\evenodd(v)\equiv \wt(v)\mod{2}$ be the parity of $v\in\FF_2^d$, and let $O_A$ be the [non-negative] random variable corresponding to the number of odd weight rows in $A$, i.e. $O_A \equiv \abs{\{A_i\mid \evenodd(A_i)=1\}}$.
    \begin{align}
        \Pr[O_A>0] &= 1-\Pr[O_A=0], \\
            &= 1-\Pr\Big[\evenodd(A_i)=0 \;\; \forall i\in[r]\Big], \\
            {\color{since}(\because \text{i.i.d.})}\hspace{1em} &= 1-\Pr\Big[\evenodd(A_1)=0\Big]^r, \\
            &= 1-(1/2)^r,
    \end{align}
    where the last line follows since $A_1\in\FF_2^d$ is uniformly random and half of the vectors in $\FF_2^d$ have even weight. Since $r=\floor{\rho d}$ the value approaches 1 when $d$ goes to infinity.
    
 (2.) We will bound the probability that every $x\in\ker A$ has even weight. Every codeword has even weight if and only if the all-ones vector, $[1,1,\dots,1]$, is a parity-check of $\ker A$. That is, precisely when $1^d\equiv [1,1,\dots,1] \in\im A^T$, the row space of $A$. Let $S_k\equiv\Span\{A_1,\dots,A_k\}$ denote the span of the first $k$ rows of $A$ and note that $S_r=\im A^T$. We will prove by induction on $r$ that $\Pr[1^d\in S_r]\leq (2^r-1)/2^d$.

    When $r=1$, $\Pr[1^d\in S_1]=\Pr[A_1=1^d]=1/2^d$, as desired. For $k\geq 1$ suppose $\Pr[1^d\in S_k]\leq (2^k-1)/2^d$. Consider $k+1$. By the law of total probability we have
    \begin{align}
        \Pr[1^d\in S_{k+1}] = &\Pr[1^d\in S_{k+1}\;\Big|\; 1^d\in S_{k}]\cdot \Pr[1^d\in S_{k}] \nonumber\\
        &+ \Pr[1^d\in S_{k+1}\;\Big|\; 1^d\notin S_{k}]\cdot \Pr[1^d\notin S_{k}], \\
        {\color{since}(\Pr[1^d\notin S_{k}]\leq 1)}\hspace{1.5em} \leq &\Pr[1^d\in S_{k+1}\;\Big|\; 1^d\in S_{k}]\cdot \Pr[1^d\in S_{k}] \nonumber\\
        &+ \Pr[1^d\in S_{k+1}\;\Big|\; 1^d\notin S_{k}], \\
        {\color{since}(1^d\in S_k\Rightarrow 1^d\in S_{k+1})}\hspace{1.5em} = &\Pr[1^d\in S_{k}] + \Pr[1^d\in S_{k+1}\;\Big|\; 1^d\notin S_{k}], \nonumber \\
        {\color{since}(\text{I.H.})}\hspace{1.5em} \leq &\frac{2^k-1}{2^d} + \Pr[1^d\in S_{k+1}\;\Big|\; 1^d\notin S_{k}].
    \end{align}
    Now consider if $1^d\notin S_k=\Span\{A_1,\dots,A_k\}$. Since $\abs{S_k}\leq2^k$, $1^d\in S_{k+1}$ only if $A_k=1^d+v$ for some $v\in S_k$. Since $A_k$ is chosen uniformly at random, this occurs with probability $2^k/2^d$. Thus,
    \begin{equation}
        \Pr[1^d\in S_{k+1}] \leq \frac{2^k-1}{2^d} + \frac{2^k}{2^d} = \frac{2^{k+1}-1}{2^d},
    \end{equation}
    as desired. Altogether we have
    \begin{equation}
        \Pr[\text{\# odd weight vectors in }\ker A >0] = 1 - \Pr[1^d\in S_r]\geq 1-\frac{2^{\floor{\rho d}}-1}{2^d},
    \end{equation}
    which goes to 1 when $d$ goes to infinity.

\end{proof}

\begin{corollary}\label{cor:single-odd-parity-checks}
    Let $h$ be chosen as in Lemma \ref{lem:single-odd-weights} and suppose $\local \equiv\ker h$. With probability 1 when $d$ goes to infinity every parity-check matrix, $g\in\FF_2^{(d-r)\times d}$, for $\local^\perp$ has at least one row with odd weight.
\end{corollary}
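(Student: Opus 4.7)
The plan is to reduce this corollary directly to part~(2) of Lemma~\ref{lem:single-odd-weights} by translating the statement about parity-check matrices of $\local^\perp$ into a statement about odd-weight codewords of $\local$.

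First, I would observe that any $g \in \FF_2^{(d-r)\times d}$ with $\ker g = \local^\perp$ has row space equal to $(\local^\perp)^\perp = \local$. The existence of such a $g$ with exactly $d-r$ rows forces $\dim \local = d-r$, i.e.\ $h$ has full rank $r$. A standard calculation shows that a uniformly random $h \in \FF_2^{r \times d}$ with $r = \floor{\rho d}$ and $\rho < 1/2$ has full rank with probability $\prod_{i=0}^{r-1}(1 - 2^{i-d})$, which tends to $1$ as $d \to \infty$. So on an event of probability tending to $1$ we may identify the rows of any such $g$ with a basis of $\local$.

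Next, I would prove the elementary parity observation: a subspace $\local \subseteq \FF_2^d$ contains an odd-weight vector if and only if every basis of $\local$ contains a row of odd weight. One direction is immediate; for the other, suppose $v \in \local$ has $\wt(v)$ odd, and write $v = \sum_{i \in I} g_i$ in any basis $\{g_1,\dots,g_{d-r}\}$ of $\local$. Reducing modulo $2$ gives
\begin{equation*}
    1 \equiv \wt(v) \equiv \sum_{i \in I} \wt(g_i) \pmod{2},
\end{equation*}
so some $g_i$ with $i \in I$ must have odd weight. Since the rows of every valid $g$ form a basis of $\local$ (on the full-rank event), this shows every parity-check matrix of $\local^\perp$ has an odd-weight row as soon as $\local = \ker h$ contains an odd-weight vector.

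Finally, Lemma~\ref{lem:single-odd-weights}(2) guarantees that $\ker h$ contains an odd-weight vector with probability tending to $1$ as $d \to \infty$. Intersecting this event with the full-rank event (each of probability $\to 1$) yields the corollary by a union bound on the complements. I don't expect any real obstacle here; the only subtle point is making sure the dimension count $(d-r)$ forces $h$ to be full rank, so that ``parity-check matrix of $\local^\perp$'' coincides with ``basis of $\local$'' and the parity argument applies uniformly over all choices of $g$.
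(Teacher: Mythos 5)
Your proof is correct and follows essentially the same route as the paper's: identify a parity-check matrix of $\local^\perp$ with a generator matrix of $\local$, use the fact that weight parity is additive under $\FF_2$ addition so a span of even-weight rows contains only even-weight vectors, and invoke part~(2) of Lemma~\ref{lem:single-odd-weights}. The only difference is your added full-rank discussion, which the paper omits (its contradiction argument needs only that the rows of $g$ generate $\local$, not that they form a basis); this is harmless but not necessary.
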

\begin{proof}
    Recall that a parity-check matrix for $\local^\perp$ is a generator matrix for $\local$, and hence the rows of $g$ generate the codewords in $\local=\ker h$. Suppose for contradiction that all of the rows of $g$ have even weight. The sum of any two even-weight vectors over $\FF_2$  has even weight, so it must be that all of the codewords of $\local$ have even weight. However, by point 2 of Lemma \ref{lem:single-odd-weights} at least one $x\in\local$ has odd weight, a contradiction.
\end{proof}

We now prove the main result of this section, Claim \ref{clm:NLTS-odd-weight}. 
\oddWeight*

\begin{proof}
The family of CSS codes used in Theorem 5 of \cite{ABN22} is precisely the family $\CSS(\mcC_X,\mcC_Z)$ from Fact \ref{fact:LZ} given by $\mcC_X\equiv \mcT(\mcG^\square,\local_0\boxplus\local_1)$ and $\mcC_Z\equiv\mcT(\mcG^\square,\local_0^\perp\boxplus\local_1^\perp)$. We must show that the global parity-check matrices, $H_X$ and $H_Z$, of $\mcC_X$ and $\mcC_Z$ can be chosen to have only odd weight rows.

By Fact \ref{fact:LZ}, with probability 1 when $d$ goes to infinity, the random choice of a parity-check matrix, $h_0\in\FF_2^{r\times d}$, for $\local_0$ and of a generator matrix, $h_1\in\FF_2^{r\times d}$, for $\local_1$, yields asymptotically good quantum LDPC codes. By Fact \ref{fact:parity-check-of-dual-tensor}, a parity-check matrix of $\local_0\boxplus \local_1$ is given by $h_0\otimes g_1$, where $g_1$ is any parity-check matrix of $\local_1=(\local_1^\perp)^\perp$, and a parity-check matrix of $\local_0^\perp\boxplus \local_1^\perp$ is given by $g_0\otimes h_1$, where $g_0$ is any parity-check matrix of $\local_0^\perp$. 

By point 1 of Lemma \ref{lem:single-odd-weights}, with probability 1 when $d$ goes to infinity both $h_0$ and $h_1$ have at least one row with odd weight. Similarly, by Corollary \ref{cor:single-odd-parity-checks}, with probability 1 when $d$ goes to infinity both $g_0$ and $g_1$ have at least one row with odd weight. 
We construct a new matrix, $h_0'$, as follows: let $h_{0,\star}$ be an odd-weight row of $h_0$. We let each odd-weight row of $h_0$ be a row in $h_0'$, and for each even-weight row, $h_{0,j}$, there is a row $h_{0,j}+h_{0,\star}$ in $h_0'$. Since the sum of an even-weight vector and an odd-weight vector has odd weight, the matrix $h_0'$ has only odd-weight rows (given $h_0$ has at least one odd-weight row). We also note that adding one row to another does not change the kernel of a matrix. That is $\ker h_0 = \ker h_0'$. We construct matrices $h_1',g_0',$ and $g_1'$ analogously. By construction, all of the rows of $h_0',h_1',g_0',$ and $g_1'$ have odd weight.

Since $\ker( h_0'\otimes g_1' )=\ker( h_0\otimes g_1 )=\local_0\boxplus\local_1$, $h_0'\otimes g_1'$ is a parity-check matrix for the local code of $\mcC_X\equiv \mcT(\mcG^\square,\local_0\boxplus\local_1)$ used by Theorem 5 of \cite{ABN22}. As every row of $h_0'$ and $g_0'$ has odd weight, every row of their tensor product does, as well. Thus, by Lemma \ref{lem:odd-local-implies-odd-global} every row of the corresponding global parity-check matrix, $H_X$, constructed from $h_0'$ and $g_1'$ has odd weight. The same is true for the global parity-check matrix, $H_Z$, constructed from $g_0'$ and $h_1'$.

As all of the properties hold with probability 1 when $d$ goes to infinity, there is some constant $d$ such that there are explicit matrices $h_0',h_1',g_0',$ and $g_1'$ satisfying all of the conditions of this claim, Fact \ref{fact:LZ} (Theorem 2 and Theorem 17 of \cite{LZ22}), and Theorem 5 of \cite{ABN22}.
\end{proof}


\section{Future work}
\begin{enumerate}[label=(\arabic*), leftmargin=*]
\item The most immediate problem raised by this work is to show that rotating arbitrary CSS Hamiltonians by $\destabn$ yields NLCS Hamiltonians. We have shown this when a constant fraction of the stabilizer generators have odd weight, which is a technical requirement of our proof technique. Nonetheless, we believe all \destab-rotated CSS Hamiltonians are NLCS. A first step would be to show this for $\ham\equiv \frac{1}{n}\sum \ketbra{11}_{i,i+1} = \frac{1}{n}\sum \frac{1}{2}(\eye-Z_iZ_{i+1})$, which has only even weight stabilizer generators.

\item NLACS Hamiltonians are an implication of either NLSS or the quantum PCP conjecture together with $\NP\neq\QMA$ (see Diagram \ref{diagram:qpcp-implications}), so we believe they exist. In Appendix \ref{app:NLCS} we give a self-contained proof that the simple $D$-rotated zero Hamiltonian, $\Tilde{\ham}_0 = \frac{1}{n}\sum(\destabdagger\ketbra{1}\destab)_i$, is NLCS, and in Appendix \ref{app:NLACS-zero}, we give a sharp lower-bound on the energy of states produced by Clifford + 1 $\T$ gate under $\Tilde{\ham}_0$. We also conjecture a sharp lower-bound on the energy for states prepared by Clifford + $t$ $\T$ gates, for any $t\leq n$.

\item We hope that our techniques may lead to local Hamiltonians which satisfy NLSS. Consider the zero Hamiltonian, $\ham_0 = \frac{1}{n}\sum\ketbra{1}_i$, and a family of Haar-random low-depth circuits, $C=\{C_n\}$. The unique ground-state of the local Hamiltonian $C\ham_0 C^\dagger$ is exactly $C\ket{0^n}$,\footnote{Note that we typically denote rotating by $C$ as $C^\dagger \ham C$, not  $C\ham C^\dagger$. We have swapped the order here so that the ground state is $C\ket{0^n}$, as opposed to $C^\dagger\ket{0^n}$.} which is not sampleable (as defined in Section \ref{sec:prelim}) unless $\P=\sharpP$ \cite{BFC+18, Mov20}. We hope that the same is true for states of low-enough constant energy, but new techniques would be necessary to show this. If true, $C\ham_0 C^\dagger$ would be an NLSS Hamiltonian unless $\P=\sharpP$.

Analogously to our result for simultaneous NLTS and NLCS, one may hope that rotating arbitrary CSS Hamiltonians by random low-depth circuits could also yield simultaneous NLTS and NLSS. However, there are many unresolved prerequisites needed to show this. For example, for a CSS Hamiltonian, $\ham$, every ground-state of $C\ham C^\dagger$ has the form $C\ket\psi$ for a codestate $\ket\psi$. It is not a fortiori true that applying a random low-depth circuit to codestates of a CSS code will result in a state that is not sampleable, so it is not clear that even the ground-space of such a Hamiltonian is not sampleable.

\item It is important to note that the technique of rotating Hamiltonians by a constant-depth circuit, while potentially useful for establishing NLSS, seemingly cannot provide certain other prerequisites of the quantum PCP conjecture.  For example, Fact \ref{fact:classical-local-access} says that the energies of locally-approximable states can be computed in $\NP$, and so the quantum PCP conjecture implies the following (assuming $\NP\neq\QMA$):
\begin{conjecture}[No Low-energy Locally-approximable States (NLLS)]
    There exists a family of local Hamiltonians, $\Hn$, and a constant $\epsilon>0$ such that all $\epsilon$-low-energy states of $\Hn$ are not locally-approximable.
\end{conjecture}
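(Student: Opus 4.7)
The NLLS conjecture is strictly stronger than both NLTS and NLCS, since trivial states, stabilizer states, and almost-Clifford states are all locally-approximable, so any Hamiltonian witnessing NLLS automatically witnesses NLTS and NLCS. The most natural first attempt is therefore to ask whether the explicit rotated Hamiltonian $\{\dHn\}$ of Theorem \ref{thm:NLTS-NLCS} already satisfies the stronger property, and failing that, whether a further constant-depth rotation or a different choice of CSS code does.

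The cleanest route is a conditional proof. As the authors note immediately above the conjecture, Fact \ref{fact:classical-local-access} gives that the quantum PCP conjecture together with $\NP\neq\QMA$ implies NLLS: if every local Hamiltonian in a qPCP-hard family admitted a low-energy locally-approximable state, the ground-state energy could be approximated in $\NP$ from that state's classical description, contradicting QMA-hardness. I would first write this implication down carefully, modeled on the argument \cite{GL22} give for NLSS, to situate NLLS precisely within the implication diagram \ref{diagram:qpcp-implications}.

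An unconditional plan would imitate the strategy of Theorem \ref{thm:CSS-NLCS}. One picks a family of CSS Hamiltonians $\{\ham_{\mcS_n}\}$ (ideally one that is also NLTS, so one inherits the simultaneity of Theorem \ref{thm:NLTS-NLCS}) and a family $C=\{C_n\}$ of constant-depth circuits, and tries to prove a uniform local lower bound on $\bra{\eta} C^{\dagger}\Tilde{\Pi}_S C\ket{\eta}$, now taken over every $k$-qubit density matrix $\eta$ that can arise as a reduction of a locally-approximable state. The stabilizer argument leveraged two ingredients in tandem: Claim \ref{clm:Clifford-decomp}, which says local reductions of stabilizer states are convex combinations of $k$-qubit stabilizer states, and Fact \ref{fact:stab-geometry} on the discrete geometry of stabilizer states. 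A proof of NLLS would need replacements for both: an analogous closure statement identifying a tractable class containing all $k$-qubit reductions of locally-approximable states, together with a geometric inequality against that class.

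The main obstacle, and the reason I expect NLLS to be substantially harder than NLCS, is that ``locally-approximable'' is a purely computational notion with no algebraic or geometric structure to exploit. For any constant-size $k$-qubit state $\rho_k$, the padded state $\rho_k \otimes (\ketbra{0})^{\otimes(n-k)}$ has a succinct classical description and efficiently computable local reductions, so it is locally-approximable; consequently \emph{every} $k$-qubit density matrix appears as a reduction of some locally-approximable state. Any argument that operates purely locally is thus forced to lower-bound the energy against all $k$-qubit states, which is impossible since every projector $\Tilde{\Pi}_S$ admits $0$-eigenstates. A successful proof must therefore exploit a global property of the classical description — most plausibly by routing through the $\NP$-versus-$\QMA$ argument above, or by embedding a cryptographic obstruction to the efficient computation of local reductions.
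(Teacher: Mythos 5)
This statement is a conjecture, stated in the paper's ``Future work'' section precisely because the paper does not prove it; there is no proof of NLLS in the paper to compare against, and your proposal (correctly) does not claim to supply one either. What you write is a research plan plus a conditional implication, and the conditional part matches the paper exactly: by Fact~\ref{fact:classical-local-access}, a low-energy locally-approximable state would place LH-$\epsilon$ in $\NP$, so the quantum PCP conjecture together with $\NP\neq\QMA$ implies NLLS. Your closing observation is also the paper's own: since every $k$-qubit density matrix $\rho_k$ arises as the reduction of the locally-approximable state $\rho_k\otimes(\ketbra{0})^{\otimes(n-k)}$, no argument that bounds energy term-by-term against a closed local class (the engine behind Theorem~\ref{thm:CSS-NLCS}) can possibly work, and the paper draws the same conclusion that constant-depth rotation is useless for NLLS.

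The one concrete correction: your opening suggestion to check whether $\{\dHn\}$ from Theorem~\ref{thm:NLTS-NLCS} ``already satisfies the stronger property'' is not merely unpromising but provably false, for a reason the paper spells out. Every $C$-rotated CSS Hamiltonian has a zero-energy ground state of the form $C^\dagger\ket{\varphi}$ with $\ket{\varphi}$ a stabilizer (code)state, and such states \emph{are} locally-approximable: the local reduced density matrices can be computed by combining the lightcone argument used for trivial states with the stabilizer-group computation of Equation~\eqref{eq:stabilizer-reduced-state}. Hence $\dHn$ has a $0$-energy locally-approximable state and is not $\epsilon$-NLLS for any $\epsilon>0$; indeed the local Hamiltonian problem for every construction in this paper lies in $\NP$. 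So the only viable routes are the conditional one you describe or a genuinely new, non-local technique; the conjecture remains open.
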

A closely-related conjecture (``no low-lying classically-evaluatable states'' conjecture) was very recently stated in \cite{WFC23}.\footnote{Note that these conjectures would not imply LH-$\epsilon\notin\NP$ as it would not rule out Hamiltonians whose ground-state energies have indirect $\NP$-witnesses. \cite{BV04} constructs such witnesses for certain commuting Hamiltonians.} Rotating by a constant-depth circuit preserves the NLLS property in the same way that it preserves the NLTS property, thus ruling out the use of rotating Hamiltonians in solving the NLLS conjecture. 

Furthermore, for any CSS Hamiltonian rotated by a constant-depth circuit, which includes every construction considered in this paper, the local Hamiltonian problem is contained in NP. 
To see this, note that every $C$-rotated CSS Hamiltonian has a ground state of the form $C^\dagger\ket\varphi$ for some stabilizer state $\ket\varphi$. Such states are locally-approximable since the local density matrices can be efficiently calculated by using a combination of the local density matrix calculation for trivial states and stabilizer states.

\end{enumerate}

\newpage
\bibliography{ref}
\newpage
\appendix

\section{Omitted proofs}
\subsection{Mixed Clifford states}\label{app:Clifford-decomp}

\begin{restatable}{definition}{stabsubgroups}\label{def:stab-subgroups}
Let $G$ be a stabilizer group, $P=P_1\otimes\dots\otimes P_n\in\mcP_n$ be any Pauli operator, and $A\subseteq[n]$ be any subset of $n$ qubits. We define the set $G_{A,P}$ to be
\begin{equation*}
    G_{A,P} \equiv \left\{  g_A \;\;\Big\vert\;\; g\in G,\; g_j=P_j \text{ for all }j\notin A  \right\},
\end{equation*}
where $g_A$ denote the restriction of $g$ to $A$ (note that $g_A$ acts on $\abs{A}$ qubits, not $n$ qubits).
\end{restatable}
$G_{A,P}$ can be thought of as all of the elements of $G$ which are equal to $P$ outside of the subset $A$, though we consider the restriction of these elements to $A$ only (including overall phases).
By abuse of notation we will denote $G_{i,P}\equiv G_{\{i\},P}$ and $ G_{-A,P}\equiv G_{[n]\setminus A,P}$ for $i\in[n]$.  We denote the special case of $G_{A,\eye}$ by $G_A$. $G_A \equiv \left\{ g_A \mid g\in G \text{ and } \N(g)\subseteq A\right\}\cup\{\eye_A\}$ is the set of all elements in $G$ which act non-trivially only on qubits in $A$.

Claim \ref{clm:Clifford-decomp} is immediate from the following two well-known facts.
\begin{fact}\label{fact:groundstateproj}
Let $G\leq\mcP_n$ be a stabilizer group and $\mcC$ the associated codespace. $\frac{1}{|G|}\sum_{g\in G}g$ is the projector onto $\mcC$. If $|G|=2^n$, then $\frac{1}{2^n}\sum_{g\in G}g =\ketbra{\psi}$, where $\ket{\psi}$ is the stabilizer state associated with $G$. Otherwise, $|G|=2^{n-r}$ for $r>0$ and there are $2^{r}$ logical basis states of $\mcC$. Let $\{\ket{\bar x}\}$ denote the logical computational basis states for $\mcC$. Then 
\begin{equation*}
    \frac{1}{2^{n-r}}\sum_{g\in G} g =\sum_{x\in\FF_2^{r}}\ketbra{\bar x}.
\end{equation*}
\end{fact}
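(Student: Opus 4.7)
The plan is to establish Fact~\ref{fact:groundstateproj} by showing directly that $P \equiv \frac{1}{|G|}\sum_{g\in G} g$ is an orthogonal projector, identifying its image with $\mcC$, and then computing its rank. The argument splits naturally into three short steps; none of them is especially deep, and I expect the only mild subtlety to be the trace computation where one must use that $-\eye \notin G$.

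First I would verify that $P$ is a projector. Since every element of $\mcP_n$ is Hermitian up to a phase and the stabilizer group contains no elements with non-real overall phase (otherwise $G$ would contain $\pm i\eye$ or $-\eye$), every $g \in G$ is Hermitian, so $P = P^\dagger$. For idempotency, expand
\begin{equation*}
P^2 = \frac{1}{|G|^2}\sum_{g,h\in G} gh = \frac{1}{|G|^2}\sum_{g\in G}\Bigl(\sum_{h\in G} gh\Bigr) = \frac{1}{|G|^2}\sum_{g\in G}\Bigl(\sum_{h'\in G} h'\Bigr) = \frac{|G|\cdot |G|\, P}{|G|^2} = P,
\end{equation*}
using that left multiplication by a fixed $g\in G$ permutes $G$.

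Next I would identify $\mathrm{im}(P)$ with $\mcC$. For any $\ket\psi\in\mcC$ and any $g\in G$ we have $g\ket\psi=\ket\psi$, so $P\ket\psi = \ket\psi$, giving $\mcC\subseteq \mathrm{im}(P)$. Conversely, for every $\ket\phi$ and every $g\in G$, $gP\ket\phi = \frac{1}{|G|}\sum_{h\in G}(gh)\ket\phi = P\ket\phi$, so $P\ket\phi$ is stabilized by all of $G$ and hence lies in $\mcC$. Combined with idempotency this shows $P$ is the orthogonal projector onto $\mcC$.

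Finally I would compute $\dim\mcC = \mathrm{rank}(P) = \mathrm{tr}(P)$. Every nonidentity Pauli is traceless, so only the $\eye$ term contributes, and since $-\eye\notin G$ the identity appears exactly once, giving $\mathrm{tr}(P) = \frac{2^n}{|G|}$. When $|G|=2^n$, this rank is $1$, so $\mcC$ is spanned by a unique state $\ket\psi$ and $P = \ketbra\psi$. When $|G|=2^{n-r}$ with $r>0$, the rank is $2^r$; choosing any orthonormal logical basis $\{\ket{\bar x}\}_{x\in\FF_2^r}$ of $\mcC$, the projector decomposes as $P = \sum_{x\in\FF_2^r}\ketbra{\bar x}$, completing the claim. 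The main place to be careful is ensuring that $-\eye\notin G$ is used exactly once — at the trace step — since without this hypothesis $P$ would be the zero operator rather than a rank-$2^{n-\log|G|}$ projector.
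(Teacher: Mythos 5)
Your proof is correct. The paper does not actually prove this fact --- it is stated in Appendix A.1 as a ``well-known fact'' with no argument --- and what you give is the standard proof (Hermiticity from $-\eye\notin G$ forcing real phases, idempotency from the group rerandomization trick, image identified with $\mcC$, rank from the trace). One cosmetic quibble: the hypothesis $-\eye\notin G$ is used twice, not once --- once to rule out phases $\pm i$ (so that each $g$ is Hermitian) and once at the trace step --- but your proof itself invokes it correctly in both places.
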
 

\begin{fact}\label{fact:mixed-Cliff-decomp}
Suppose $\ket{\psi}$ is a stabilizer state on $N$ qubits with stabilizer group $G$ and let $A$ be a subset of the qubits of size $n$. By Fact \ref{fact:groundstateproj} we can write $\ketbra{\psi} = \frac{1}{2^N}\sum_{g\in G}g$. 
The local state on $A$, $\psi\equiv\Tr_{-A}[\ketbra{\psi}]$, is equal to
\begin{equation*}
    \psi = \frac{1}{2^n}\sum_{\hat g \in G_A} \hat g.
\end{equation*}
\end{fact}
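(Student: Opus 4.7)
The plan is to start from the expansion $\ketbra{\psi}=\frac{1}{2^N}\sum_{g\in G} g$ provided by Fact \ref{fact:groundstateproj} (valid because $|G|=2^N$ since $\ket\psi$ is an $N$-qubit stabilizer state), push the partial trace $\Tr_{-A}$ inside the sum by linearity, and determine which terms survive. Since stabilizer groups do not contain scalar elements other than $+\eye$, each $g\in G$ factors uniquely as $g=\epsilon_g\, g|_A\otimes g|_{-A}$ with $\epsilon_g\in\{\pm 1\}$ and $g|_A$, $g|_{-A}$ unsigned Pauli tensors on $A$ and $-A$ respectively.

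The key observation is that $\Tr_{-A}[g]=\epsilon_g\, g|_A\cdot\Tr[g|_{-A}]$, and the trace of an unsigned Pauli tensor on $N-n$ qubits equals $2^{N-n}$ if it is the identity and vanishes otherwise. Hence only those $g\in G$ with $g|_{-A}=\eye$, equivalently $\N(g)\subseteq A$, contribute to $\psi$, and each surviving summand produces $\epsilon_g\, g|_A$ weighted by $2^{N-n}$.

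The final step is to identify the surviving restrictions $\epsilon_g\, g|_A$ with the set $G_A$ of Definition \ref{def:stab-subgroups}. This amounts to a bijection check: the (signed) restriction map from $\{g\in G:\N(g)\subseteq A\}$ onto $G_A$ is surjective by the definition of $G_A$ and injective because any such $g$ is uniquely reconstructed by tensoring its signed restriction with $\eye_{-A}$. Combining these pieces yields the prefactor $2^{N-n}/2^N=1/2^n$ and the claimed identity. I do not anticipate a serious obstacle; the only subtlety demanding attention is carefully tracking the signs $\epsilon_g$ through the restriction map so that phases in $G_A$ are preserved and not double-counted, but once one confirms that stabilizer-group elements carry only real signs, this is purely bookkeeping.
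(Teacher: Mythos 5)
Your argument is correct and is the standard proof: the paper states Fact \ref{fact:mixed-Cliff-decomp} without proof as a well-known fact, and your route (expand $\ketbra{\psi}$ over $G$, push $\Tr_{-A}$ through the sum, kill every $g$ with $g|_{-A}\neq\eye$ since non-identity Pauli tensors are traceless, and match the survivors bijectively with $G_A$ to get the prefactor $2^{N-n}/2^N=1/2^n$) is exactly how one fills that gap. The sign bookkeeping is also handled correctly, since every element of a stabilizer group squares to $\eye$ and hence carries phase $\pm 1$ only.
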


\Cliffdecomp*

\begin{proof}

By definition, there is a pure Clifford state $\ket{\psi}$ on $N\geq n$ qubits and a subset $A$ of $n$ qubits such that $\psi = \Tr_{-A}[\ketbra{\psi}]$. Let $G\equiv\stab(\ket{\psi})$, and let $G_A$ be defined as in Fact \ref{fact:mixed-Cliff-decomp}. By definition, $G_A$ is an abelian subgroup of $\mcP_n$ not containing $-\eye$, and so it is a valid stabilizer group. Let $|G_A |=2^{n-r}$. We have
\begin{align}
    \text{\color{since}(By Fact \ref{fact:mixed-Cliff-decomp})}\hspace{2.3em} \psi &=\frac{1}{2^r2^{n-r}}\sum_{\hat g\in G_A} \hat g, \\
    \text{\color{since}(By Fact \ref{fact:groundstateproj})}\hspace{3em} &= \frac{1}{2^{r}}\sum_{x\in\FF_2^{r}}\ketbra{\bar x}.
\end{align}
Since each $\ketbra{\bar x}$ is a stabilizer state on $n$ qubits and $\sum_{x\in\FF_2^{r}} \frac{1}{2^{r}} = 1$, the statement is proven.
\end{proof}

\newpage
\subsection{Rotated projectors}\label{app:lb-Xk-Zk}
Return to Claim \ref{lem:destabilize-XZ}.
\destabXZ*
\begin{proof}
We will show that $D^\dagger X D = \had$ and $D^\dagger Z D = -X\had X$. As ${\Pi}_{\bar X}\equiv (\frac{1}{2})(\eye-X)$ and  ${\Pi}_{\bar Z}\equiv (\frac{1}{2})(\eye-Z)$, the result follows.

\begin{align*}
    D^\dagger X D &= \left(\cos(\frac{\pi}{8})\eye+\sin(\frac{\pi}{8})ZX\right)X\left(\cos(\frac{\pi}{8})\eye+\sin(\frac{\pi}{8})XZ\right), \\
    &= \left(\cos(\frac{\pi}{8})X+\sin(\frac{\pi}{8})Z\right)\left(\cos(\frac{\pi}{8})\eye+\sin(\frac{\pi}{8})XZ\right), \\
    &= \cos^2\left(\frac{\pi}{8}\right)X+\sin(\frac{\pi}{8})\cos(\frac{\pi}{8})Z+\sin(\frac{\pi}{8})\cos(\frac{\pi}{8})Z-\sin^2\Big(\frac{\pi}{8}\Big)X, \\
    &= \cos(\frac{\pi}{4})Z+\sin(\frac{\pi}{4})X,\\
    &= \frac{1}{\sqrt{2}}(Z+X),\\
    &= \had.
\end{align*}

\begin{align*}
    D^\dagger Z D &= \left(\cos(\frac{\pi}{8})\eye+\sin(\frac{\pi}{8})ZX\right)Z\left(\cos(\frac{\pi}{8})\eye+\sin(\frac{\pi}{8})XZ\right), \\
    &= \left(\cos(\frac{\pi}{8})Z-\sin(\frac{\pi}{8})X\right)\left(\cos(\frac{\pi}{8})\eye+\sin(\frac{\pi}{8})XZ\right), \\
    &= \cos^2\Big(\frac{\pi}{8}\Big)Z-\sin(\frac{\pi}{8})\cos(\frac{\pi}{8})X-\sin(\frac{\pi}{8})\cos(\frac{\pi}{8})X-\sin^2\Big(\frac{\pi}{8}\Big)Z, \\
    &= \cos(\frac{\pi}{4})Z-\sin(\frac{\pi}{4})X,\\
    &= \frac{1}{\sqrt{2}}(Z-X),\\
    &= -X\had X.
\end{align*}
\end{proof}


\newpage
\section{A simple NLCS Hamiltonian}\label{app:NLCS}
The goal of this section is to demonstrate the existence of a simple family of NLCS Hamiltonians.

\begin{definition}\label{def:zero-Ham}
The \textbf{zero Hamiltonian}, $\Hzero$ is defined as
\begin{equation*}
    \Hzero \equiv \frac{1}{n}\sum_{i=1}^n \ketbra{1}_i\otimes \eye_{[n]\setminus\{i\}}.
\end{equation*}
Note that $\Hzero\ket{x}=\frac{\abs{x}}{n}\ket{x}$ for all $x\in\FF_2^n$. In particular, the unique ground state of $\Hzero$ is $\ketzero$ with energy $0$. For $n=1$ we have $\ham_0^{(1)} = \ketbra{1}$, so we can write the zero Hamiltonian on $n$ qubits as
\begin{equation*}
    \Hzero \equiv \frac{1}{n}\sum_{i=1}^n \ham_0^{(1)} \otimes \eye_{[n]\setminus\{i\}}.
\end{equation*}
\end{definition}
\begin{remark*}
    Define a set of stabilizer generators, $\mcS_n\equiv\{Z_1,\dots,Z_n\}$ where $Z_i$ is a Pauli $Z$ on qubit $i$ and identity elsewhere. The zero Hamiltonian is the CSS Hamiltonian associated with $\langle\mcS_n\rangle$, since $\ketbra{1}=\frac{\eye-Z}{2}$. The results of this section are a direct corollary of the results in Section \ref{sec:NLCS-CSS}.
\end{remark*}
Let $D\equiv\destab$. We define the $D$-rotated zero Hamiltonian as
\begin{equation*}
    \dHzero \equiv \frac{1}{n}\sum_{i=1}^n \Tilde{\ham}_0^{(1)} \otimes \eye_{[n]\setminus\{i\}},
\end{equation*}
where $\Tilde{\ham}_0^{(1)} = D^\dagger\ketbra{1}D$.
We will prove the $D$-rotated zero Hamiltonian is NLCS by demonstrating a simple lower bound on the energy of stabilizer states for each local term. Since the reduced state of every stabilizer state is a convex combination of stabilizer states by Claim \ref{clm:Clifford-decomp}, these ``local'' lower bounds imply a global lower bound for all stabilizer states. We have the following local energy bound.
Note that 

\begin{lemma}\label{lem:lb-zero-single-pure}
    If $\ket{\eta}$ is a single-qubit stabilizer state, then $\bra{\eta}{\Tilde{\ham}_0}^{(1)}\ket{\eta}\geq \sin^2(\frac{\pi}{8})$.
\end{lemma}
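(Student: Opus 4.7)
The plan is to recognize this as a direct instance of Claim \ref{clm:lb-Xk-Zk} specialized to $k=1$. Observe that $\ketbra{1} = \frac{\eye - Z}{2}$ is exactly the projector $\Pi_{\bar Z}$ in the case $k=1$ (where $\bar Z = Z$), so $\Tilde{\ham}_0^{(1)} = D^\dagger\ketbra{1}D = \Tilde{\Pi}_{\bar Z}$. Since $k=1$ is odd, Claim \ref{clm:lb-Xk-Zk} applies and immediately yields the bound $\bra{\eta}\Tilde{\ham}_0^{(1)}\ket{\eta} \geq c_1 = \sin^2(\pi/8)$ for every single-qubit stabilizer state $\ket{\eta}$.

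If instead I wanted a self-contained derivation that does not invoke the machinery of the main section, I would proceed directly. First I would use the identity from Claim \ref{lem:destabilize-XZ} (proved in Appendix \ref{app:lb-Xk-Zk}) specialized to $k=1$, giving $D^\dagger Z D = -X\had X$, and hence
\begin{equation*}
    \Tilde{\ham}_0^{(1)} = \frac{\eye + X\had X}{2}.
\end{equation*}
Setting $\ket{\zeta} \equiv X\ket{\eta}$, which is again a single-qubit stabilizer state since $X$ is Clifford, reduces the bound to showing $\bra{\zeta}\had\ket{\zeta} \geq -\tfrac{1}{\sqrt{2}}$, which follows from Lemma \ref{lem:H-up} with $k=1$.

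As a further sanity check, since there are only six single-qubit stabilizer states up to phase ($\ket{0},\ket{1},\ket{+},\ket{-},\ket{+i},\ket{-i}$), one can enumerate them and verify the bound is tight: the minimum value $\frac{1-1/\sqrt{2}}{2} = \sin^2(\pi/8)$ is attained on $\ket{0}$ and $\ket{-}$, while $\ket{\pm i}$ give energy exactly $1/2$ and $\ket{1},\ket{+}$ give the maximal value $\frac{1+1/\sqrt{2}}{2}$. There is no real obstacle here — the point of the lemma is simply to instantiate the general odd-weight local bound at $k=1$, which in turn is precisely why rotating the CSS Hamiltonian by $D$ destabilizes every single-qubit stabilizer state by the geometric amount $\sin^2(\pi/8)$.
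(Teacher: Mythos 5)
Your proposal is correct, and it actually contains the paper's own proof as your ``sanity check'': the paper proves Lemma \ref{lem:lb-zero-single-pure} exactly by enumerating the six single-qubit stabilizer states and computing $\abs{\bra{1}D\ket{\eta}}^2$ for each, obtaining $\sin^2(\pi/8)$ on $\ket{0}$ and $\ket{-}$, $\cos^2(\pi/8)$ on $\ket{1}$ and $\ket{+}$, and $1/2$ on $\ket{y\pm}$ --- precisely the values you list. Your primary route, instantiating Lemma \ref{clm:lb-Xk-Zk} at $k=1$ (equivalently, using $D^\dagger Z D = -X\had X$ and Lemma \ref{lem:H-up}), is also valid and involves no circularity, since the appendix's remark explicitly notes that its results are corollaries of Section \ref{sec:NLCS-CSS}. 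The only reason the paper does not take that route is that Appendix \ref{app:NLCS} is advertised as a self-contained proof that $\Tilde{\ham}_0$ is NLCS, so it deliberately avoids invoking the general odd-weight machinery; what the brute-force enumeration buys is independence from Fact \ref{fact:stab-geometry} and the case analysis of Lemma \ref{lem:H-up}, while your specialization buys brevity and makes clear that the $k=1$ case is just the simplest instance of the general phenomenon. Either argument is acceptable.
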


\begin{proof}
By definition $\Tilde{H}_0^{(1)}=D^\dagger \ketbra{1} D$, so $\bra{\eta}\Tilde{H}_0^{(1)}\ket{\eta} = \abs{\bra{1}D\ket{\eta}}^2$. As
\begin{equation*}
    D = \cos(\frac{\pi}{8})\eye-i\sin(\frac{\pi}{8})Y = \cos(\frac{\pi}{8})\eye+\sin(\frac{\pi}{8})XZ,
\end{equation*}
we have
\begin{align*}
    D = \begin{bmatrix}
    \cos(\frac{\pi}{8}) & -\sin(\frac{\pi}{8}) \\
    \sin(\frac{\pi}{8}) & \cos(\frac{\pi}{8})
    \end{bmatrix}.
\end{align*}
There are only six single-qubit stabilizer states to check: $\ket{0},\ket{1},\ket{+},\ket{-},\ket{y+},$ and $\ket{y-}$.
\begin{itemize}
    \item $\abs{\bra{1}D\ket{0}}^2=\sin^2(\frac{\pi}{8})$.
    \item $\abs{\bra{1}D\ket{1}}^2=\cos^2(\frac{\pi}{8})$.
    \item $\abs{\bra{1}D\ket{+}}^2=\frac{1}{2}(\sin(\frac{\pi}{8})+\cos(\frac{\pi}{8}))^2=\frac{1}{2}(1+\sin(\frac{\pi}{4}))=\cos^2(\frac{\pi}{8})$.
    \item $\abs{\bra{1}D\ket{-}}^2=\frac{1}{2}(\sin(\frac{\pi}{8})-\cos(\frac{\pi}{8}))^2=\frac{1}{2}(1-\sin(\frac{\pi}{4}))=\sin^2(\frac{\pi}{8})$.
    \item $\abs{\bra{1}D\ket{y+}}^2=\frac{1}{2}\abs{\sin(\frac{\pi}{8})+i\cos(\frac{\pi}{8})}^2=\frac{1}{2}$.
    \item $\abs{\bra{1}D\ket{y-}}^2=\frac{1}{2}\abs{\sin(\frac{\pi}{8})-i\cos(\frac{\pi}{8})}^2=\frac{1}{2}$.
\end{itemize}
Since $\sin^2(\frac{\pi}{8})$ is the smallest value, the result is proven.
\end{proof}

\begin{corollary}\label{cor:lb-zero-single-mixed}
    If $\eta$ is a single-qubit mixed stabilizer state, then $\Tr[\eta{\Tilde{\ham}_0}^{(1)}]\geq \sin^2(\frac{\pi}{8})$.
\end{corollary}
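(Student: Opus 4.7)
The plan is to reduce directly to the pure-state bound of Lemma \ref{lem:lb-zero-single-pure} via linearity of the trace, using the definition of a mixed stabilizer state given in Section \ref{sec:prelim}.

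First I would unpack the definition: since $\eta$ is a single-qubit (mixed) stabilizer state, by definition there exist a finite collection of pure single-qubit stabilizer states $\{\ket{\eta_j}\}$ and coefficients $p_j \geq 0$ with $\sum_j p_j = 1$ such that
\begin{equation*}
    \eta = \sum_j p_j \ketbra{\eta_j}.
\end{equation*}
(Note that although Claim \ref{clm:Clifford-decomp} guarantees this form whenever $\eta$ arises as a reduced state of a pure Clifford state on a larger system, here we only need the definitional form of a mixed stabilizer state.)

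Next I would apply linearity of the trace to obtain
\begin{equation*}
    \Tr\bigl[\eta\,\tilde{\ham}_0^{(1)}\bigr] = \sum_j p_j \bra{\eta_j}\tilde{\ham}_0^{(1)}\ket{\eta_j}.
\end{equation*}
Lemma \ref{lem:lb-zero-single-pure} bounds each summand by $\sin^2(\pi/8)$, and since $\sum_j p_j = 1$ the convex combination preserves the lower bound, yielding $\Tr[\eta\,\tilde{\ham}_0^{(1)}] \geq \sin^2(\pi/8)$.

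There is no real obstacle: the entire argument is one application of linearity plus the already-established pure-state case-check over the six single-qubit stabilizer states. The only subtlety worth flagging is just to use the correct definition of ``mixed stabilizer state'' from Section \ref{sec:prelim} — a convex combination of pure stabilizer states — rather than, say, any state with a nontrivial stabilizer group.
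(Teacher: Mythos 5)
Your proposal is correct and matches the paper's proof exactly: decompose $\eta$ as a convex combination of pure single-qubit stabilizer states by definition, then apply Lemma \ref{lem:lb-zero-single-pure} to each term via linearity of the trace. Your flag about using the definitional form of a mixed stabilizer state (rather than invoking Claim \ref{clm:Clifford-decomp}) is exactly right.
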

\begin{proof}
    By definition, $\eta = \sum_j p_j\ketbra{\varphi_j}$, where each $\ket{\varphi_j}$ is a pure stabilizer state on a single qubit. The lower bound follows by applying Lemma \ref{lem:lb-zero-single-pure} to each $\ket{\varphi_j}$.
\end{proof}

We now have the following global lower bound.
\begin{lemma}\label{lem:lb-zero-n-pure}
    If $\ket{\eta}$ is an $n$-qubit stabilizer state, then $\bra{\eta}\dHzero\ket{\eta}\geq \sin^2(\frac{\pi}{8})$.
\end{lemma}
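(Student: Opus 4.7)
The plan is to reduce the global energy bound for $n$-qubit stabilizer states to the single-qubit bound already established in Corollary \ref{cor:lb-zero-single-mixed}, exploiting the linearity of the Hamiltonian and the key structural fact from Claim \ref{clm:Clifford-decomp} that reduced states of stabilizer states are themselves stabilizer states.

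First I would expand the energy expectation using the definition of $\dHzero$:
\begin{equation*}
    \bra{\eta}\dHzero\ket{\eta} = \frac{1}{n}\sum_{i=1}^n \bra{\eta}\big(\Tilde{\ham}_0^{(1)}\otimes\eye_{[n]\setminus\{i\}}\big)\ket{\eta} = \frac{1}{n}\sum_{i=1}^n \Tr\!\big[\eta_i\,\Tilde{\ham}_0^{(1)}\big],
\end{equation*}
where $\eta_i \equiv \Tr_{-\{i\}}[\ketbra{\eta}]$ is the single-qubit reduced state of $\ket{\eta}$ on qubit $i$.

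Next, I would invoke Claim \ref{clm:Clifford-decomp}: since $\ket{\eta}$ is a (pure) stabilizer state on $n$ qubits, each reduced state $\eta_i$ is a mixed stabilizer state on a single qubit, i.e.\ a convex combination of pure single-qubit stabilizer states. Applying Corollary \ref{cor:lb-zero-single-mixed} then gives $\Tr[\eta_i\,\Tilde{\ham}_0^{(1)}]\geq \sin^2(\tfrac{\pi}{8})$ for every $i$. Averaging over $i\in[n]$ yields $\bra{\eta}\dHzero\ket{\eta}\geq \sin^2(\tfrac{\pi}{8})$, completing the proof.

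There is essentially no obstacle here, since all the hard work has been done elsewhere: Corollary \ref{cor:lb-zero-single-mixed} supplies the pointwise single-qubit bound, and Claim \ref{clm:Clifford-decomp} is precisely what lets us promote that local bound to every local term of an arbitrary $n$-qubit stabilizer state. The only subtlety worth flagging is that Claim \ref{clm:Clifford-decomp} must be applicable to the $1$-qubit reduction (which is the smallest nontrivial case), but its statement indeed covers any subset $A\subseteq [n]$, including singletons. This lemma is in fact the $k=1$, $\alpha=1$ special case of Theorem \ref{thm:CSS-NLCS} applied to $\mcS_n=\{Z_1,\dots,Z_n\}$, so the same argument template goes through verbatim.
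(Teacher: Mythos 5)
Your proposal is correct and follows essentially the same route as the paper's proof: expand $\bra{\eta}\dHzero\ket{\eta}$ as an average of $\Tr[\eta_i\,\Tilde{\ham}_0^{(1)}]$ over the single-qubit reduced states, use Claim \ref{clm:Clifford-decomp} to conclude each $\eta_i$ is a mixed stabilizer state, and apply Corollary \ref{cor:lb-zero-single-mixed} termwise. No gaps; the remark that this is the $k=1$ special case of Theorem \ref{thm:CSS-NLCS} matches the paper's own remark in Appendix \ref{app:NLCS}.
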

\begin{proof}
By definition, $\dHzero =  \frac{1}{n}\sum_{i=1}^n {{\Tilde{\ham}_0}^{(1)}}\mid_i\otimes \eye_{[n]\setminus\{i\}}$, so
\begin{equation*}
    \bra{\eta}\dHzero\ket{\eta} =  \frac{1}{n}\sum_{i=1}^n \Tr[{\eta_i{\Tilde{\ham}_0}^{(1)}}],
\end{equation*}
where $\eta_i\equiv \Tr_{-i}[\ketbra{\eta}]$ is the reduced state of $\ket{\eta}$ on qubit $i$. Since $\eta_i$ is the reduced density matrix of a Clifford state, by Claim \ref{clm:Clifford-decomp} it is also a stabilizer state. The bound follows by applying Corollary \ref{cor:lb-zero-single-mixed} to each term in the summation.

\end{proof}

\begin{proposition}\label{prop:zero-NLCS}
$\{\dHzero\}$ is a family of NLCS Hamiltonians. 
\end{proposition}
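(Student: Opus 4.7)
The plan is to lift the pure-state energy bound of Lemma \ref{lem:lb-zero-n-pure} to all stabilizer states (including mixed ones) via convexity, and then observe that the resulting constant lower bound is strictly greater than the ground-state energy of $\dHzero$, which will immediately yield the NLCS property.

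First, I would pin down the ground-state energy. Since the original zero Hamiltonian $\Hzero$ is frustration-free with $\Hzero\ketzero=0$, and $\dHzero = D^{\dagger\otimes n}\Hzero D^{\otimes n}$ differs from $\Hzero$ only by conjugation with the unitary $D^{\otimes n}$, the spectrum is unchanged. In particular, $\lambda_{\min}(\dHzero)=0$, achieved by the pure state $D^{\otimes n}\ketzero$. So an $\epsilon$-low-energy state of $\dHzero$ is simply a state with energy strictly less than $\epsilon$.

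Second, I would take an arbitrary (possibly mixed) stabilizer state $\psi$ on $n$ qubits. By the definition in Section \ref{sec:prelim}, $\psi=\sum_j p_j\ketbra{\varphi_j}$ for some convex combination of pure stabilizer states $\ket{\varphi_j}$. Linearity of the trace together with Lemma \ref{lem:lb-zero-n-pure} applied to each $\ket{\varphi_j}$ then gives
\begin{equation*}
    \Tr[\psi\,\dHzero] \;=\; \sum_j p_j \bra{\varphi_j}\dHzero\ket{\varphi_j} \;\geq\; \sum_j p_j \sin^2\!\Big(\tfrac{\pi}{8}\Big) \;=\; \sin^2\!\Big(\tfrac{\pi}{8}\Big),
\end{equation*}
uniformly in $n$. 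Setting $\epsilon_0 \equiv \sin^2(\pi/8) > 0$, no stabilizer state is an $\epsilon_0$-low-energy state of $\dHzero$, so $\{\dHzero\}$ is $\epsilon_0$-NLCS and hence NLCS.

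There is no real obstacle at this stage, since the heavy lifting was already done in Lemma \ref{lem:lb-zero-single-pure} (checking the six single-qubit stabilizer states) and in the Claim \ref{clm:Clifford-decomp} decomposition used inside Lemma \ref{lem:lb-zero-n-pure}. The only thing to be careful about is that the definition of NLCS in this paper permits mixed stabilizer states, which is exactly why the convex-combination step above is necessary; and that the ground-state energy is genuinely $0$ (rather than something positive that could shrink the gap), which follows automatically from the unitary invariance of the spectrum.
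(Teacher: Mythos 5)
Your proof is correct and follows essentially the same route as the paper: decompose the mixed stabilizer state as a convex combination of pure stabilizer states, apply Lemma \ref{lem:lb-zero-n-pure} to each, and conclude $\epsilon$-NLCS with $\epsilon=\sin^2(\pi/8)$. Your additional remark that $\lambda_{\min}(\dHzero)=0$ by unitary invariance is a point the paper leaves implicit (via frustration-freeness) but is the same argument in substance.
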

\begin{proof}
    By definition, $\psi = \sum_j p_j\ketbra{\varphi_j}$, where each $\ket{\varphi_j}$ is a pure stabilizer state on $n$ qubits. The lower bound follows by applying Lemma \ref{lem:lb-zero-n-pure} to each $\ket{\varphi_j}$. Thus, every $n$-qubit stabilizer state has energy at least $\sin^2(\frac{\pi}{8})$ with respect to $\dHzero$, which implies $\dHzero$ is $\epsilon$-NLCS with $\epsilon = \sin^2(\frac{\pi}{8})$.
\end{proof}

\subsection{Towards NLACS}\label{app:NLACS-zero}

There are several notions of how ``non-Clifford'' a state is, the number of $\T$ gates being a common one. The notion we consider here is the number of arbitrary Pauli-rotation gates, $e^{i\theta P}$ for $\theta\in[0,2\pi)$ and $P\in\mcP_n$, as it encapsulates the $\T$ gate count.\footnote{The $\T$ gate is equal to $\T=\cos\left(\frac{\pi}{8}\right)\eye + i\sin\left(\frac{\pi}{8}\right)Z = e^{i\frac{\pi}{8} Z}$.}

\begin{lemma}\label{lem:rotation-gate-states}
Let $C$ be a quantum circuit on $n$-qubits containing polynomially many Clifford gates and at most $t$ arbitrary Pauli-rotation gates, $e^{i\theta_j P'_j}$. There exist $t$ Pauli operators, $\{P_j\}\subset\mcP_n$ and a stabilizer state $\ket\varphi$ such that
\begin{equation}\label{eq:ACS}
    C\ket0\n = \prod_{j\in[t]}\Big[e^{i\theta_j P_j}\Big]\ket\varphi,
\end{equation}
where by convention $C\ket0\n = \ket\varphi$ if $t=0$.
\end{lemma}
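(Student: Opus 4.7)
The plan is to proceed by induction on the total number of gates in $C$, using the elementary commutation identity
\begin{equation*}
    U\,e^{i\theta P}\,U^{\dagger} \;=\; e^{i\theta\,(U P U^{\dagger})},
\end{equation*}
valid for any unitary $U$ and Hermitian $P$. When $U$ is Clifford and $P$ is a Pauli, $UPU^{\dagger}$ is again a Pauli (possibly with an overall sign, which can be absorbed into the rotation angle). This lets us ``push'' every Clifford in $C$ past every Pauli-rotation in $C$ while preserving the rotation structure.

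First I would handle the base case where $C$ contains no gates: then $C\ket{0}^{\otimes n}=\ket{0}^{\otimes n}$, which is itself a stabilizer state, and the right-hand side of Equation \eqref{eq:ACS} reduces to $\ket{\varphi}=\ket{0}^{\otimes n}$ with an empty product. For the inductive step, write $C = g_\ell \cdots g_2 g_1$ with $g_1$ applied first, and apply the inductive hypothesis to the sub-circuit $g_{\ell-1}\cdots g_1$, obtaining a decomposition
\begin{equation*}
    (g_{\ell-1}\cdots g_1)\ket{0}^{\otimes n} \;=\; \prod_{j\in [t']} e^{i\theta_j P_j}\,\ket{\varphi'}
\end{equation*}
with $t'\le t$ rotations and $\ket{\varphi'}$ a stabilizer state. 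Then I would split on the type of $g_\ell$.

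If $g_\ell$ is itself a Pauli-rotation $e^{i\theta_\ell P'_\ell}$, simply prepend it to the product; the number of rotations increases by one and is still at most $t$. If instead $g_\ell$ is a Clifford, commute it through the product from left to right: each factor $e^{i\theta_j P_j}$ is replaced by $e^{i\theta_j (g_\ell P_j g_\ell^{\dagger})}$ using the commutation identity, and since the Clifford group normalizes the Pauli group, the new exponent is again a Pauli operator (after absorbing any $\pm 1$ sign into $\theta_j$). The Clifford $g_\ell$ ends up acting on $\ket{\varphi'}$, and because stabilizer states are closed under Clifford action, $g_\ell\ket{\varphi'}$ is a new stabilizer state $\ket{\varphi}$. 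The rotation count is unchanged, so the decomposition of $C\ket{0}^{\otimes n}$ again has at most $t$ rotations.

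There is no real obstacle here; the argument is essentially a bookkeeping exercise built around the two core facts that Cliffords normalize the Paulis and that stabilizer states are closed under Clifford action. The only subtlety worth flagging explicitly is that Clifford conjugation of a Pauli $P\in\{I,X,Y,Z\}^{\otimes n}$ may introduce a sign $\pm P'$; since $e^{i\theta(\pm P')}=e^{\pm i\theta P'}$, this is harmless and keeps each factor a genuine unitary Pauli rotation of the form required by Equation \eqref{eq:ACS}.
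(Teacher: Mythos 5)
Your proof is correct and follows essentially the same route as the paper's: both arguments commute the Clifford gates past the Pauli-rotation gates using the fact that Cliffords normalize the Pauli group (absorbing the resulting $\pm$ sign into the rotation angle), leaving all rotations on the left and a Clifford circuit acting on $\ket{0}^{\otimes n}$, i.e.\ a stabilizer state. Your induction on the gate count is just a more formal bookkeeping of the paper's one-shot regrouping $C = C_t e^{i\theta_t P_t'}\cdots C_1 e^{i\theta_1 P_1'}C_0$.
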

\begin{proof}
By definition we can decompose $C$ as
\begin{equation}
    C = C_{t}e^{i\theta_{t}P'_{t}}C_{t-1}\dots e^{i\theta_{2}P'_{2}}C_1 e^{i\theta_{1}P'_{1}} C_0,
\end{equation}
where each $C_\ell$ is a Clifford circuit. 

For every $j\in[t]$ we have $e^{i\theta_j P'_j} = \cos(\theta_j)\eye+i\sin(\theta_j)P'_j$.
Since Clifford gates normalize the Pauli group, for every Clifford circuit, $C'$, and every Pauli operator, $P'\in\mcP_n$, there is another Pauli operator, $P''\in\mcP_n$, such that $C'(\cos\theta\eye+i\sin\theta P') = (\cos\theta\eye+i\sin\theta P'')C' $. Thus, we can move each Clifford circuit, $C_\ell$, past all of the Pauli-rotation gates by changing only the individual Pauli operators via the conjugation relations of $C_\ell$.

Ultimately, we can rewrite $C$ as
\begin{equation}
    C = e^{i\theta_{t}P_{t}}\dots e^{i\theta_{2}P_{2}}e^{i\theta_{1}P_{1}}C_{t}\dots C_1 C_0,
\end{equation}
for $t$ Pauli operators, $\{P_t\}$, as desired.

\end{proof}

Proposition \ref{prop:zero-NLCS} shows that the $D$-rotated zero Hamiltonian, $\Tilde{\ham}_0 = \frac{1}{n}\sum\big(D^\dagger\ketbra{1}D\big)_i$, is $\sin^2\left(\frac{\pi}{8}\right)$-NLCS. It is natural to question if $\Tilde{\ham}_0$ is also $\epsilon$-NLACS for some appropriate constant $\epsilon$. In this section we will prove an explicit lower-bound on all states prepared by Clifford gates + at most 1 Pauli-rotation gate:
\begin{equation}
    \bra{\psi}\dHzero\ket{\psi}\geq \left(1-\frac{1}{n}\right)\sin^2\left(\frac{\pi}{8}\right).
\end{equation}
In fact, there is numerical evidence suggesting the following lower bound for an arbitrary number of Pauli-rotation gates, though we have been unable to prove it analytically:
\begin{conjecture}\label{conj:lb-t-T-gates}
Let $\ket\psi$ be an $n$-qubit state prepared by a Clifford circuit plus at most $t$ Pauli-rotation gates. For the $D$-rotated zero-Hamiltonian, $\dHzero$, the energy of $\ket\psi$ is lower-bounded as
\begin{equation}
    \bra{\psi}\dHzero\ket{\psi}\geq \left(1-\frac{t}{n}\right)\sin^2\left(\frac{\pi}{8}\right).
\end{equation}
In particular, if there is a constant $\beta\in [0,1)$ such that $t\leq \beta n$ for all sufficiently large $n$, then the energy of $\ket\psi$ is lower-bounded by $\left(1-\beta \right)\sin^2\left(\frac{\pi}{8}\right)>0$, a constant.
\end{conjecture}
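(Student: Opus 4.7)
The plan is to reduce the conjecture to the following intermediate claim: for any state $\ket\psi$ prepared by a Clifford circuit plus at most $t$ Pauli rotations, at most $t$ of the single-qubit reduced density matrices $\{\psi_i\}_{i=1}^n$ fail to be (mixed) stabilizer states. Granting this, the conjecture would follow immediately from Corollary~\ref{cor:lb-zero-single-mixed}: writing $\bra\psi\dHzero\ket\psi = \frac{1}{n}\sum_i \Tr[\psi_i \Tilde{\ham}_0^{(1)}]$, each of the at least $n - t$ stabilizer marginals contributes at least $\sin^2(\pi/8)$ to the sum, and the remaining at most $t$ marginals contribute at least $0$ each (since $\Tilde{\ham}_0^{(1)}$ is a positive projector), giving $\bra\psi\dHzero\ket\psi \geq (1 - t/n)\sin^2(\pi/8)$.

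I would prove the intermediate claim by induction on $t$. The base case $t = 0$ is immediate from Claim~\ref{clm:Clifford-decomp}. By Lemma~\ref{lem:rotation-gate-states} we may write $\ket\psi = e^{i\theta P}\ket{\psi'}$ where $\ket{\psi'}$ is a Clifford plus $t$ Pauli-rotation state, so the inductive step reduces to showing that applying one Pauli rotation can promote at most one stabilizer marginal to a non-stabilizer one. When $\ket{\psi'} = \ket\varphi$ is itself a pure stabilizer state the argument can be carried out directly: using the conjugation identity $e^{-i\theta P}Q_i e^{i\theta P} = \cos(2\theta)Q_i + i\sin(2\theta)Q_i P$ for single-qubit Paulis $Q$ anticommuting with $P_i$, each Bloch-vector component of $\psi_i$ becomes a linear combination of stabilizer expectations $\bra\varphi R\ket\varphi \in \{-1,0,1\}$. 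A case analysis shows $\psi_i$ can fall outside the stabilizer octahedron only when the stabilizer group $G = \stab(\ket\varphi)$ contains a single-qubit Pauli $S_i$ on qubit $i$ whose axis differs from that of $P_i$, and simultaneously $P_{-i} \in \pm G$. A phase-consistency argument rules out this conjunction holding at two distinct qubits: if it did for $i_1 \neq i_2$, then $S_{i_1}$ and the product $P_{-i_1}P_{-i_2} = P_{i_1}\otimes P_{i_2}$ would both lie in $\pm G$, but $\{S_{i_1}, P_{i_1}\} = 0$ implies these two anticommute, contradicting that $G$ is abelian.

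The main obstacle is extending this analysis to the inductive step when $\ket{\psi'}$ is no longer a pure stabilizer state: the clean constraint $\bra{\psi'}R\ket{\psi'} \in \{-1, 0, 1\}$ breaks down and the combinatorial phase argument dissolves. My plan would be to strengthen the inductive hypothesis to carry more than just a count --- for instance, tracking the stabilizer-rank decomposition $\ket{\psi'} = \sum_{S \in \{0,1\}^t} c_S \ket{\varphi_S}$ implicit in Lemma~\ref{lem:rotation-gate-states} together with the union of ``affected'' qubits across the constituent stabilizer states $\ket{\varphi_S}$ --- and argue that each additional Pauli rotation either reuses existing structure or enlarges the affected set by exactly one. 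As a backup I would try a direct energy-difference approach: expanding
\begin{equation*}
    \ket\psi\bra\psi = \cos^2\theta\,\ket{\psi'}\bra{\psi'} + \sin^2\theta\, P\ket{\psi'}\bra{\psi'}P + i\sin\theta\cos\theta\,[P,\ket{\psi'}\bra{\psi'}],
\end{equation*}
the first two terms preserve the inductive lower bound (as $P\ket{\psi'}$ is again a Clifford plus $t$ Pauli-rotation state), so it remains to bound $|\Tr[\ket{\psi'}\bra{\psi'}[\dHzero, P]]| \leq \sin^2(\pi/8)/n$. This last bound is where I expect the essential analytical difficulty to lie, since a naive operator-norm estimate yields only $\mcO(\wt(P)/n)$; any successful proof must exploit a finer cancellation across the $n$ local terms of $\dHzero$ than norm arguments detect --- consistent with the authors having only numerical evidence for the conjecture.
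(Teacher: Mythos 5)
You are attempting to prove a statement that the paper itself leaves open: Conjecture \ref{conj:lb-t-T-gates} is explicitly a \emph{conjecture}, supported only by numerical evidence and by a proof of the $t=1$ case (Lemma \ref{lem:lb-1-T-gate}); the authors state they have been unable to prove it analytically. Your reduction is sound as far as it goes: if at most $t$ of the single-qubit marginals fail to be mixed stabilizer states, then Corollary \ref{cor:lb-zero-single-mixed} plus positivity of $\Tilde{\ham}_0^{(1)}$ does give the claimed bound, and your base-case analysis for a single rotation applied to a pure stabilizer state is essentially the argument the paper uses in Lemma \ref{lem:lb-1-T-gate} (via Lemma \ref{lem:single-R-reduced-state} and the anticommutation argument at the end of its proof). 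But the inductive step is a genuine gap, and you acknowledge as much. Neither of your two proposed routes closes it. The ``strengthened inductive hypothesis'' tracking the stabilizer-rank decomposition is a research program, not an argument: nothing is said about why a new rotation ``enlarges the affected set by exactly one'' when the $2^t$ branches $\ket{\varphi_S}$ have different stabilizer groups and interfere.

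The backup route is likely unworkable as stated. You would need $\abs{\Tr[\psi'[\dHzero,P]]}\leq \mcO(1/n)$, but already for $t=0$ the cross term of Lemma \ref{lem:single-R-reduced-state} can be nonzero at many qubits simultaneously (the anticommutation obstruction only forbids two qubits from both landing in the paper's Case \textbf{II.2}, not from both having nontrivial $G_{i,P}$), so the commutator trace is not generically $\mcO(1/n)$. In the paper's $t=1$ proof this is handled not by bounding the cross term in isolation but by a joint case analysis showing that wherever the cross term is large, the diagonal terms over-satisfy the bound --- note in particular that marginals in sub-cases \textbf{II.2.a} and \textbf{II.2.b} are genuinely non-stabilizer yet still have energy at least $\sin^2(\pi/8)$, which is also why the paper counts low-energy marginals rather than non-stabilizer marginals. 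Your intermediate claim is therefore at least as strong as the conjecture itself and is equally unproven. In short: the proposal correctly frames the problem and reproduces the known $t=1$ case, but does not prove the statement, which remains open.
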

By Lemma \ref{lem:rotation-gate-states}, the most general such state is a stabilizer state with $t$ Pauli-rotation gates applied to it and no intermediate circuits between them. The intuition behind Conjecture \ref{conj:lb-t-T-gates} is that the only way to reduce the energy of a stabilizer state is to ``undo'' one of the $D$ gates conjugating the Hamiltonian. For instance, to produce a state with sub-constant energy one could apply $n-o(n)$ $D$ gates to $\ket0\n$.

We note also that is in unclear what, if any, similar lower bound could be shown for an arbitrary $D$-rotated CSS Hamiltonian (as considered in Theorem \ref{thm:CSS-NLCS}). We leave this as an open problem, as well. For now, we consider the case of $t=1$ for the $D$-rotated zero Hamiltonian.

First, recall the following definition.
\stabsubgroups*

The following lemma gives an explicit description of the local density matrices of states with at most 1 Pauli-rotation gate.
\begin{lemma}\label{lem:single-R-reduced-state}
Let $\ket\psi=e^{i\theta P}\ket\varphi$ for $P\in\mcP_n$, $\theta\in[0,2\pi)$, and let $\ket\varphi$ be a stabilizer state with $G\equiv\stab(\ket\varphi)$. For $A\subset[n]$ we can write $\psi_A \equiv \Tr_{-A}[\ketbra{\psi}]$ as 
\begin{equation}
    \psi_A = \frac{1}{2^\abs{A}}\sum_{\hat g\in G_A} \bigg(\cos^2(\theta)\hat g +\sin^2(\theta)P_A \hat g P_A\bigg)+\frac{1}{2^\abs{A}}\sum_{g'\in G_{A,P}} i\sin(\theta)\cos(\theta) [P_A, g'].
\end{equation}
 The left part of this expression can be thought of as the stabilizer part of $\psi_A$, as it is the convex combination of two stabilizer states, and the right hand part can be thought of as the non-stabilizer part, as it equals zero if $P\in G$ or if $P_A=\eye$.
\end{lemma}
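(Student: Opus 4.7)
The plan is to compute $\psi_A = \Tr_{-A}[e^{i\theta P}\ketbra\varphi e^{-i\theta P}]$ directly, by expanding the Pauli rotation, substituting the stabilizer-group representation of $\ketbra\varphi$, and then analyzing when the partial trace of each resulting Pauli monomial is nonzero. Since $P\in\mcP_n$ is a Hermitian Pauli and hence satisfies $P^2=\eye$, Euler's formula gives $e^{i\theta P}=\cos\theta\,\eye+i\sin\theta\,P$. Expanding the sandwich and combining the two cross terms yields
\begin{equation*}
    e^{i\theta P}\ketbra\varphi e^{-i\theta P}
    = \cos^2\theta\,\ketbra\varphi
    + \sin^2\theta\,P\ketbra\varphi P
    + i\sin\theta\cos\theta\,[P,\ketbra\varphi].
\end{equation*}
By Fact \ref{fact:groundstateproj}, $\ketbra\varphi = \frac{1}{2^n}\sum_{g\in G} g$, so the task reduces to evaluating $\Tr_{-A}[g]$, $\Tr_{-A}[PgP]$, and $\Tr_{-A}[[P,g]]$ for each $g\in G$.

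The key observation is that, for any Pauli monomial $Q$ (possibly with a sign), $\Tr_{-A}[Q]$ is nonzero iff every single-qubit factor of $Q$ on $-A$ is proportional to $\eye$, in which case $\Tr_{-A}[Q] = 2^{n-\abs{A}}\,Q_A$. I would apply this to the three operators in turn. First, $\Tr_{-A}[g]$ is nonzero precisely when $\N(g)\subseteq A$, contributing $\frac{\cos^2\theta}{2^{\abs{A}}}\sum_{\hat g\in G_A}\hat g$. Second, writing $PgP = (P_A g_A P_A)\otimes(P_{-A} g_{-A} P_{-A})$ and using $P_{-A}^2 = \eye_{-A}$, the $-A$-trace is again nonzero exactly when $g_{-A}=\eye_{-A}$, which produces $\frac{\sin^2\theta}{2^{\abs{A}}}\sum_{\hat g\in G_A}P_A\hat g P_A$. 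Third, for the commutator, $\Tr_{-A}[Pg]$ is nonzero iff $P_{-A}g_{-A}\propto\eye_{-A}$, i.e.\ iff $g_{-A}$ equals $P_{-A}$ on every qubit in $-A$ -- the defining condition of $G_{A,P}$. For such $g$, $\Tr_{-A}[Pg]=2^{n-\abs{A}}P_A g_A$ and $\Tr_{-A}[gP]=2^{n-\abs{A}}g_A P_A$, so the difference gives $2^{n-\abs{A}}[P_A,g_A]$, and summing over the relevant elements yields $\frac{i\sin\theta\cos\theta}{2^{\abs{A}}}\sum_{g'\in G_{A,P}}[P_A,g']$. Assembling the three contributions matches the stated identity.

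The main obstacle, such as it is, is bookkeeping of signs: elements $g\in G$ may carry an overall $\pm 1$ that is inherited by $g_A$ upon restriction, and anticommutations of single-qubit factors with those of $P$ can similarly shift signs into $g_A$ during the formation of $P_A g_A P_A$ or $P_A g_A$. The definitions of $G_A$ and $G_{A,P}$ in Definition~\ref{def:stab-subgroups} absorb these signs into the restricted objects, so once the sign conventions are consistently applied, the above Pauli-trace analysis yields the lemma without further work.
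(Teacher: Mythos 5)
Your proposal is correct and follows essentially the same route as the paper's proof: expand $e^{i\theta P}$ via Euler's formula, substitute $\ketbra\varphi=\frac{1}{2^n}\sum_{g\in G}g$, and evaluate the partial trace of each Pauli monomial term by term, with the nonvanishing conditions $\N(g)\subseteq A$ and $g_{-A}=P_{-A}$ picking out $G_A$ and $G_{A,P}$ respectively. No substantive differences.
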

\begin{proof}
Since $\ket\varphi$ is a stabilizer state there is a stabilizer group $G$ with $\abs{G}=2^n$ such that $\ketbra{\varphi}=\frac{1}{2^n}\sum_{g\in G} g$. Using the exponential of Pauli matrices we have
\begin{align}
    \psi &= \frac{1}{2^n}\sum_{g\in G}(\cos(\theta)\eye +i\sin(\theta)P)g(\cos(\theta)\eye -i\sin(\theta)P), \\
    &= \frac{1}{2^n}\sum_{g\in G} \cos^2(\theta)g +\sin^2(\theta)PgP +i\sin(\theta)\cos(\theta)Pg -i\sin(\theta)\cos(\theta)gP, \\
    &= \frac{1}{2^n}\sum_{g\in G} \bigg(\cos^2(\theta)g +\sin^2(\theta)PgP\bigg) + \frac{1}{2^n}\sum_{g\in G} \bigg(i\sin(\theta)\cos(\theta) (Pg-gP)\bigg). \label{eq:psi-single-R}
\end{align}
Consider tracing out all qubits outside of the set $A$. The only Pauli group element with nonzero trace is $\eye$, which has trace 2. For the left term in Equation \eqref{eq:psi-single-R}, we have
\begin{align}
    &\frac{1}{2^n}\sum_{g\in G}\bigg( \cos^2(\theta)\Tr_{-A}[g] +\sin^2(\theta)\Tr_{-A}[PgP] \bigg)\\
    &= \frac{1}{2^n}\sum_{g\in G}\bigg( \cos^2(\theta)g_A\prod_{j\in[n]\setminus A}\Tr[g_j] +\sin^2(\theta)P_A g_A P_A \prod_{j\in[n]\setminus A}\Tr[P_j g_j P_j]\bigg), \\
    &= \frac{1}{2^n}\sum_{g\in G} \bigg(\cos^2(\theta)g_A +\sin^2(\theta)P_A g_A P_A\bigg) \bigg(\prod_{j\in[n]\setminus A}\Tr[g_j]\bigg), \\
    &= \frac{1}{2^\abs{A}}\sum_{\hat g\in G_A} \bigg(\cos^2(\theta)\hat g +\sin^2(\theta)P_A \hat g P_A\bigg),
\end{align}
where the last line follows since only those $g\in G$ which are identity outside of $A$ will have nonzero trace, and the product of the individual traces when non-zero is $2^{n-\abs{A}}$.

Similarly, for the right term in Equation \eqref{eq:psi-single-R} we have
\begin{align}
    &\frac{1}{2^n}\sum_{g\in G} \bigg(i\sin(\theta)\cos(\theta) \Tr_{-A}[Pg-gP]\bigg), \\
    &= \frac{1}{2^n}\sum_{g\in G} \bigg(i\sin(\theta)\cos(\theta) [P_A, g_A] \bigg)\bigg(\prod_{j\in[n]\setminus A}\Tr[P_j g_j]\bigg), \\
    &= \frac{1}{2^\abs{A}}\sum_{g'\in G_{A,P}} i\sin(\theta)\cos(\theta) [P_A, g'],
\end{align}
where the last line follows again since the trace will be non-zero only if $g_j=P_j$ for all $j\notin A$. 
\end{proof}

\begin{lemma}\label{lem:lb-1-T-gate}
\begin{equation}
    \bra{\psi}\dHzero\ket{\psi}\geq \left(1-\frac{1}{n}\right)\sin^2\left(\frac{\pi}{8}\right).
\end{equation}
\end{lemma}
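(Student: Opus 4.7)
The plan is to put $\ket\psi$ in the canonical form given by Lemma~\ref{lem:rotation-gate-states} and then exploit that $\ket\psi$ lies in a $2$-dimensional subspace spanned by two stabilizer states. Specifically, write $\ket\psi = e^{i\theta P}\ket\varphi$ for a Hermitian Pauli $P\in\mcP_n$, real $\theta$, and pure stabilizer state $\ket\varphi$ with $G=\stab(\ket\varphi)$. Since $G$ is a maximal commutative subgroup of $\mcP_n$ up to phases, either (Case A) $\pm P\in G$, in which case $\ket\psi = e^{\pm i\theta}\ket\varphi$ is a phase multiple of a stabilizer state and Proposition~\ref{prop:zero-NLCS} immediately gives $\bra\psi\dHzero\ket\psi\geq\sin^2(\pi/8)\geq (1-1/n)\sin^2(\pi/8)$; or (Case B) $P$ anti-commutes with some element of $G$, which forces $P\ket\varphi\perp\ket\varphi$.

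In Case B, expanding $\ket\psi = \cos\theta\ket\varphi+i\sin\theta\,P\ket\varphi$ in the $2$-dimensional subspace $V=\mathrm{span}\{\ket\varphi, P\ket\varphi\}$ gives
\begin{equation*}
    \bra\psi\dHzero\ket\psi = \cos^2\theta\,E_\varphi + \sin^2\theta\,E_{P\varphi} - 2\sin\theta\cos\theta\,\mathrm{Im}(\zeta),
\end{equation*}
where $E_\varphi \equiv \bra\varphi\dHzero\ket\varphi$, $E_{P\varphi}\equiv\bra{P\varphi}\dHzero\ket{P\varphi}$ (both $\geq \sin^2(\pi/8)$ by Proposition~\ref{prop:zero-NLCS}), and $\zeta\equiv\bra\varphi\dHzero P\ket\varphi$. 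Minimizing this expression over $\theta$ is standard, and the target bound $\bra\psi\dHzero\ket\psi\geq L:=(1-1/n)\sin^2(\pi/8)$ is equivalent to the PSD-type inequality
\begin{equation*}
    (E_\varphi - L)(E_{P\varphi} - L) \geq \mathrm{Im}(\zeta)^2.
\end{equation*}

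To bound the right-hand side, decompose $\mathrm{Im}(\zeta)=\tfrac{1}{n}\sum_{i=1}^n c_i$ with $c_i \equiv \mathrm{Im}(\bra\varphi O_i P\ket\varphi)$ and expand using $O_i = \tfrac{1}{2}I + \tfrac{1}{2\sqrt{2}}(X_i - Z_i)$. A case analysis on $P_i\in\{I,X,Y,Z\}$ shows $c_i = 0$ whenever $P_i = I$, and in the remaining cases each $c_i$ is a simple linear combination of expectation values $\bra\varphi Q\ket\varphi$ for specific Paulis $Q$ obtained by altering $P$ at qubit $i$. Since $\ket\varphi$ is stabilizer, $\bra\varphi Q\ket\varphi\in\{0,\pm 1,\pm i\}$; moreover, because $G$ contains no anti-Hermitian elements, certain would-be configurations (e.g.\ both ``$Y\to Z$'' and ``$Y\to X$'' neighbors of $P$ simultaneously lying in $\pm G$) are ruled out, forcing $|c_i|\leq\tfrac{1}{2\sqrt{2}}$ with the nonzero $c_i$ indexed by qubits for which the Pauli obtained by flipping $P_i$ lies (up to phase) in $G$.

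The main obstacle is promoting these local constraints to the global inequality $\mathrm{Im}(\zeta)^2\leq (E_\varphi - L)(E_{P\varphi} - L)$: per-qubit, one only has the PSD bound $c_i^2\leq a_i b_i$ (where $a_i=\Tr[\varphi_i O_i]$, $b_i=\Tr[(P\varphi)_i O_i]$), and this is strictly weaker than what is needed. The plan is to observe that whenever $c_i\neq 0$, the neighboring Pauli in $G$ forces one of $\bra\varphi X_i\ket\varphi$, $\bra\varphi Z_i\ket\varphi$, $\bra{P\varphi}X_i\ket{P\varphi}$, $\bra{P\varphi}Z_i\ket{P\varphi}$ to be nonzero, which in turn pushes $a_i$ or $b_i$ strictly above $\sin^2(\pi/8)$. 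Combining these local ``gains'' via a Cauchy-Schwarz argument applied to $\sum_i |c_i| \leq \sum_i\sqrt{(a_i-L)(b_i-L) + L(a_i+b_i-L-\tfrac{1}{n}\sin^2(\pi/8)\cdot\text{correction})}$ and leveraging the combinatorial fact that any two qubits $i_1\neq i_2$ with $c_{i_1},c_{i_2}\neq 0$ produce a weight-$2$ element of $G$ supported at $\{i_1,i_2\}$ should yield the desired bound. Equality in the tight case ($P=Y_i$ acting on $\ket\varphi$ whose reduced state at qubit $i$ is a $Z$-eigenstate, $\theta=\pi/8$) indicates that this argument must be sharp; this is the technical step I expect to require the most care.
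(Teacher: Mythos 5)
Your setup is sound up to and including the reduction to the inequality $(E_\varphi - L)(E_{P\varphi}-L)\geq \mathrm{Im}(\zeta)^2$: the dichotomy between $\pm P\in G$ and $P$ anticommuting with some stabilizer is correct (every element of $G$ is a Hermitian Pauli, so a Hermitian $P$ commuting with all of $G$ lies in $\pm G$), and minimizing $\cos^2\theta\,E_\varphi+\sin^2\theta\,E_{P\varphi}-\sin(2\theta)\,\mathrm{Im}(\zeta)$ over $\theta$ does reduce the claim to that product inequality. But the proof stops there. The step you flag as requiring the most care is the entire content of the lemma, and as written it does not go through: per qubit you only have $c_i^2\leq a_ib_i$ (which you correctly note is weaker than needed), the displayed Cauchy--Schwarz expression contains an undefined ``correction'' term, and the combinatorial observation that two qubits with $c_{i_1},c_{i_2}\neq 0$ force a weight-$2$ element of $G$ supported on $\{i_1,i_2\}$ --- while true --- is never converted into a bound on $\sum_i\abs{c_i}$ or on the number of nonzero $c_i$. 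Summing square roots of per-qubit quantities in the direction you indicate generically loses a factor depending on how many $c_i$ are nonzero, and nothing in the proposal controls that. So this is a plausible plan with the decisive step missing, not a proof.

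The paper avoids the global cross-term bookkeeping entirely. It writes the energy as $\frac1n\sum_i\Tr[\psi_i\Tilde{\ham}_0^{(1)}]$ for the reduced states $\psi_i$ of $\ket\psi$ itself, computes each $\psi_i$ explicitly via Lemma \ref{lem:single-R-reduced-state} (a stabilizer part plus a commutator term supported on $G_{i,P}$), and checks by direct single-qubit computation that $\Tr[\psi_i\Tilde{\ham}_0^{(1)}]<\sin^2(\pi/8)$ forces three conditions: $P_i=Y$, $Z_i\in G$, and the existence of $g\in G$ with $g_i=Z$ and $g_{-i}=P_{-i}$. A two-line anticommutation argument (if qubits $i\neq j$ both satisfied these conditions, then $g$ and $Z_j$ would anticommute, contradicting that $G$ is abelian) shows at most one qubit can be bad, and even that term is $\geq 0$, which gives $(1-\frac1n)\sin^2(\pi/8)$ immediately. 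If you want to salvage your route, the analogous move is to prove that at most one qubit can simultaneously have $c_i$ large and $a_i,b_i$ small, using exactly this kind of anticommutation constraint on $G$ --- but that must be carried out explicitly rather than asserted.
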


\begin{proof}
    By Lemma \ref{lem:rotation-gate-states} there is a Pauli operator, $P$, and an $n$-qubit Clifford state $\ket\varphi$ such that $\ket\psi = e^{i\theta P}\ket\varphi$. Let $G\equiv\stab(\ket\varphi)$.

    Recall that by definition $\dHzero =  \frac{1}{n}\sum_{i=1}^n {{\Tilde{\ham}_0}^{(1)}}\mid_i\otimes \eye_{[n]\setminus\{i\}}$, so
\begin{align}
    \bra{\psi}\dHzero\ket{\psi} &=  \frac{1}{n}\sum_{i=1}^n \Tr[{\psi_i{\Tilde{\ham}_0}^{(1)}}],
\end{align}
where $\psi_i\equiv \Tr_{-i}[\ketbra{\psi}]$ is the reduced state of $\ket{\psi}$ on qubit $i$. We will show that at most one of the terms in this summation can be 0, and that the remainder of the terms are lower-bounded by $\sin^2\left(\frac{\pi}{8}\right)$.

By Lemma \ref{lem:single-R-reduced-state} we can write the reduced state as
\begin{equation}
    \psi_i = \frac{1}{2}\sum_{\hat g\in G_i} \bigg(\cos^2(\theta)\hat g +\sin^2(\theta)P_i \hat g P_i\bigg)+\frac{1}{2}\sum_{g'\in G_{i,P}} i\sin(\theta)\cos(\theta) [P_i, g'].
\end{equation}

We proceed in cases:
\begin{enumerate}[label=\textbf{\Roman*.}]
    \item If $P\in G$, $P_i=\eye$, $G_{i,P}=\emptyset$, or $G_{i,P}=\{\eye\}$ then $\psi_i$ is a stabilizer state, so $ \Tr[{\psi_i{\Tilde{\ham}_0}^{(1)}}]\geq\sin^2\left(\frac{\pi}{8}\right) $.
    \item Suppose the four conditions from Case \textbf{I.} do not hold. It must be that $G_{i,P}=\{\eye,P^\star\}$ for some $P^\star\in\mcP_1\setminus\{\eye,P_i\}$; $P^\star$ cannot be $P_i$ as this would imply $P\in G$. Note that $G_{i,P}$ cannot be any larger as this would contradict the fact $G$ is a stabilizer group. We now consider cases for $G_i$.
    \begin{enumerate}[label*=\textbf{\arabic*.}]
        \item If $G_i=\{\eye\}$, then $\psi_i$ can be written as
        \begin{align}\label{eq:single-R-reduced-state}
             \psi_i &= \frac{1}{2}\eye+\frac{1}{2} i\sin(\theta)\cos(\theta) [P_i, P^\star], \\
             & =  \frac{1}{2}\eye+\frac{1}{4}\sin(2\theta)\sigma,
        \end{align}
        since $P_i\neq P^\star$ and $2i[P_i,P^\star]=\sigma$ for some non-identity Pauli. The bound
        $ \Tr[{\psi_i{\Tilde{\ham}_0}^{(1)}}]\geq\sin^2\left(\frac{\pi}{8}\right) $ holds by direct computation over $\sigma\in\mcP\setminus\{\pm\eye\}$.
        \item If $G_i$ is non-trivial then $G_i=\{\eye,P^\star\}$ since it must commute with the $g\in G$ which satisfies $g_i=P^\star$ and $g_{-i}=P_{-i}$ (which exists since we are in Case \textbf{II.}) Since $P^\star\notin\{\eye,P_i\}$ we can write $\psi_i$ as
        \begin{align}
            \psi_i &= \frac{1}{2}\eye + \frac{1}{2} \big(\cos^2(\theta) -\sin^2(\theta)\big)P^\star +\frac{1}{2} i\sin(\theta)\cos(\theta) [P_i, P^\star], \\
            & = \frac{1}{2}\eye + \frac{1}{2} \cos(2\theta) P^\star +\frac{1}{4} \sin(2\theta)i[P_i, P^\star].
        \end{align}
        By direct computation we have the following:
        \begin{enumerate}[label*=\textbf{\alph*.}]
            \item If $P_i \neq Y$ then $ \Tr[{\psi_i{\Tilde{\ham}_0}^{(1)}}]\geq\sin^2\left(\frac{\pi}{8}\right) $ regardless of $\theta$.
            \item If $P_i = Y$ and $P^\star\neq Z$ then $ \Tr[{\psi_i{\Tilde{\ham}_0}^{(1)}}]\geq\sin^2\left(\frac{\pi}{8}\right) $ regardless of $\theta$.
            \item If $P_i = Y$ and $P^\star= Z$ then $\Tr[{\psi_i{\Tilde{\ham}_0}^{(1)}}]\geq 0$ with possible equality.
        \end{enumerate}

    \end{enumerate}
\end{enumerate}
To recap the cases, $\psi_i$ can have energy less than $\sin^2\left(\frac{\pi}{8}\right)$ only if
(1) $P_i = Y$, (2) $Z_i\in G$, and (3) there is a $g\in G$ such that $g_i=Z$ and $g_{-i}=P_{-i}$, i.e. $g$ and $P$ agree on every qubit except $i$.

We must show that at most one qubit can satisfy all three of these condition for a given $P\in\mcP_n$ and stabilizer group $G$. Suppose there are two such qubits, $i$ and $j$, which satisfy (1) $P_i=P_j=Y$, (2) $Z_i,Z_j\in G$, and (3) there exist $g,h\in G$ such that $g_i=h_j=Z$, $g_{-i}=P_{-i}$, and $h_{-j}=P_{-j}$. By condition (3) $g_i=Z$ and $g_j=Y$ and by condition (2) $Z_j\in G$, but this implies that $gZ_j = -Z_j g$, which contradicts the fact that $G$ is abelian. Thus, at most a single qubit can satisfy the conditions required for the reduced state $\psi_i$ to have energy less than $\sin^2\left(\frac{\pi}{8}\right)$, which implies the desired lower bound.

\end{proof}

\end{document}